\documentclass[a4paper,UKenglish,cleveref, autoref, thm-restate]{lipics-v2021}

\usepackage{stmaryrd}
\usepackage{centernot}
\usepackage{booktabs}
\usepackage{appendix}
\usepackage{hyperref}
\usepackage{amsmath}
\usepackage{amssymb}
\usepackage{amsthm}
\usepackage{thmtools}
\usepackage{mathtools}
\usepackage{subcaption}
\usepackage{tikz}
\usetikzlibrary{arrows,automata,shapes,decorations,calc,patterns,positioning}

\newif\iffull
\newcommand{\forfull}[1]{\iffull#1\fi}
\newcommand{\forcamera}[1]{\iffull\else#1\fi}
\newcommand{\altfull}[2]{\iffull#1\else #2\fi}
\fulltrue

\newcommand{\states}{\ensuremath{\mathcal{S}}}
\newcommand{\initstate}{\ensuremath{s_{\mathit{init}}}}
\newcommand{\actionset}{\ensuremath{\mathit{Act}}}
\newcommand{\transrel}{\ensuremath{\mathit{Trans}}}
\newcommand{\env}{\ensuremath{\mathit{e}}}
\newcommand{\varset}{\ensuremath{\mathit{Var}}}
\newcommand{\source}[1]{\ensuremath{\mathit{src}(\mathit{#1})}}
\newcommand{\target}[1]{\ensuremath{\mathit{trgt}(\mathit{#1})}}
\newcommand{\action}[1]{\ensuremath{\mathit{act}(\mathit{#1})}}
\newcommand{\co}{\ensuremath{\cdot}}
\newcommand{\tp}{\ensuremath{\mathit{tt}}}
\newcommand{\fp}{\ensuremath{\mathit{ff}}}
\newcommand{\diam}[1]{\ensuremath{\langle \mathit{#1} \rangle}}
\newcommand{\boxm}[1]{\ensuremath{[ \mathit{#1} ]}}
\newcommand{\lfp}[2]{\ensuremath{\mu \mathit{#1}.\mathit{#2}}}

\newcommand{\lfpb}[2]{\ensuremath{\mu \mathit{#1}.(\mathit{#2})}}

\newcommand{\update}[3]{\ensuremath{#1 [ #2 := #3]}}
\newcommand{\imps}{\ensuremath{\Rightarrow}}
\newcommand{\emptyseq}{\ensuremath{\varepsilon}}
\newcommand{\clos}[1]{\ensuremath{\mathit{#1}^\star}}
\newcommand{\actname}[1]{\ensuremath{\mathit{#1}}}
\newcommand{\allact}{\ensuremath{\actionset}}
\newcommand{\noact}{\ensuremath{\emptyset}}
\newcommand{\comp}[1]{\ensuremath{\overline{\mathit{#1}}}}
\newcommand{\block}{\ensuremath{\mathcal{B}}}
\newcommand{\nonblock}{\ensuremath{\comp{\block}}}
\newcommand{\concsym}{\smile^{\hspace{-.5ex}\raisebox{-.2ex}{\tiny$\bullet$}}}
\newcommand{\nconcsym}{{\centernot\smile}^{\hspace{-.5ex}\raisebox{-.4ex}{\tiny$\bullet$}}}
\newcommand{\conc}{\ensuremath{\mathbin{\concsym}}}
\newcommand{\nconc}{\ensuremath{\mathbin{\nconcsym}}}
\newcommand{\pattern}[1]{\textit{#1}}
\newcommand{\dpre}{\ensuremath{\rho}}
\newcommand{\den}{\ensuremath{\alpha_{\mathit{e}}}}
\newcommand{\ddis}{\ensuremath{\alpha_{\mathit{f}}}}
\newcommand{\dpreb}{\ensuremath{\dpre_{\mathit{b}}}}
\newcommand{\dpres}{\ensuremath{\dpre_{\mathit{s}}}}
\newcommand{\formonsym}{\ensuremath{\phi_{\mathit{on}}}}
\newcommand{\formoffsym}{\ensuremath{\phi_{\mathit{of}}}}
\newcommand{\actelimsym}{\ensuremath{\alpha_{\mathit{el}}}}
\newcommand{\formon}[1]{\ensuremath{\formonsym(#1)}}
\newcommand{\formoff}[1]{\ensuremath{\formoffsym(#1)}}
\newcommand{\actelim}[1]{\ensuremath{\actelimsym(#1)}}
\newcommand{\formdissym}{\ensuremath{\phi_{\mathit{of}}}}
\newcommand{\formdis}[1]{\ensuremath{\formdissym(#1)}}
\newcommand{\elim}[1]{\ensuremath{\# #1}}

\newcommand{\VBar}[2]{($ (0,#1) + #2 $) -- ($(0,-#1)+ #2 $)}
\newcommand{\Above}[2]{($ (0,#1) + #2 $)}
\newcommand{\Below}[2]{($ (0,-#1) + #2 $)}

\usepackage{etoolbox}
\makeatletter
\patchcmd{\hyper@makecurrent}{%
    \ifx\Hy@param\Hy@chapterstring
        \let\Hy@param\Hy@chapapp
    \fi
}{%
    \iftoggle{inappendix}{
        \@checkappendixparam{chapter}%
        \@checkappendixparam{section}%
        \@checkappendixparam{subsection}%
        \@checkappendixparam{subsubsection}%
        \@checkappendixparam{paragraph}%
        \@checkappendixparam{subparagraph}%
    }{}%
}{}{\errmessage{failed to patch}}

\newcommand*{\@checkappendixparam}[1]{%
    \def\@checkappendixparamtmp{#1}%
    \ifx\Hy@param\@checkappendixparamtmp
        \let\Hy@param\Hy@appendixstring
    \fi
}
\makeatletter

\newtoggle{inappendix}
\togglefalse{inappendix}

\apptocmd{\appendix}{\toggletrue{inappendix}}{}{\errmessage{failed to patch}}
\apptocmd{\subappendices}{\toggletrue{inappendix}}{}{\errmessage{failed to patch}}

\AtBeginDocument{%
  }


\forfull{
\pdfoutput=1
\hideLIPIcs
}

\bibliographystyle{plainurl}

\title{Progress, Justness and Fairness in Modal \texorpdfstring{$\mu$}{mu}-Calculus Formulae} 


\author{Myrthe S. C. Spronck}{Eindhoven University of Technology, The Netherlands}{m.s.c.spronck@tue.nl}{https://orcid.org/0000-0003-2909-7515}{}
\author{Bas Luttik}{Eindhoven University of Technology, The Netherlands}{s.p.luttik@tue.nl}{https://orcid.org/0000-0001-6710-8436}{}
\author{Tim A. C. Willemse}{Eindhoven University of Technology, The Netherlands}{t.a.c.willemse@tue.nl}{https://orcid.org/0000-0003-3049-7962}{}

\authorrunning{M.\,S.\,C. Spronck, B. Luttik and T.\,A.\,C.\ Willemse} 

\Copyright{Myrthe S.C. Spronck, Bas Luttik and Tim A.C. Willemse} 

\ccsdesc[500]{Theory of computation~Modal and temporal logics}

\keywords{Modal \texorpdfstring{$\mu$}{mu}-calculus, Property specification, Completeness criteria, Progress, Justness, Fairness, Liveness properties} 


\forcamera{
\relatedversion{Arxiv version tba, we have to consider the possibility this may take up two lines. } 
\relatedversiondetails{Full Version}{URL to related version} 
}



\acknowledgements{We thank an anonymous reviewer for the observation that weak and strong hyperfairness must be distinguished.}

\nolinenumbers 
\EventEditors{Rupak Majumdar and Alexandra Silva}
\EventNoEds{2}
\EventLongTitle{35th International Conference on Concurrency Theory (CONCUR 2024)}
\EventShortTitle{CONCUR 2024}
\EventAcronym{CONCUR}
\EventYear{2024}
\EventDate{September 9--13, 2024}
\EventLocation{Calgary, Canada}
\EventLogo{}
\SeriesVolume{311}
\ArticleNo{25}

\begin{document}
\maketitle
\def\subsectionautorefname{Section}
\def\equationautorefname{Formula}

\begin{abstract}
When verifying liveness properties on a transition system, it is often necessary to discard spurious violating paths by making assumptions on which paths represent realistic executions.
Capturing that some property holds under such an assumption in a logical formula is challenging and error-prone, particularly in the modal \texorpdfstring{$\mu$}{mu}-calculus.
In this paper, we present template formulae in the modal \texorpdfstring{$\mu$}{mu}-calculus that can be instantiated to a broad range of liveness properties. We consider the following assumptions: progress, justness, weak fairness, strong fairness, and hyperfairness, each with respect to actions.
The correctness of these formulae has been proven.
\end{abstract}

\section{Introduction}

Formal verification through model checking requires a formalisation of the properties of the modelled system as formulae in some logic, such as LTL \cite{pnueli1977temporal}, CTL \cite{emerson1982using} or the modal $\mu$-calculus \cite{kozen1983results}.
In this paper, we focus on the modal $\mu$-calculus,
a highly expressive logic used in established model checkers such as mCLR2 \cite{mCRL2toolset} and CADP \cite{garavel2013cadp}.

A frequently encountered problem when checking liveness properties is that spurious violations are found, such as paths on which some components never make progress.
Often, such paths do not represent realistic executions of the system. 
It is then a challenge to restrict verification to those paths that do represent realistic system executions.
For this, we use completeness criteria \cite{glabbeek2020reactive,glabbeek2023modelling}: predicates on paths that say which paths are to be regarded as realistic runs of the system. These runs are called complete runs.
Examples of completeness criteria are progress, justness and fairness.

It turns out that writing a modal $\mu$-calculus formula for a property being satisfied under a completeness criterion is non-trivial.
Since the $\mu$-calculus is a branching-time logic, we cannot separately formalise when a path is complete and when it satisfies the property, and then combine the two formalisations with an implication.
Instead, a more intricate integration of both aspects of a path is needed. Our aim is to achieve such an integration for a broad spectrum of liveness properties and establish the correctness of the resulting formulae. To this end, we shall consider a template property that can be instantiated to a plethora of liveness properties and, in particular, covers all liveness property patterns of \cite{dwyer1999patterns}. Then, we present modal $\mu$-calculus formulae integrating the completeness criteria of progress, justness, weak fairness, strong fairness, and hyperfairness with this template property.

As discussed in \cite{glabbeek2015ccs}, for the formulation of realistic completeness criteria it is sometimes necessary to give special treatment to a set of blocking actions, i.e., actions that require cooperation of the environment in which the modelled system operates. Our template formulae are therefore parameterised with a set of blocking actions. We shall see that, given a set of blocking actions, there are two different interpretations of hyperfairness; we call these weak and strong hyperfairness.

Regarding our presented formulae, the progress formula is similar to those commonly used for liveness properties even when completeness is not explicitly considered.
Our formulae for justness, weak fairness and weak hyperfairness only subtly differ from each other. We characterise the similarities these three share and give a generic formula that can be adapted to represent all completeness criteria that meet these conditions.
Lastly, we observe that strong fairness and strong hyperfairness do not meet these conditions. We give alternative formulae that are significantly more complex. Whether more efficient formulae for these completeness criteria exist remains an open problem.

Modal $\mu$-calculus formulae are often hard to interpret. Accordingly, it is not trivial to see that our formulae indeed express the integration of liveness properties with completeness criteria. \altfull{We have therefore included elaborate correctness proofs in the appendices.}{We therefore include elaborate correctness proofs in the full version of this paper.}

Our work is essentially a generalisation along two dimensions (viz., the completeness criterion and the liveness property) of the works of \cite{remenska2016bringing} and \cite{bouwman2020off,stomp1989mu}.
In \cite{remenska2016bringing}, the tool PASS is presented for automatically translating common property patterns into modal $\mu$-calculus formulae.
Some of those patterns integrate an assumption that excludes paths deemed unrealistic, but since the exact assumption is not stated separately, we cannot make a formal comparison with our approach. 
In \cite{bouwman2020off}, a formula for justness is presented, covering one of the properties we cover. This formula forms the basis for our justness, weak fairness and weak hyperfairness formulae.
Our formulae for strong fairness and strong hyperfairness are in part inspired by the formula for termination under strong fairness presented in \cite{stomp1989mu}.

The organisation of this paper is as follows. 
In \autoref{sec:prelim} we recap the relevant definitions on labelled transition systems, as well as the syntax and semantics of the modal $\mu$-calculus.
In \autoref{sec:motivation}, we motive our work with an example, and in \autoref{sec:completeness} we give the completeness criteria we cover in this paper.
In \autoref{sec:dwyer}, we formally identify the class of liveness properties we study and relate it to a popular class of properties. 
Our template formulae are presented in \autoref{sec:formulae}, combining the completeness criteria from \autoref{sec:completeness} with the property template from \autoref{sec:dwyer}.
We give a small application example in \autoref{app:example} and discuss the scope of our work in \autoref{subsec:whyliveness}.
Finally, we give our conclusions in \autoref{sec:conclusion}.

\section{Preliminaries}\label{sec:prelim}

We represent models as labelled transition systems (LTSs).
In this section, we briefly introduce the relevant definitions on LTSs, as well as the modal $\mu$-calculus.

\subsection{Labelled Transition Systems}\label{subsec:LTS}

\begin{definition}
    An LTS is a tuple $M = (\states, \initstate, \actionset, \transrel)$ where
    \begin{itemize}
        \item $\states$ is a set of states,
        \item $\initstate \in \states$ is the initial state,
        \item $\actionset$ is a set of action labels, also referred to as the alphabet of the LTS, and
        \item $\transrel \subseteq \states \times \actionset \times \states$ is a transition relation.
    \end{itemize}
\end{definition}
In this paper, we only consider finite LTSs, such as the kind used in finite-state model checking. 
In particular, our formulae are proven correct under the assumption that $\actionset$ is finite.
We write $s \xrightarrow{a} s'$ as shorthand for $(s, a, s') \in \transrel$, and for a given transition $t = (s, a, s')$ we write $\source{t} = s$, $\action{t} = a$ and $\target{t} = s'$.

For the definitions below, we fix an LTS $M = (\states, \initstate, \actionset, \transrel)$.

\begin{definition}
    A \emph{path} is an $($alternating$)$ sequence $\pi = s_0 t_1 s_1 t_2 \ldots$ of states $s_0, s_1, \ldots \in \states$ and transitions $t_1, t_2, \ldots \in \transrel$. A path must start with a state, and must be either infinite, or end in a state. In the latter case, the end of the path is referred to as the \emph{final state}. For all $i \geq 0$, $t_{i+1}$ must satisfy $\source{t_{i+1}} = s_i$ and $\target{t_{i+1}} = s_{i+1}$.
\end{definition}
We sometimes refer to transitions on a path as steps.
We say an action occurs on a path if a transition labelled with that action is on the path. 
We call a path on which no action in some set $\alpha$ occurs an \emph{$\alpha$-free} path.
One path can be appended to another: let $\pi' = s_0' t_1' s_1' \ldots t_n' s_n'$ and $\pi'' = s_0'' t_1'' s_1'' \ldots$, where $\pi'$ must be finite and $\pi''$ may be finite or infinite. Then the path $\pi$ defined as $\pi''$ appended to $\pi'$ is written as $\pi = \pi'\co \pi'' = s_0' t_1' s_1' \ldots t_n' s_n' t_1'' s_1'' \ldots$. 
This is only allowed when $s_n' = s_0''$.

\begin{definition}
We say that:
    \begin{itemize}
    \item A transition $t \in \transrel$ is \emph{enabled} in a state $s \in \states$ if, and only if, $\source{t} = s$. 
    \item An action $a \in \actionset$ is \emph{enabled} in a state $s \in \states$ if, and only if, there exists a transition $t \in \transrel$ with $\action{t} = a$ that is enabled in $s$.
    \item An action $a \in \actionset$ is \emph{perpetually enabled} on a path $\pi$ if $a$ is enabled in every state of $\pi$.
    \item An action $a \in \actionset$ is \emph{relentlessly enabled} on a path $\pi$ if every suffix of $\pi$ contains a state in which $a$ is enabled.
    \item A state without enabled actions is called a \emph{deadlock state}.
    \end{itemize}
\end{definition}
Every action that is perpetually enabled on a path is also relentlessly enabled on that path.

\subsection{Modal \texorpdfstring{$\mu$}{mu}-Calculus}\label{subsec:mucalc}

The modal $\mu$-calculus is given in \cite{kozen1983results}. Our presentation of the logic is based on \cite{bradfield2001modal,bradfield200712,bradfield2018mu,mCRL2language}. 

The syntax of the modal $\mu$-calculus is described by the following grammar, in which $a$ ranges over the set of actions $\actionset$, and $X$ ranges over a set of formal variables $\varset$.
$$\phi, \psi ::= \fp \mid X \mid \neg\phi \mid \phi \lor \psi \mid \diam{a}\phi \mid \lfp{X}{\phi}$$
Here $\fp$ is false; $\neg$ represents negation; $\lor$ is disjunction; $\diam{~}$ is the diamond operator; and $\mu$ is the least fixpoint
operator.
We say that $\lfp{X}{\phi}$ binds $X$ in $\phi$. Variables that are unbound in a formula are free, and a formula without free variables is closed.

A modal $\mu$-calculus formula $\phi$ must both adhere to this grammar and be \emph{syntactically monotonic}, meaning that for every occurrence of $\mu X.\psi$ in $\phi$, every free occurrence of $X$ in $\psi$ must always be preceded by an even number of negations.

We give the semantics of a modal $\mu$-calculus formula $\phi$ with respect to an arbitrary LTS $M = (\states, \initstate, \actionset, \transrel )$ and environment $\env : \varset \to 2^{\states}$.
\begin{align*}
    &\llbracket \fp \rrbracket_{\env}^{M} = \emptyset &
    &\llbracket \phi \lor \psi \rrbracket_{\env}^{M} = \llbracket \phi \rrbracket_{\env}^{M} \cup \llbracket \psi \rrbracket_{\env}^{M} &\\
    &\llbracket X \rrbracket_{\env}^{M} = \env(X) &
    &\llbracket \diam{a}\phi \rrbracket_{\env}^{M} = \left\{ s \in \states \mid \exists_{s' \in \states}. s \xrightarrow{a} s' \land s' \in \llbracket \phi \rrbracket_{\env}^{M}\right\} &\\
    &\llbracket \neg \phi \rrbracket_{\env}^{M} = \states \setminus \llbracket \phi \rrbracket_{\env}^{M} &
    &\llbracket \lfp{X}{\phi} \rrbracket_{\env}^{M} = \bigcap\left\{\states' \subseteq \states \mid \states' \supseteq \llbracket \phi \rrbracket_{\update{\env}{X}{\states'}}^{M}\right\}&
\end{align*}
In contexts where the model is fixed, we drop the $M$ from $\llbracket \phi \rrbracket_{\env}^{M}$. Additionally, we drop $\env$ when the environment does not affect the semantics of the formula, e.g. with closed formulae.

We use conjunction, $\land$, and implication, $\imps$, as the usual abbreviations.
We also add several abbreviations: $\tp = \neg\fp$ for true; $\boxm{a}\phi = \neg\diam{a}\neg\phi$ for the box operator; and $\nu X.\phi = \neg\mu X.(\neg\update{\phi}{X}{\neg X})$ for the greatest fixpoint.

To express formulae more compactly, we extend our syntax to allow regular expressions over finite sets of actions to be used in the box and diamond operators.
Since we limit this to finite sets of actions, the syntactical extension does not increase the expressivity of the logic, it merely simplifies the presentation.
This is a common extension of the $\mu$-calculus syntax, for instance shown in \cite{mCRL2language}, based on the operators defined for PDL \cite{fisher1979pdl}.
We overload the symbol for a single action to also represent the singleton set containing that action.
We use union, intersection, set difference, and set complement to describe sets of actions as usual. 
Regular expressions over sets of actions, henceforth referred to as \emph{regular formulae}, are defined by the following grammar:
$$R, Q ::= \emptyseq \mid \alpha \mid R \co Q \mid R + Q \mid \clos{R}$$
The empty sequence is represented by $\emptyseq$, and $\alpha$ ranges over sets of actions. The symbol $\co$ represents concatenation, $+$ the union of formulae, and $\clos{}$ is closure under repetition.

We define the meaning of the diamond operator over the new regular formulae as abbreviations of standard modal $\mu$-calculus formulae:
\begin{align*}
    \diam{\emptyseq}\phi &= \phi & \diam{\alpha}\phi &= \bigvee_{a \in \alpha}\diam{a}\phi & \diam{R \co Q}\phi &= \diam{R}\diam{Q}\phi\\
    \diam{R + Q}\phi &= \diam{R}\phi \lor \diam{Q}\phi & \diam{\clos{R}}\phi &=\lfpb{X}{\diam{R}X \lor \phi} & &
\end{align*}
The box operator is defined dually.
We say a path $\pi$ \emph{matches} a regular formula $R$ if the sequence of actions on $\pi$ is in the language of $R$.

\section{Motivation}\label{sec:motivation}

When analysing algorithms and systems, there are many different properties which may need to be checked.
For instance, when model checking mutual exclusion algorithms we want to check linear properties such as mutual exclusion and starvation freedom, but also branching properties such as invariant reachability of the critical section.
The modal $\mu$-calculus, which subsumes even CTL$^{\star}$, is able to express all these properties and more, and is therefore used in toolsets such as mCLR2 \cite{mCRL2toolset} and CADP \cite{garavel2013cadp}.

An issue that is frequently encountered when checking liveness properties in particular, is that the model admits executions that violate the property but do not represent realistic executions of the real system.
For example, models of algorithms that contain a busy waiting loop usually admit executions where processes do nothing except wait. 
Infinite loops can also be introduced by abstractions of reality, such as modelling a loop to represent an event that occurs an arbitrary, but finite, number of times.
Counterexamples that are due to such modelling artefacts obscure whether the property is satisfied on all realistic executions.
The problem we address in this paper is how to avoid such counterexamples and check properties only on realistic executions.
We illustrate the problem with an example, which we also employ as a running example throughout this paper.

\begin{example}\label{ex:running}
    Consider the coffee machine modelled in \autoref{fig:runex}.
    When a user places an $\actname{order}$ for one or more cups of coffee, they are required to scan their payment $\actname{card}$. 
    If the user prefers using coinage, they switch the machine to its alternate mode ($\actname{to\_cash}$), and then pay in $\actname{cash}$. 
    In the alternate mode, the machine can be switched back using $\actname{to\_card}$.
    After payment, the machine will $\actname{brew}$ the cup(s) of coffee.
    This is modelled as a non-deterministic choice between a looping and a final $\actname{brew}$ action, since at least one cup was ordered.
    Finally, the coffee is $\actname{deliver}$ed and the machine awaits the next order.

    We consider three example properties.
    \begin{enumerate}
        \item \emph{Single order}: whenever an $\actname{order}$ is made, there may not be a second $\actname{order}$ until a $\actname{deliver}$ has taken place, $\boxm{\clos{\allact} \co \actname{order} \co \clos{\comp{\actname{deliver}}} \co \actname{order}}\fp$.
        \item \emph{Inevitable delivery}: whenever an $\actname{order}$ is made, there will inevitably be an occurrence of $\actname{deliver}$, $\boxm{\clos{\allact} \co \actname{order}}\mu X.(\diam{\allact}\tp \land \boxm{\comp{\actname{deliver}}}X)$.
        \item \emph{Possible delivery}: it is invariantly possible to eventually execute the $\actname{deliver}$ action, $\boxm{\clos{\allact}}\diam{\clos{\allact} \co \actname{deliver}}\tp$.
    \end{enumerate}
    
    The described problem occurs with \emph{inevitable delivery}: $s_0 t_1 s_1 t_4 (s_3 t_6)^{\omega}$ is a violating path, on which infinitely many cups are part of the same order.
    Similarly, $s_0 t_1 (s_1 t_2 s_2 t_3)^{\omega}$ violates the property because the user never decides on a payment method. 
    The first counterexample represents an impossible scenario, and the second gives information on problematic user behaviour but tells us little about the machine itself.
\end{example}
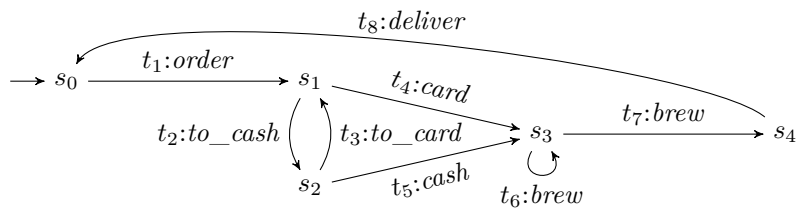
\begin{figure}
    \centering
 \begin{tikzpicture}[->,>=stealth',auto,node distance=10pt and 75pt,initial text=]
  \tikzstyle{every state}=[draw=black,rounded
  rectangle,text=black,inner sep=1mm,minimum size=.5cm]
  \node [initial left] (s0) {$s_0$};
  \node [right=of s0] (s1) {$s_1$};
  \node [below=of s1] (dummy1) {};
  \node [below=of dummy1] (s2) {$s_2$};
  \node [right=of dummy1] (s3) {$s_3$};
  \node [right=of s3] (s4) {$s_4$};
  
  \path[->]
    (s0) edge node[midway,above] {$t_1$:$\actname{order}$} (s1)
    (s1) edge[bend right] node[midway,anchor=east] {$t_2$:$\actname{to\_cash}$} (s2)
    (s2) edge[bend right] node[midway,anchor=west] {$t_3$:$\actname{to\_card}$} (s1)
    (s1) edge node[sloped] {$t_4$:$\actname{card}$} (s3)
    (s2) edge node[swap,sloped] {$t_5$:$\actname{cash}$} (s3)
    (s3) edge[in=300, out=240, loop, looseness=4.8] node[below] {$t_6$:$\actname{brew}$} (s3)
    (s3) edge node[midway,above] {$t_7$:$\actname{brew}$} (s4)
    (s4) edge[out=140, in=60,looseness=0.32] node[above] {$t_8$:$\actname{deliver}$} (s0)
    ;
\end{tikzpicture}
    \caption{The LTS for the running example.}
    \label{fig:runex}
\end{figure}

The kind of spurious counterexamples discussed in the example above primarily occur when checking liveness properties. We therefore focus on liveness properties, such as \emph{inevitable delivery}, in this paper. We will briefly discuss safety properties in \autoref{subsec:whyliveness}. 

There are ad-hoc solutions to exclude unrealistic counterexamples, e.g.\ altering the model to remove the unrealistic executions, or tailoring the formula to exclude specific problematic counterexamples \cite{groote2021tutorial}. Such ad-hoc solutions are undesirable because they clutter the model or the formula, and are therefore error-prone. We aim for a more generic solution, of which the correctness can be established once and for all. Such a generic solution requires, on the one hand, a general method to distinguish between realistic and unrealistic executions, and, on the other hand, a general class of liveness properties.

A general method to distinguish between realistic and unrealistic executions is provided by \emph{completeness criteria} \cite{glabbeek2020reactive,glabbeek2023modelling}, i.e., predicates on paths that label some as complete and all others as incomplete.
If a property is satisfied on all complete paths, it is satisfied under the given completeness criterion.
Completeness criteria give us a model-independent way to determine which paths are unrealistic, and therefore a generic solution to the stated problem.
Depending on the property and the model, we may prefer a different completeness criterion.
We therefore consider several criteria instead of fixing one specific criterion.
These completeness criteria are discussed in \autoref{sec:completeness}.

To find a general class of liveness properties, we take the \emph{property specification patterns} (PSP) of~\cite{dwyer1999patterns} as a starting point.
Since the modal $\mu$-calculus as presented in \autoref{subsec:mucalc} supports references to action occurrences but not state information, we specifically interpret these patterns on action occurrences. 
Our first contribution, in \autoref{sec:dwyer}, will be to characterise a class of liveness properties that subsumes all
liveness properties expressible in PSP.
Our second and main contribution is then presented in \autoref{sec:formulae}, where we combine the identified completeness criteria with our class of liveness properties, yielding template formulae for each combination.

\section{Completeness Criteria}\label{sec:completeness}

It is often assumed, sometimes implicitly, that as long as a system is capable of executing actions, it will continue to do so \cite{glabbeek2019progress}.
One could consider this the ``default'' completeness criterion, also known as \emph{progress} \cite{glabbeek2020reactive}; it says that only paths that are infinite or end in a deadlock state model complete runs and are hence complete paths.
We first present a modified version of the progress assumption that allows some actions to be blocked by the environment. 
We then define the other completeness criteria considered in this paper.
As already remarked in the previous section, the modal $\mu$-calculus is most suited to reasoning about action occurrences. Hence, we focus on completeness criteria
defined on action labels. For more general definitions on sets of transitions, see \cite{glabbeek2019progress}.

\subsection{Progress with Blocking Actions}
In \cite{glabbeek2015ccs}, it is argued that it is useful to consider some actions of an LTS as blocking.
A blocking action is an action that depends on participation by the environment of the modelled system. Consequently, even when such an action is enabled in a state because the system is willing to perform it, it may not be possible for the action to occur because the environment is uncooperative. 
In this paper, we refer to the set of blocking actions as $\block \subseteq \actionset$, and the set of non-blocking actions as $\nonblock = \actionset\setminus\block$. Which actions are in $\block$ is a modelling choice.

The default progress assumption can be adapted to account for blocking actions \cite{glabbeek2019justness,glabbeek2019progress}.
\begin{definition}
    A state $s \in \states$ is a \emph{$\block$-locked state} if, and only if, all actions enabled in $s$ are in $\block$.
    A path $\pi$ is \emph{$\block$-progressing} if, and only if, it is infinite or ends in a $\block$-locked state. 
\end{definition}

We refer to the assumption that only $\block$-progressing paths represent complete executions as $\block$-progress.
The ``default'' completeness criterion is equivalent to $\emptyset$-progress.

\begin{example}\label{ex:blockingacts}
    Consider \autoref{fig:runex}. Here, $\actname{order}$ is an environment action, since it involves the user. If we do not assume that there will always be a next user, we should add $\actname{order}$ to $\block$. 
    In some cases, we may want to consider the possibility that the machine is broken and not capable of producing coffee. In those cases, we should add $\actname{brew}$ to $\block$. 
    Our choice of $\block$ affects which paths are progressing: $s_0 t_1 s_1 t_ 4 s_3$ is not $\emptyset$-progressing, but it is $\{\actname{brew}\}$-progressing.
\end{example}

All completeness criteria we discuss in this paper are parameterised with a set of blocking actions.
The justness and fairness assumptions discussed in the remainder of this section label paths as incomplete if certain actions do not occur. Since it can never be assumed that the environment supports the occurrence of blocking actions, we do not want justness and fairness to label paths as incomplete due to the non-occurrence of blocking actions.

For readability the prefix $\block$- will sometimes be dropped from the names of the completeness criteria and their acronyms.
From this point, we will always discuss completeness criteria with respect to a set of blocking actions. 

\subsection{Justness}

Justness \cite{glabbeek2019justness,glabbeek2019progress} is a natural extension of progress to exclude infinite paths instead of finite paths.
The idea is that in addition to the system as a whole progressing, individual components in that system should also be able to make progress unless they are prevented from doing so by other components.
It is a weaker, and hence frequently more justifiable, assumption than the fairness assumptions we cover in the next section.
In its original presentation, justness is defined with respect to sets of transitions. Which components contribute to a transition and how they contribute to them determines which transitions interfere with each other.
We here consider justness defined with respect to actions instead, based on \cite{bouwman2020off}.
We do not go into how it is determined which actions interfere with each other here. For discussions on this topic and when the two definitions coincide, see \cite{bouwman2023supporting,bouwman2020off,glabbeek2019justness}.

Intuitively, justness of actions says that if an action $a$ is enabled at some point of a path, then eventually some action that can interfere with the occurrence of $a$ must occur in that path. That action may be $a$ itself.
In order to formalise the concept of interference, we require the concept of a \emph{concurrency relation on actions}, $\conc$. 
\begin{definition}\label{def:concur-act} 
    Relation $\conc \subseteq \actionset \times \actionset$ is a \emph{concurrency relation on actions} 
    if, and only if:
    \begin{enumerate}
        \item $\conc$ is irreflexive, and
        \item for all $a \in \actionset$, if $\pi$ is a path from a state $s$ in which $a$ is enabled to a state $s' \in \states$ such that $a \conc b$ for all actions $b$ occurring in $\pi$, then $a$ is enabled in $s'$. 
    \end{enumerate}
\end{definition}
We write $\nconc$ for the complement of $\conc$. Note that $\conc$ may be asymmetric.

Read $a \conc b$ as ``$a$ is concurrent with $b$'', and $a \nconc b$ as ``$b$ interferes with $a$'' or ``$b$ eliminates $a$''.
A labelled transition system can be extended with a concurrency relation on actions, which produces a \emph{labelled transition system with concurrency} (LTSC).

We here present the definition for justness of actions with blocking actions.
\begin{restatable}{definition}{defja}\label{def:ja} 
    A path $\pi$ satisfies \emph{$\block$-justness of actions} \emph{($\block$-JA)} if, and only if, for each action $a \in \nonblock$ that is enabled in some state $s$ in $\pi$, 
    an action $a' \in \actionset$ occurs in the suffix $\pi'$ of $\pi$ starting in $s$ such that $a \nconc a'$.
\end{restatable}

\begin{example}\label{ex:ja}
    Consider \autoref{fig:runex}, specifically the path $s_0 t_1 (s_1 t_2 s_2 t_3)^{\omega}$.
    On this path the user keeps switching the mode of the machine, without paying.
    To see if this path satisfies $\emptyset$-JA, we need a concrete $\conc$.
    Consider a $\conc$ such that $\actname{card} \nconc \actname{to\_cash}$, $\actname{cash} \nconc \actname{to\_card}$, and $a \nconc a$ for all action labels $a$.
    These are all required for $\conc$ to be a valid concurrency relation. This is because by \autoref{def:concur-act}, $\conc$ must be irreflexive, and when an action is enabled it must remain enabled on any path on which no interfering action occurs. 
    Since $\actname{card}$ is enabled in $s_1$ but not $s_2$, it must be the case that $\actname{card} \nconc \actname{to\_cash}$. Similarly, we must have  $\actname{cash} \nconc \actname{to\_card}$.
    With such a concurrency relation, the path satisfies $\emptyset$-JA since every action that is enabled is subsequently eliminated.
    In this LTS, there is no valid choice of $\conc$ that makes this path violate $\emptyset$-JA.
    However, if we modify \autoref{fig:runex} by replacing both $\actname{card}$ and $\actname{cash}$ with the action $\actname{pay}$, then \autoref{def:concur-act} does not enforce that $\actname{to\_cash}$ and $\actname{to\_card}$ interfere with the actions on $t_4$ and $t_5$, since $\actname{pay}$ is enabled in both $s_1$ and $s_2$. 
    We can choose whether $\actname{pay} \conc \actname{to\_cash}$ and $\actname{pay} \conc \actname{to\_card}$. If $\actname{pay}$ is concurrent with both, then the path $s_0 t_1 (s_1 t_2 s_2 t_3)^{\omega}$ violates $\emptyset$-JA. If either interferes with $\actname{pay}$, then the path satisfies $\emptyset$-JA.    
\end{example}

\subsection{Fairness}

There are situations where we want to exclude a larger set of infinite paths than those excluded by justness, or where we do not have a concurrency relation.
For this, we can use what are called \emph{fairness assumptions} in the literature.
These are a class of predicates on paths that distinguish between \emph{fair} and \emph{unfair} infinite paths.
It is assumed that only the fair paths are complete.
For an overview of many common fairness assumptions, see \cite{glabbeek2019progress}.
In this paper, we consider weak fairness of actions, strong fairness of actions, and (weak and strong) hyperfairness of actions. 
Each of the assumptions we discuss has the general shape, adapted from \cite{apt1988appraising}, ``if it is sufficiently often possible for an action to occur, it will occur sufficiently often''.
What it means for an action to be ``sufficiently often possible'' and ``occur sufficiently often'' depends on the exact assumption.

We first discuss \emph{weak fairness of actions}, which says that actions that are always enabled must eventually occur.
It is one of the most commonly discussed fairness assumptions.
We define weak fairness of actions formally, with respect to a set of blocking actions $\block$.
\begin{restatable}{definition}{defwfa}\label{def:wfa} 
    A path $\pi$ satisfies \emph{$\block$-weak fairness of actions} \emph{($\block$-WFA)} if, and only if, for every suffix $\pi'$ of $\pi$, every action $a \in \nonblock$ that is perpetually enabled in $\pi'$ occurs in $\pi'$.
\end{restatable}

\begin{example}\label{ex:wfa}
    Consider again \autoref{fig:runex}, with $\actname{card}$ and $\actname{cash}$ both replaced by $\actname{pay}$. 
    Then the path $s_0 t_1 (s_1 t_2 s_2 t_3)^{\omega}$ violates $\emptyset$-WFA, since $\actname{pay}$ is perpetually enabled in a suffix of this path without occurring. 
    If there are two separate actions for paying with cash or card, the path satisfies $\emptyset$-WFA because no actions are perpetually enabled in any suffix.
\end{example}

Next, \emph{strong fairness of actions} says that on a path, all actions that are enabled infinitely often, must occur infinitely often.
Formally, we define strong fairness of actions as:
\begin{restatable}{definition}{defsfa}\label{def:sfa} 
    A path $\pi$ satisfies \emph{$\block$-strong fairness of actions} \emph{($\block$-SFA)} if, and only if, for every suffix $\pi'$ of $\pi$, every action $a \in \nonblock$ that is relentlessly enabled in $\pi'$ occurs in $\pi'$.
\end{restatable}

Strong fairness is a stronger assumption than weak fairness, since it classifies more paths as incomplete. 
This follows from perpetual enabledness implying relentless enabledness.

\begin{example}\label{ex:sfa}
    The path $s_0 t_1 (s_1 t_2 s_2 t_3)^{\omega}$ in \autoref{fig:runex} satisfies $\emptyset$-WFA since there are no perpetually enabled actions in any suffix of the path.
    However, $\actname{cash}$ is relentlessly enabled in suffixes of this path, and yet does not occur. Hence, this path violates $\emptyset$-SFA.
\end{example}

Finally, we discuss \emph{hyperfairness of actions}. 
Informally, it says that on all fair paths, every action that can always become enabled must occur infinitely often.
The idea is that if there is always a reachable future where the action occurs, then it is merely unlucky if the action does not occur infinitely often.
The concept of hyperfairness is introduced and named in \cite{attie1989fairness}.
For our presentation of hyperfairness, we use the generalisation from \cite{lamport2000fairness}. 
We first formalise what it means that an action ``can become'' enabled, by defining \emph{reachability}.
\begin{definition}
    We say that:
    \begin{itemize}
        \item A state $s \in \states$ is \emph{$\block$-reachable} from some state $s' \in \states$ if, and only if, there exists a $\block$-free path starting in $s'$ that ends in $s$.
        \item An action $a \in \actionset$ is \emph{$\block$-reachable} from some state $s \in \states$ if, and only if, there exists a state $s' \in \states$ that is $\block$-reachable from $s$ and in which $a$ is enabled.
        \item A state $s \in \states$ or action $a \in \actionset$ is \emph{perpetually $\block$-reachable} on a path $\pi$ if, and only if, it is $\block$-reachable from every state of $\pi$.
        \item A state $s \in \states$ or action $a \in \actionset$ is \emph{relentlessly $\block$-reachable} on a path $\pi$ if, and only if, every suffix of $\pi$ contains a state from which it is $\block$-reachable.
    \end{itemize}
\end{definition}

From the intuitive description of hyperfairness, it is clear it is a variant of weak or strong fairness with reachability instead of enabledness, giving us two possible definitions of hyperfairness. We name the two interpretations weak hyperfairness and strong hyperfairness respectively. Both interpretations of hyperfairness are reasonable, and in fact when not considering blocking actions, they coincide \cite{lamport2000fairness}.
However, this is not the case when blocking actions are included in the definitions. We therefore consider both variants.

\begin{restatable}{definition}{defwhfa}\label{def:whfa}
    A path $\pi$ satisfies \emph{weak $\block$-hyperfairness of actions} \emph{($\block$-WHFA)} if, and only if, for every suffix $\pi'$ of $\pi$, every action $a \in \nonblock$ that is perpetually $\block$-reachable in $\pi'$ occurs in~$\pi'$.
\end{restatable}
\begin{restatable}{definition}{defshfa}\label{def:shfa}
    A path $\pi$ satisfies \emph{strong $\block$-hyperfairness of actions} \emph{($\block$-SHFA)} if, and only if, for every suffix $\pi'$ of $\pi$, every action $a \in \nonblock$ that is relentlessly $\block$-reachable in $\pi'$ occurs in~$\pi'$.
\end{restatable}
Since enabledness implies reachability, WHFA is stronger than WFA, and SHFA is stronger than SFA. Perpetually reachability implies relentless reachability, so SHFA is also stronger than WHFA. 
However, as the next examples will show, SFA and WHFA are incomparable.

\begin{example}\label{ex:hfa}
    The impact of hyperfairness can clearly be seen when non-determinism is used.
    Consider the path $s_0 t_1 s_1 t_4 (s_3 t_6)^{\omega}$ in \autoref{fig:runex}.
    This path satisfies $\emptyset$-SFA, since the only action that is relentlessly enabled on this path, $\actname{brew}$, also occurs infinitely often.
    However, as long as $\actname{deliver} \not \in \block$ and $\actname{brew} \not \in \block$, this path does not satisfy $\block$-WHFA or $\block$-SHFA: $\actname{deliver}$ is $\block$-reachable from $s_3$, and therefore is perpetually and relentlessly $\block$-reachable in a suffix of this path, but does not occur. We here see $\block$-SFA does not imply $\block$-WHFA.
\end{example}

\begin{example}\label{ex:hyperdiff}
    In \autoref{fig:runex}, consider $s_0 t_1 (s_1 t_2 s_2 t_3)^{\omega}$ with $\block = \{\actname{order}, \actname{to\_cash}, \actname{to\_card}\}$. 
    This path satisfies $\block$-WHFA because $\actname{card}$ and $\actname{cash}$ are only $\block$-reachable from $s_1$ and $s_2$ respectively. 
    They are not perpetually $\block$-reachable in any suffix of this path, therefore $\block$-WHFA is satisfied.
    However, they are relentlessly $\block$-reachable, so $\block$-SHFA is violated. This demonstrates that $\block$-WHFA and $\block$-SFHA do not coincide when blocking actions are considered.
    The actions $\actname{card}$ and $\actname{cash}$ are also relentlessly $\block$-enabled, so $\block$-SFA is also violated. Hence, $\block$-WHFA does not imply $\block$-SFA.
\end{example}

\section{A Generalisation of the Property Specification Liveness Patterns}
\label{sec:dwyer}

Dwyer, Avrunin and Corbett observed that a significant majority of properties that are used in practice can be fit into a set of property specification patterns \cite{dwyer1999patterns}.
These patterns consist of a \emph{behaviour} that must be satisfied and a \emph{scope} within a path that delimits where the behaviour must be satisfied. \forfull{We recall the behaviours and scopes presented in \cite{dwyer1999patterns} in \autoref{app:psp}. }We focus on expressing properties that are captured by PSP.

Of all behaviours considered in \cite{dwyer1999patterns}, only \pattern{existence}, \pattern{existence at least}, \pattern{response} and \pattern{chain response} represent pure liveness properties.
The \pattern{global} and \pattern{after} scopes, when combined with any of these four behaviours, give liveness properties.\altfull{ We argue why only these patters of PSP represent pure liveness properties in \autoref{app:psp}.}{\footnote{In the full version, we recap PSP and argue why only these patterns represent pure liveness properties.}}
All other scopes result in safety properties or properties that combine safety and liveness.
Of those, we cover the \pattern{until} and \pattern{after-until} scopes, since we can incorporate those into our formulae with little difficulty.

For behaviours, \pattern{existence at least} says some action in a set $S_r$ must occur at least $k$ times in the scope; when $k = 1$ we call this \pattern{existence}. The \pattern{response} behaviour requires that whenever an action in a set $S_q$ occurs, it must be followed by the occurrence of an action in $S_r$. When chains of action occurrences are used instead of individual action occurrences, this is called \pattern{chain response}.
For the scopes, \pattern{global} refers to the full path and \pattern{after} to the path after the first occurrence of an action in a set $S_a$. The \pattern{until} scope refers to the path before the first occurrence of an action in a set $S_b$, or the full path if no such action occurs. Finally, \pattern{after-until} combines \pattern{after} and \pattern{until}, referring to every subpath of the path that starts after any occurrence of an action in $S_a$ and ends before the following occurrence of an action in $S_b$. If no action in $S_b$ occurs, the behaviour must still be satisfied after $S_a$.

\begin{example}\label{ex:propspatterns}
Consider again the properties we presented in \autoref{ex:running}.
    \emph{Single order} is \pattern{absence after-until}, with $S_a = \{\actname{order}\}$, $S_b = \{\actname{deliver}\}$ and $S_r = \{\actname{order}\}$.
    \emph{Inevitable delivery} is  \pattern{global response} with $S_q = \{\actname{order}\}$ and $S_r = \{\actname{deliver}\}$. 
    \emph{Possible delivery} does not fit into the patterns on occurrences of actions, since it contains a requirement on states, specifically that the state admits a path on which $\actname{delivery}$ occurs.
\end{example}

We want to create formulae for all 16 combinations of the selected behaviours and scopes.
To make our results more compact and generic, we first generalise these 16 patterns into a single template property.
This template works by describing the shape of a violating path for a property that fits one of these patterns.
Intuitively, this shape is:
``after the occurrence of $\dpre$, there are no occurrences of $\ddis$ up until the (optional) occurrence of $\den$''.
For our template formulae to be syntactically correct, it is important that $\dpre$ is a regular formula, describing the prefix that a violating path must have, whereas $\ddis$ and $\den$ are sets of actions. The actions in $\ddis$ are those that are forbidden from occurring after $\dpre$ on a violating path, whereas the actions in $\den$ indicate the end of the scope in which $\ddis$ may not occur.

We formalise this template as follows:
\begin{definition}
    A path $\pi$ is
    \emph{$(\dpre,\ddis,\den)$-violating} if, and only if, there exist $\pi_{\mathit{pre}}$ and $\pi_{\mathit{suf}}$ such that:
    \begin{enumerate}
    \item $\pi = \pi_{\mathit{pre}}\co\pi_{\mathit{suf}}$, and
    \item $\pi_{\mathit{pre}}$ matches $\dpre$, and
    \item $\pi_{\mathit{suf}}$ satisfies at least one of the following conditions:
    \begin{enumerate}
        \item $\pi_{\mathit{suf}}$ is $\ddis$-free, or
        \item $\pi_{\mathit{suf}}$ contains an occurrence of an action in $\den$, and the prefix of $\pi_{\mathit{suf}}$ before the first occurrence of an action in $\den$ is $\ddis$-free.
    \end{enumerate}
\end{enumerate}
\end{definition}
For readability, we frequently refer to $(\dpre,\ddis,\den)$-violating paths as violating paths.
We sometimes summarise condition 3 as ``$\pi_{\mathit{suf}}$ is $\ddis$-free up until the first occurrence of $\den$''. 
See \autoref{fig:violpath} for an illustration of what types of paths are considered violating.

\begin{figure}[htb]
    \centering
    \begin{tikzpicture}[node distance=25pt and 35pt, inner sep=0pt, outer sep = 0pt, minimum size = 0pt]
        \node [draw=none] (1-start_pre) {};
        \node [draw=none, right=of 1-start_pre, xshift=-15pt] (1-start_dis_free) {};
        \node [draw=none, right=of 1-start_dis_free, xshift=20pt] (1-end_dis_free) {};
        
        \node [draw=none, right=of 1-end_dis_free, xshift=-20pt] (2-start_pre) {};
        \node [draw=none, right=of 2-start_pre, xshift=-15pt] (2-start_dis_free) {};
        \node [draw=none, right=of 2-start_dis_free, xshift=20pt] (2-end_dis_free) {};
        \node [draw=none, right=of 2-end_dis_free,xshift=-20pt] (2-end) {};
        
        \node [draw=none, right=of 2-end, xshift=-20pt] (3-start_pre) {};
        \node [draw=none, right=of 3-start_pre, xshift=-15pt] (3-start_dis_free) {};
        \node [draw=none, right=of 3-start_dis_free, xshift=-5pt] (3-end_dis_free) {};
        \node [draw=none, xshift=25pt] (3-end) at (3-end_dis_free){};
        
        \node [draw=none, right=of 3-end, xshift=-20pt] (4-start_pre) {};
        \node [draw=none, right=of 4-start_pre, xshift=-15pt] (4-start_dis_free) {};
        \node [draw=none, right=of 4-start_dis_free, xshift=20pt] (4-end_dis_free) {};
        \node [draw=none, right=of 4-end_dis_free,xshift=-20pt] (4-end) {};
        
        \draw[very thick]
        (1-start_pre) -- node[above,yshift=6pt] {$\dpre$} (1-start_dis_free)
        (1-start_dis_free) -- node[above,yshift=6pt] {$\ddis$-free} (1-end_dis_free)
        ;
        
        \draw[very thick]
        (2-start_pre) -- node[above,yshift=6pt] {$\dpre$} (2-start_dis_free)
        (2-start_dis_free) -- node[above,yshift=6pt] {$\ddis$-free} (2-end_dis_free) {}
        (2-end_dis_free) -- (2-end) {}
        ;
        
        \draw[very thick]
        (3-start_pre) -- node[above,yshift=6pt] {$\dpre$} (3-start_dis_free)
        (3-start_dis_free) -- node[above,yshift=6pt,xshift=12.5pt] {$\ddis$-free} (3-end_dis_free)
        ;
        \draw[dashed, very thick] (3-end_dis_free) -- (3-end);
        
        \draw[very thick]
        (4-start_pre) -- node[above,yshift=6pt] {$\dpre$} (4-start_dis_free)
        (4-start_dis_free) -- node[above,yshift=6pt] {$\ddis$-free} (4-end_dis_free) {}
        ;
        \draw[dashed, very thick] (4-end_dis_free) -- (4-end);
        
        \draw[very thick] \VBar{6pt}{(1-start_pre)} \VBar{6pt}{(1-start_dis_free)} \VBar{6pt}{(1-end_dis_free)};
        
        \draw[very thick] \VBar{6pt}{(2-start_pre)} \VBar{6pt}{(2-start_dis_free)} \VBar{6pt}{(2-end_dis_free)} \VBar{6pt}{(2-end)};
        \draw[very thick] (2-end_dis_free) node[above,yshift=6pt] {$\den$} ;
        
        \draw[very thick] \VBar{6pt}{(3-start_pre)} \VBar{6pt}{(3-start_dis_free)} ;
        
        \draw[very thick] \VBar{6pt}{(4-start_pre)} \VBar{6pt}{(4-start_dis_free)} \VBar{6pt}{(4-end_dis_free)};
        \draw[very thick] (4-end_dis_free) node[above,yshift=6pt] {$\den$};
        
        \draw[draw=none,pattern={north west lines}, opacity=0.5] \Above{4pt}{(1-start_pre)} rectangle \Below{4pt}{(1-start_dis_free)};
        \draw[draw=none,pattern={north east lines}, opacity=0.5] \Above{4pt}{(1-start_dis_free)} rectangle \Below{4pt}{(1-end_dis_free)};
        
        \draw[draw=none,pattern={north west lines}, opacity=0.5] \Above{4pt}{(2-start_pre)} rectangle \Below{4pt}{(2-start_dis_free)};
        \draw[draw=none,pattern={north east lines}, opacity=0.5] \Above{4pt}{(2-start_dis_free)} rectangle \Below{4pt}{(2-end_dis_free)};
        
        \draw[draw=none,pattern={north west lines}, opacity=0.5] \Above{4pt}{(3-start_pre)} rectangle \Below{4pt}{(3-start_dis_free)};
        \draw[draw=none,pattern={north east lines}, opacity=0.5] \Above{4pt}{(3-start_dis_free)} rectangle \Below{4pt}{(3-end)};
        
        \draw[draw=none,pattern={north west lines}, opacity=0.5] \Above{4pt}{(4-start_pre)} rectangle \Below{4pt}{(4-start_dis_free)};
        \draw[draw=none,pattern={north east lines}, opacity=0.5] \Above{4pt}{(4-start_dis_free)} rectangle \Below{4pt}{(4-end_dis_free)};
    \end{tikzpicture}
    \caption{The four types of $(\dpre,\ddis,\den)$-violating paths: finite or infinite, and without or with $\den$. Always, it has a prefix matching $\dpre$ and is $\ddis$-free up until the first occurrence of an action in $\den$.}
    \label{fig:violpath}
\end{figure}
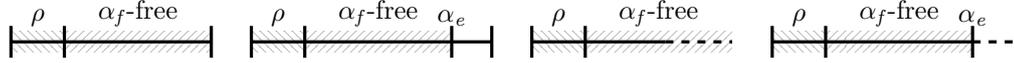

All 16 patterns can indeed be represented by the non-existence of $(\dpre,\ddis,\den)$-violating paths, albeit some more directly than others.
It turns out that $\dpre$, $\ddis$ and $\den$ can mostly be determined separately for behaviour and scope.
For these patterns, $\ddis$ is only affected by behaviour and $\den$ only by scope. However, we must split up the regular formula $\dpre$ into a behaviour component, $\dpreb$, and scope component, $\dpres$, such that $\dpre = \dpres \co \dpreb$.
See \autoref{tab:scopes} and \autoref{tab:behaviours} for how the variables should be instantiated for the four scopes and three of the four behaviours.
For a compact representation, we use $\sum$ to generalise the union operator on regular formulae ($+$).
We also use $x^i$ to represent $i$ concatenations of $x$, where $x^0 = \emptyseq$.

We do not include \pattern{chain response} in \autoref{tab:behaviours}, since it does not fit into a single formula.
However, it is possible to represent \pattern{chain response} as several \pattern{response} formulae placed in conjunction with each other.\altfull{ We include an example of this in \autoref{app:chain-resp}.}{\footnote{We give an example of this in the full version of this paper.}}

\begin{table}[hbt]
    \caption{Variable instantiation for templates.}
    \label{tab:variables}
    \centering
    \begin{subtable}{.38\linewidth}
        \caption{For scopes.}
        \label{tab:scopes}
        \centering
          \begin{tabular}{l|lll}
            \toprule
            Scope &  $\dpres$ & $\den$\\
            \midrule
            Global & $\emptyseq$ & $\noact$ \\
            Until & $\emptyseq$  & $S_b$ \\
            After & $\clos{\comp{S_a}} \co S_a$  & $\noact$ \\
            After-until & $\clos{\allact} \co S_a$ & $S_b$ \\
          \bottomrule
        \end{tabular} 
    \end{subtable}%
    \begin{subtable}{.62\linewidth}
        \caption{For behaviours.}
        \label{tab:behaviours}
        \centering
            \begin{tabular}{l|ll}
                \toprule
                Behaviour & $\dpreb$ & $\ddis$\\
                \midrule
                Existence & $\emptyseq$ & $S_r$ \\
                Existence at least $k$ & $\sum_{0 \leq i < k}(\clos{\comp{\den \cup S_r}} \co S_r)^i$ & $S_r$  \\
                Response & $\clos{\comp{\den}} \co S_q$ & $S_r$ \\
                Chain response & \altfull{See \autoref{app:chain-resp}}{-}& \forcamera{-}\\
                \bottomrule
            \end{tabular}
    \end{subtable}        
\end{table}

\section{Template Formulae}\label{sec:formulae}

In this section, we present the modal $\mu$-calculus formulae representing the non-existence of a violating path, as defined in \autoref{sec:dwyer}, that satisfies one of the completeness criteria from \autoref{sec:completeness}.
We express the non-existence of such a path, rather than expressing the equivalent notion that all complete paths satisfy the property, because we find the resulting formulae to be more intuitive.
We first present a formula for $\block$-progress only.
Subsequently, we give the formulae for weak fairness, weak hyperfairness and justness using a common structure all three share. 
Finally, we present the formulae for strong fairness and strong hyperfairness.
In the justness and fairness formulae, $\block$-progress is also included: these assumptions eliminate unrealistic infinite paths, but we still need progress to discard unrealistic finite paths.

\altfull{The proofs of all theorems in this section are included in \autoref{app:proofs}.}{Proofs of the theorems in this section are included in the full version of this paper. A sketch of the proof of \autoref{thm:gen} is included in \autoref{app:illustration} to illustrate our approach.}

\subsection{Progress}
A formula for the non-existence of a violating path without progress is uninteresting.
If progress is not assumed then all finite paths are complete, and therefore a path consisting of just $\dpre$ is a violating path whenever $\ddis \neq \emptyset$.
The non-existence of a violating path would then be captured by $\neg\diam{\dpre}\tp$.
This is why we include progress in all our formulae.

To represent progress, we must capture that as long as non-blocking actions are enabled, some transitions must still be executed.
The following formula captures the non-existence of violating paths under $\block$-progress:
\begin{equation}\label{mucalc:unfair}
    \neg\diam{\dpre}\nu X. (\diam{\den}\tp \lor \boxm{\nonblock}\fp \lor \diam{\comp{\ddis}}X)
\end{equation}
Intuitively, this formula says that there is no path that starts with a prefix matching $\dpre$, after which infinitely often a transition can be taken that is not labelled with an action in $\ddis$, or such transitions can be taken finitely often before a state is reached that is $\block$-locked or in which $\den$ is enabled. In the former case there is a $\block$-progressing path on which no actions in $\ddis$ occur after $\dpre$. If a state in which $\den$ is enabled is reached, then it is guaranteed a violating and $\block$-progressing path exists: by arbitrarily extending the path as long as non-blocking actions are still enabled, a $\block$-progressing and violating path can be constructed.
\begin{restatable}{theorem}{thmunfair}\label{thm:unfair}
    A state in an LTS satisfies \autoref{mucalc:unfair} if, and only if, it does not admit $\block$-progressing paths that are $(\dpre,\ddis,\den)$-violating. 
\end{restatable}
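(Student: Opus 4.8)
The plan is to prove both directions of the biconditional by relating, via fixpoint (co)induction, the membership of a state $s$ in $\llbracket \nu X.\,(\diam{\den}\tp \lor \boxm{\nonblock}\fp \lor \diam{\comp{\ddis}}X)\rrbracket$ to the existence of a certain kind of path from $s$. Write $\psi(X) = \diam{\den}\tp \lor \boxm{\nonblock}\fp \lor \diam{\comp{\ddis}}X$ and let $T = \llbracket \nu X.\,\psi(X)\rrbracket$ be the greatest fixpoint, so $T$ is the largest set of states $U$ with $U \subseteq \llbracket\psi\rrbracket_{\env[X:=U]}$. The key claim to establish is:
\begin{quote}
$s \in T$ if, and only if, there is a path $\pi$ starting in $s$ that is $\block$-progressing, $\ddis$-free up until the first occurrence of $\den$, \emph{and} — in the case where no action of $\den$ ever occurs on $\pi$ — $\pi$ is $\ddis$-free (equivalently: $\pi$ is a $\block$-progressing path whose $\ddis$-free prefix either is all of $\pi$ or ends precisely at a state where $\den$ is enabled, matching the four cases of Figure~\ref{fig:violpath} with the $\dpre$-prefix stripped).
\end{quote}
Once this claim is proven, the theorem follows immediately: $s \in \llbracket\eqref{mucalc:unfair}\rrbracket$ iff $s \notin \llbracket\diam{\dpre}\,\nu X.\psi\rrbracket$ iff there is no path $\pi_{\mathit{pre}}$ from $s$ matching $\dpre$ ending in a state in $T$ iff (by the claim) $s$ admits no path $\pi_{\mathit{pre}}\co\pi_{\mathit{suf}}$ that is $\block$-progressing and $(\dpre,\ddis,\den)$-violating.

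For the ``$\Leftarrow$'' direction of the claim I would argue by fixpoint coinduction: take $U$ to be the set of all states from which such a path exists, and show $U \subseteq \llbracket\psi\rrbracket_{\env[X:=U]}$, so that $U \subseteq T$. Given $s \in U$ with witnessing path $\pi$, do a case analysis on $\pi$: if $\pi$ is just the single state $s$ (so $s$ is a final state of a $\block$-progressing path), then $s$ is $\block$-locked, so $\boxm{\nonblock}\fp$ holds; if the first transition $t_1$ of $\pi$ is labelled by an action in $\den$, then $\diam{\den}\tp$ holds (here I use that the $\ddis$-free-up-to-$\den$ condition is vacuous about the state before the $\den$-step); otherwise $t_1$ is labelled by some $a \in \comp{\ddis}$ (it cannot be in $\ddis$ by the $\ddis$-freeness before $\den$, and if $a$ happened to also not be in $\den$ but $\pi$ did reach $\den$ later, the suffix still witnesses membership of $\target{t_1}$ in $U$), so $\diam{\comp{\ddis}}X$ holds with the suffix witnessing $\target{t_1}\in U$. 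One has to be slightly careful interleaving the ``$\den$ occurs'' and ``$\den$ never occurs'' sub-cases, but each disjunct of $\psi$ is discharged directly.

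For the ``$\Rightarrow$'' direction, suppose $s \in T$. Since the LTS and $\actionset$ are finite, $T$ is computable as a finite descending Kleene iteration, but more usefully: from $s\in T\subseteq \llbracket\psi\rrbracket_{\env[X:=T]}$ one of the three disjuncts holds at $s$. If $\boxm{\nonblock}\fp$ holds, $s$ is $\block$-locked and $\pi=s$ works. If $\diam{\den}\tp$ holds, there is an $a\in\den$ with $s\xrightarrow{a}s'$; extend $s\,t\,s'$ to a $\block$-progressing path arbitrarily (always possible: keep firing an enabled non-blocking action while one exists, else stop — finiteness is not even needed here, only that we may stop exactly at $\block$-locked states) — this is $\ddis$-free up to the first $\den$ occurrence, which is the very first step. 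Otherwise $\diam{\comp{\ddis}}X$ holds, giving $s\xrightarrow{a}s'$ with $a\in\comp{\ddis}$ and $s'\in T$; iterate. This produces either a finite path ending $\block$-locked, or a path reaching a $\den$-enabled state (then extend as above), or an infinite $\ddis$-free sequence — in all cases a path of the required form. The main obstacle, and the place needing the most care, is this last ``iterate forever'' case: one must argue that the greedy construction, which at each step either stops with a correct witness or continues within $T$, does not get stuck prematurely — i.e.\ that whenever we are at a state in $T$ none of whose disjuncts is $\diam{\den}\tp$ or $\boxm{\nonblock}\fp$, the chosen successor genuinely lies in $T$ so the induction/corecursion is well-founded; this is exactly the fixpoint property $T\subseteq\llbracket\psi\rrbracket_{\env[X:=T]}$ applied repeatedly, and packaging it cleanly (rather than hand-waving "repeat") is where I would invest the proof's effort, likely by defining the witness path corecursively and verifying all four shapes of Figure~\ref{fig:violpath} are covered.
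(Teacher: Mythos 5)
Your proposal is correct and follows essentially the same route as the paper's proof: the paper characterises $\llbracket\nu X.\psi\rrbracket$ as the set of states admitting $\block$-progressing, $(\emptyseq,\ddis,\den)$-violating paths by showing that set is a fixed point (your coinductive ``$\Leftarrow$'' inclusion, with the same case split on the empty path, a first $\den$-step, or a first $\comp{\ddis}$-step) and then the greatest fixed point via exactly your iterative path construction (stop at $\block$-locked states, extend arbitrarily once $\den$ is enabled, or continue forever to get an infinite $\ddis$-free path), before handling $\dpre$ and the outer negation just as you do.
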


Since representing a liveness pattern without progress leads to uninteresting formulae, it is unsurprising that previous translations of PSP to the $\mu$-calculus have also implicitly included progress.
For instance, the translations from \cite{CADPRAFMC} for the liveness patterns of PSP are very similar to \autoref{mucalc:unfair}, albeit in positive form and without blocking actions.

\subsection{Weak Fairness, Weak Hyperfairness and Justness}\label{subsec:gen}
For weak fairness, weak hyperfairness and justness, we employ a trick inspired by the formula for justness presented in \cite{bouwman2020off} (which was in turn inspired by \cite{clarke1995efficient}): we translate a requirement on a full path into an invariant that can be evaluated within finitely many steps from every state of the path.
We illustrate this using weak fairness.

On every suffix of a weakly fair path, every perpetually enabled non-blocking action occurs.
To turn this into an invariant, we observe that we can evaluate a property on all suffixes of a path by evaluating it from every state of the path instead.
Next we must determine, within finitely many steps, if an action is perpetually enabled on a possibly infinite path. 
We do this by observing that if an action is not perpetually enabled, it must become disabled within finitely many steps.
An equivalent definition of WFA therefore is: a path $\pi$ satisfies WFA if, and only if, for every state $s$ in $\pi$, every action $a \in \nonblock$ that is enabled in $s$ occurs or becomes disabled within finitely many steps on the suffix of $\pi$ starting in $s$.
This translation of WFA determines three things for every non-blocking action $a$.
First, which actions may need to occur because of $a$; in the case of WFA this is $a$ itself.
Second, when those actions need to occur; for WFA this is when $a$ is enabled. We refer to this as the action being ``on''.
Finally, when those actions do not need to occur; for WFA this is when $a$ becomes disabled. We refer to this as the action being ``off''. When an action that was previously on becomes off, or one of the required actions occurs, we say the action is ``eliminated''.
By choosing different definitions for an action being on or off, and when an action is eliminated, we can also represent justness and weak hyperfairness in the same way.

We find that completeness criteria for which such a translation can be made can be represented using the same generalised formula.
We will present this formula and how to instantiate it for WFA, WHFA and JA.
However, we must first formalise what it means for a predicate on paths to be translatable to an invariant that can be evaluated within finitely many steps.
We introduce the term \emph{finitely realisable (path) predicates} for this purpose.
\begin{restatable}{definition}{defgen}\label{def:gen} 
    A path predicate $P$ is \emph{finitely realisable} if, and only if, there exist mappings $\formonsym$ and $\formoffsym$ from non-blocking actions to closed modal $\mu$-calculus formulae, and a mapping $\actelimsym$ from non-blocking actions to sets of actions, such that:
    \begin{enumerate}
        \item A path $\pi$ satisfies predicate $P$ if, and only if, all states $s$ on $\pi$ satisfy the following: for all $a \in \nonblock$, if $s$ satisfies $\formon{a}$ then the suffix $\pi'$ of $\pi$ starting in $s$ must contain an occurrence of some action in $\actelim{a}$ or a state that satisfies $\formoff{a}$.\label{gen:shape}
        \item A state $s$ is a $\block$-locked state if, and only if, $s \not\in\llbracket \formon{a}\rrbracket$ for all $a \in \nonblock$. \label{gen:locked}
        \item For every state $s$ and for all $a \in \nonblock$, $s \in \llbracket\formon{a}\rrbracket$ implies $s \not\in \llbracket\formoff{a}\rrbracket$. \label{gen:onoff}
        \item For all states $s$ and all $a \in \nonblock$ such that $s\in \llbracket\formon{a}\rrbracket$, if there exists a finite path $\pi$ from $s$ to a state $s'$ such that there is no occurrence of an action in $\actelim{a}$ on $\pi$ and there is no state on $\pi$ that satisfies $\formoff{a}$, then $s' \in \llbracket\formon{a}\rrbracket$. \label{gen:remains}
    \end{enumerate}
    We refer to these four properties as the \emph{invariant property}, the \emph{locking property}, the \emph{exclusive property} and the \emph{persistent property}, respectively.
\end{restatable}

The general formula for finitely realisable predicates is as follows: 
\begin{equation}\label{mucalc:gen}
    \neg\diam{\dpre}\nu X. (\bigwedge_{a \in \nonblock}(\formon{a} \imps 
    \diam{\clos{\comp{\ddis}}}(\diam{\den}\tp \lor (\formoff{a} \land X) \lor 
    \diam{\actelim{a} \setminus \ddis}X)))
\end{equation}
This formula has similarities to \autoref{mucalc:unfair}, particularly how $\dpre$ and $\den$ are integrated. The important part is that after $\dpre$, it must invariantly hold that all non-blocking actions for which $\formon{a}$ is satisfied are later eliminated.
An action $a$ is eliminated if, within finitely many steps, $\formoff{a}$ is satisfied or an action in $\actelim{a}$ occurs. In both cases, the invariant must once again hold. After $\dpre$, no actions in $\ddis$ may occur.
The formula works correctly for finite paths as well as infinite ones: if it is possible to reach a $\block$-locked state after $\dpre$ without taking actions in $\ddis$, then $X$ is satisfied due to the locking property, and a violating path is found. 

\autoref{mucalc:gen} is a template formula in two ways: $\dpre$, $\ddis$ and $\den$ determine what property is captured, and $\formonsym$, $\formoffsym$ and $\actelimsym$ determine the completeness criterion.
In this paper, we only cover how to instantiate the formula for WFA, WHFA and JA, but it can also be used for other finitely realisable predicates. However, the correctness proof of the formula depends on the criterion being \emph{feasible}. Feasibility on paths~\cite{apt1988appraising} is defined as follows.
\begin{restatable}{definition}{deffeas}\label{def:feas}
    A predicate on paths $P$ is \emph{feasible} if, and only if, for every LTS $M$, every finite path $\pi$ in $M$ can be extended to a path $\pi'$ that satisfies $P$ and is still a valid path in $M$.
\end{restatable}
\altfull{That WFA, WHFA and JA are feasible for finite LTSs is proven in \autoref{app:props}.}{That WFA, WHFA and JA are indeed feasible for finite LTSs is proven in the full version.}

\begin{restatable}{theorem}{thmgen}\label{thm:gen}
    For all feasible and finitely realisable path predicates $P$, it holds that an LTSC satisfies \autoref{mucalc:gen} if, and only if, its initial state does not admit $\block$-progressing paths that satisfy $P$ and are $(\dpre,\ddis,\den)$-violating. 
\end{restatable}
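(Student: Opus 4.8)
The plan is to prove the contrapositive of the biconditional. Writing $\Psi$ for the greatest fixpoint $\nu X. \bigwedge_{a \in \nonblock}(\formon{a} \imps \diam{\clos{\comp{\ddis}}}(\diam{\den}\tp \lor (\formoff{a} \land X) \lor \diam{\actelim{a} \setminus \ddis}X))$, so that \autoref{mucalc:gen} is $\neg\diam{\dpre}\Psi$, the claim becomes: the initial state satisfies $\diam{\dpre}\Psi$ if and only if it admits a $\block$-progressing, $(\dpre,\ddis,\den)$-violating path satisfying $P$. Since a state satisfies $\diam{\dpre}\Psi$ exactly when some finite path $\pi_{\mathit{pre}}$ out of it matches $\dpre$ and ends in a state $s'$ with $s' \in \llbracket\Psi\rrbracket$, everything hinges on understanding $\llbracket\Psi\rrbracket$; note the concurrency relation of the LTSC enters only through $P$ (via $\actelimsym$), so the argument can be carried out uniformly for any feasible and finitely realisable $P$. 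Before treating $\llbracket\Psi\rrbracket$ I would record two consequences of \autoref{def:gen}. First, every path satisfying $P$ is already $\block$-progressing: a finite such path cannot end in a non-$\block$-locked state, because the locking property would supply an action $a \in \nonblock$ that is ``on'' there, which — having no transitions after it, and being excluded from the final state by the exclusive property — could never be discharged, contradicting the invariant property. Second, by the invariant property, $\pi \models P$ iff for every state $v$ on $\pi$ and every $a \in \nonblock$ with $v \in \llbracket\formon{a}\rrbracket$, the suffix of $\pi$ from $v$ contains an occurrence of an action in $\actelim{a}$ or a state in $\llbracket\formoff{a}\rrbracket$; I abbreviate the latter by ``$a$ is discharged after $v$''.

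For the direction from a violating $P$-path to $s' \in \llbracket\Psi\rrbracket$, write the path as $\pi_{\mathit{pre}}\co\pi_{\mathit{suf}}$ with $\pi_{\mathit{pre}}$ matching $\dpre$ and ending in $s'$, and exhibit a set $S$ with $s' \in S$ that is closed under one unfolding of the body of $\Psi$ read with $X$ interpreted as $S$; by the standard characterisation of greatest fixpoints, $S \subseteq \llbracket\Psi\rrbracket$. If $\pi_{\mathit{suf}}$ contains a $\den$-occurrence with an $\ddis$-free prefix before it, let $S$ be the set of states on that prefix: each reaches its last state by an $\ddis$-free path and that last state satisfies $\diam{\den}\tp$, so the first disjunct is always available and $P$ plays no role. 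If instead $\pi_{\mathit{suf}}$ is entirely $\ddis$-free, let $S$ be the set of states of $\pi_{\mathit{suf}}$; for $v \in S$ and $a$ with $v \in \llbracket\formon{a}\rrbracket$, the invariant property yields a first discharge of $a$ after $v$ on $\pi_{\mathit{suf}}$, reached by an $\ddis$-free subpath, and this discharge is either a state in $\llbracket\formoff{a}\rrbracket \cap S$ (feeding $\formoff{a} \land X$) or an $\actelim{a}$-transition whose label lies in $\actelim{a} \setminus \ddis$ because $\pi_{\mathit{suf}}$ is $\ddis$-free (feeding $\diam{\actelim{a} \setminus \ddis}X$).

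For the converse I would build $\pi_{\mathit{suf}}$ out of $s'$ by a fair round-robin. As $\llbracket\Psi\rrbracket$ is a post-fixpoint, fix for every $s \in \llbracket\Psi\rrbracket$ and every $a \in \nonblock$ with $s \in \llbracket\formon{a}\rrbracket$ a \emph{witness}: an $\ddis$-free path from $s$ ending either in a state satisfying $\diam{\den}\tp$, or in a state of $\llbracket\formoff{a}\rrbracket \cap \llbracket\Psi\rrbracket$, or ending with a step labelled in $\actelim{a} \setminus \ddis$ to a state in $\llbracket\Psi\rrbracket$. Keeping a FIFO queue, initialised with the actions $a \in \nonblock$ for which $s' \in \llbracket\formon{a}\rrbracket$, iterate: if the current state is $\block$-locked, stop (a finite path ending $\block$-locked); if $\den$ is enabled there, append a $\den$-step and then, by feasibility (\autoref{def:feas}), extend the whole finite path built so far to one satisfying $P$ — which by the first consequence above is $\block$-progressing — and stop; otherwise dequeue, discarding any action no longer ``on'' (its obligation has already been met, by the persistent property), process the first still-``on'' action $a$ by appending its witness (discharging $a$ and returning to $\llbracket\Psi\rrbracket$, unless the witness diverts us to the $\den$-case), enqueue the actions ``on'' in the new state, and continue. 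One then checks that $\pi := \pi_{\mathit{pre}}\co\pi_{\mathit{suf}}$ is $(\dpre,\ddis,\den)$-violating (every appended step before a possible $\den$-step is $\ddis$-free, so the $\ddis$-free prefix before the first $\den$-occurrence survives), $\block$-progressing (infinite, or ending $\block$-locked, or a feasible extension), and satisfies $P$: for a state $v$ on $\pi$ and $a$ ``on'' in $v$, if $\pi$ is finite and ends in a $\block$-locked $s_{\mathit{fin}}$ then $a$ is not ``on'' in $s_{\mathit{fin}}$, so the persistent property forces a discharge of $a$ between $v$ and $s_{\mathit{fin}}$; if $\pi$ is a feasible extension the claim is immediate; and if $\pi$ is infinite, FIFO fairness together with the persistent property (which transports a pending obligation for $a$ into $a$'s being ``on'' at the next state where the queue is re-scanned) ensures $a$ is re-enqueued and then discharged after $v$.

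I expect the main obstacle to be this last construction and its verification: ruling out starvation of actions under the schedule, checking that every non-terminating iteration strictly extends the path (here the exclusive property excludes degenerate zero-step iterations, since an action just dequeued is ``on'' and hence not ``off''), showing that obligations \emph{inherited from} $\pi_{\mathit{pre}}$ — not merely those created within $\pi_{\mathit{suf}}$ — are all discharged (again via the persistent property), and invoking feasibility at precisely the right moment, namely only once a $\den$-enabled state has been reached, past which the violating shape is already secured no matter how the path continues. The forward direction, by comparison, is a routine post-fixpoint check that mostly repackages the invariant property.
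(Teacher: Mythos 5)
Your proposal is correct, and its skeleton matches the paper's: read \autoref{mucalc:gen} as $\neg\diam{\dpre}\Psi$, derive from \autoref{def:gen} that $P$-paths are $\block$-progressing and that finite $\block$-progressing paths pose no obligations, use a post-fixpoint argument for one direction and a path construction plus feasibility (invoked exactly at a $\den$-enabled state) for the other. The differences are in the two key devices. For the ``path $\Rightarrow$ formula'' direction you check that the set of states on the given violating path is itself a post-fixpoint and appeal to Knaster--Tarski, whereas the paper characterises $\llbracket\Psi\rrbracket$ exactly as the set $S_G$ of \emph{all} states admitting such paths, proving $S_G$ is a fixed point (which costs an extra use of feasibility) and then the greatest one; your version is more local and defers feasibility entirely to the converse. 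For the converse, the paper's engine is \autoref{lem:inner}: by induction on the set of ``on'' actions it produces, per round, one finite $\ddis$-free segment discharging \emph{all} currently pending actions, and then iterates; you instead process one action per round under a FIFO round-robin, discarding stale entries (justified, as you say, by the persistent property) and ruling out zero-length rounds via the exclusive property, so that schedule fairness replaces the batch-elimination lemma -- amusingly, this is essentially the queue construction the paper itself uses in its feasibility proofs for WHFA and JA, transplanted into the greatest-fixpoint argument. You also inline what the paper factors out as \autoref{prop:gen-prepend}/\autoref{prop:gen-suffix} (obligations inherited from the $\dpre$-prefix) and as \autoref{prop:lfp-basic}/\autoref{lem:sat} (the semantics of the $\diam{\clos{\comp{\ddis}}}$ subformula); your treatment of both is sound, though the FIFO-fairness verification you flag as the main obstacle is precisely the part that would need the most careful write-up, at roughly the same cost as the paper's induction in \autoref{lem:inner}.
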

By instantiating the theorem for each completeness criterion, we derive the following:

\begin{restatable}{corollary}{corwfa}\label{cor:wfa}
    A state in an LTS satisfies \autoref{mucalc:gen} with $\formon{a} = \diam{a}\tp$, $\formoff{a} = \boxm{a}\fp$ and $\actelim{a} = \{a\}$ for all $a \in \nonblock$ if, and only if, it does not admit $\block$-progressing paths that satisfy $\block$-weak fairness of actions and are $(\dpre,\ddis,\den)$-violating. 
\end{restatable}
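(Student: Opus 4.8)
The plan is to obtain \autoref{cor:wfa} as a direct instance of \autoref{thm:gen}. To invoke that theorem for the predicate ``$\block$-weak fairness of actions'' I must supply two facts: that this predicate is feasible for finite LTSs (in the sense of \autoref{def:feas}), and that it is finitely realisable (in the sense of \autoref{def:gen}) as witnessed by $\formon{a} = \diam{a}\tp$, $\formoff{a} = \boxm{a}\fp$ and $\actelim{a} = \{a\}$. Feasibility of $\block$-WFA for finite LTSs is established separately, so the substantive work is to verify the four conditions of \autoref{def:gen}; once that is done, \autoref{thm:gen} yields the statement verbatim, noting that $\nonblock$ is finite because $\actionset$ is, so $\formonsym$ and $\formoffsym$ do map into genuine closed modal $\mu$-calculus formulae and the finite conjunction in \autoref{mucalc:gen} is well defined.

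First I would record the basic arithmetic of the chosen mappings: since $\boxm{a}\fp = \neg\diam{a}\neg\fp = \neg\diam{a}\tp$, a state satisfies $\formon{a}$ exactly when $a$ is enabled in it and satisfies $\formoff{a}$ exactly when $a$ is disabled in it, so $\formoff{a}$ is literally the negation of $\formon{a}$. From this the exclusive property ($s \in \llbracket\formon{a}\rrbracket$ implies $s \notin \llbracket\formoff{a}\rrbracket$) is immediate, and the locking property is immediate as well, since ``no $a \in \nonblock$ satisfies $\formon{a}$'' means ``no non-blocking action is enabled'', which is exactly the definition of a $\block$-locked state. The persistent property is also immediate: if $\pi$ is a finite path from $s$ to $s'$ with no occurrence of an action in $\actelim{a} = \{a\}$ and no state satisfying $\formoff{a}$, then in particular $s'$, being a state on $\pi$, does not satisfy $\formoff{a}$, so $a$ is enabled in $s'$, i.e.\ $s' \in \llbracket\formon{a}\rrbracket$. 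This is the point where WFA is genuinely easier than the strong variants: because $\formoff{a}$ already asserts that $a$ is disabled, persistence needs no appeal to the concurrency relation or to reachability.

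The only condition demanding an argument is the invariant property, i.e.\ that a path $\pi$ satisfies $\block$-WFA (\autoref{def:wfa}) if and only if every state $s$ on $\pi$ satisfies: for every $a \in \nonblock$ enabled in $s$, the suffix $\pi'$ of $\pi$ starting in $s$ contains an occurrence of $a$ or a state in which $a$ is disabled. For the forward direction, take a suffix $\pi'$ of $\pi$ starting in $s$ and a non-blocking action $a$ perpetually enabled on $\pi'$; since $a$ is enabled in $s$, the state-wise condition yields either an occurrence of $a$ on $\pi'$ (as required) or a state of $\pi'$ in which $a$ is disabled, contradicting perpetual enabledness, so $a$ occurs on $\pi'$. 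For the converse, take a state $s$ on $\pi$ and $a \in \nonblock$ enabled in $s$: either $a$ is perpetually enabled on the suffix $\pi'$ from $s$, whence $\block$-WFA supplies an occurrence of $a$ on $\pi'$, or $a$ is not perpetually enabled on $\pi'$, whence some state of $\pi'$ has $a$ disabled; in either case the state-wise condition holds at $s$. This equivalence is exactly the bridge, already sketched informally in the discussion preceding \autoref{def:gen}, between the suffix-based formulation of \autoref{def:wfa} and the state-invariant formulation required by \autoref{def:gen}, and it uses nothing about finiteness of the LTS.

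Having checked all four conditions, $\block$-WFA is finitely realisable; being also feasible, it falls within the hypotheses of \autoref{thm:gen}, and instantiating that theorem with $P$ equal to $\block$-weak fairness of actions and with the mappings above gives precisely \autoref{cor:wfa}. I do not anticipate a genuine obstacle: this corollary is deliberately the easy instance, and its only mildly delicate point is the invariant-property reformulation above. The subtlety that does not arise here but will for the strong criteria is that strong fairness and strong hyperfairness fail the exclusive or persistent properties, which is why they require the separate, more involved formulae later in the section.
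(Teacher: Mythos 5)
Your proposal is correct and follows essentially the same route as the paper: verify the four conditions of \autoref{def:gen} for the given mappings (with the invariant property being the only one requiring a real argument, handled by the same suffix-vs-state-wise case analysis), cite feasibility of $\block$-WFA, and instantiate \autoref{thm:gen}. The only discrepancy is your closing aside attributing the failure of the strong criteria to the exclusive or persistent properties, whereas the paper locates the obstacle in relentless enabledness not being observable in finitely many steps; this does not affect the proof of the corollary itself.
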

\begin{restatable}{corollary}{corwhfa}\label{cor:whfa}
    A state in an LTS satisfies \autoref{mucalc:gen} with $\formon{a} = \diam{\clos{\nonblock} \co a}\tp$, $\formoff{a} = \boxm{\clos{\nonblock} \co a}\fp$ and $\actelim{a} = \{a\}$ for all $a \in \nonblock$ if, and only if, it does not admit $\block$-progressing paths that satisfy weak $\block$-hyperfairness of actions and are $(\dpre,\ddis,\den)$-violating.
\end{restatable}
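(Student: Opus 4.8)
The plan is to obtain \autoref{cor:whfa} as an instance of \autoref{thm:gen}. Note first that the formulae $\formon{a}$, $\formoff{a}$ and the set $\actelim{a}$ prescribed in the statement do not mention a concurrency relation, and neither does $\block$-WHFA, the notion of a $(\dpre,\ddis,\den)$-violating path, or $\block$-progress; hence we may view the given LTS as an LTSC by endowing it with an arbitrary concurrency relation (e.g.\ the empty one, which is readily checked to satisfy \autoref{def:concur-act}), and the conclusion will be independent of this choice. It then suffices to verify the two hypotheses of \autoref{thm:gen} for the predicate $\block$-WHFA: feasibility, which for finite LTSs is established in \autoref{app:props}, and finite realisability via the stated $\formonsym$, $\formoffsym$, $\actelimsym$, i.e.\ the four conditions of \autoref{def:gen}.

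The key observation is a reading of the chosen formulae: for $a \in \nonblock$, $\llbracket \formon{a}\rrbracket$ is exactly the set of states from which $a$ is $\block$-reachable (a $\block$-free path matching $\clos{\nonblock}$ to a state where $a$ is enabled), and $\formoff{a} = \boxm{\clos{\nonblock}\co a}\fp$ is syntactically $\neg\formon{a}$, so $\llbracket \formoff{a}\rrbracket$ is the set of states from which $a$ is not $\block$-reachable. Three of the four conditions are then short. The \emph{exclusive property} holds because $\formoff{a}$ is literally the negation of $\formon{a}$. The \emph{persistent property} holds because the final state $s'$ of the witnessing finite path is itself a state on that path, hence fails $\formoff{a}$, hence satisfies $\formon{a}$ (the assumption on $\actelim{a}$ is not even needed). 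The \emph{locking property} reduces to the claim that a state $s$ is $\block$-locked if and only if no $a \in \nonblock$ is $\block$-reachable from $s$; the ``if'' part is clear since $s$ is $\block$-reachable from itself, and the ``only if'' part uses that the sole $\block$-free path leaving a $\block$-locked state is the trivial one, so there $\block$-reachability of an action coincides with enabledness.

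The substantive point is the \emph{invariant property}: that $\pi$ satisfies $\block$-WHFA exactly when, for every state $s$ on $\pi$ and every $a \in \nonblock$, if $a$ is $\block$-reachable from $s$ then the suffix $\pi'$ of $\pi$ starting at $s$ contains an occurrence of $a$ or a state from which $a$ is not $\block$-reachable. I would prove both directions directly, using only that, by \autoref{def:whfa}, ``$a$ perpetually $\block$-reachable in a suffix $\pi'$'' means ``$a$ is $\block$-reachable from every state of $\pi'$''. From the invariant to $\block$-WHFA: given a suffix $\pi'$ in which $a$ is perpetually $\block$-reachable, apply the invariant at the first state of $\pi'$, where $a$ is $\block$-reachable, to force an occurrence of $a$ in $\pi'$ or a state of $\pi'$ from which $a$ is not $\block$-reachable; the latter contradicts perpetual $\block$-reachability. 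From $\block$-WHFA to the invariant: for a state $s$ with $a$ $\block$-reachable from it, look at the suffix $\pi'$ starting at $s$; either $a$ is perpetually $\block$-reachable in $\pi'$ and $\block$-WHFA gives an occurrence of $a$, or some state of $\pi'$ already lies in $\llbracket\formoff{a}\rrbracket$.

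The part requiring care is precisely this invariant property, because $\block$-reachability is not monotone along a path --- a state may lose and later regain the ability to reach $a$. The argument above is written so as never to invoke monotonicity; it only compares ``reachable from the first state of a suffix'' with ``reachable from every state of that suffix''. One must nonetheless be careful about which occurrence of a state on $\pi$ induces which suffix, and about the vacuous case of a $\block$-locked state on $\pi$, where $\formon{a}$ fails for all $a$. Once the four conditions hold, \autoref{cor:whfa} is \autoref{thm:gen} applied to this instantiation, with all reasoning about $\dpre$, $\ddis$, $\den$ and $\block$-progress inherited from the proof of \autoref{thm:gen}.
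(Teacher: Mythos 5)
Your proposal is correct and follows essentially the same route as the paper: it verifies the four conditions of \autoref{def:gen} for the given instantiation (with the same arguments for the locking, exclusive, and persistent properties, and an equivalent two-directional rephrasing of the paper's case analysis for the invariant property), cites the separately proven feasibility of weak $\block$-hyperfairness, and then applies \autoref{thm:gen}. The additional remarks about equipping the LTS with a trivial concurrency relation and about non-monotonicity of $\block$-reachability are sound points of care but do not change the substance of the argument.
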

\begin{restatable}{corollary}{corja}\label{cor:ja}
    A state in an LTSC 
    satisfies \autoref{mucalc:gen} with $\formon{a} = \diam{a}\tp$, $\formoff{a} =\fp$ and $\actelim{a} = \{b \in \actionset \mid a \nconc b\}$ for all $a \in \nonblock$ if, and only if, it does not admit $\block$-progressing paths that satisfy $\block$-justness of actions and are $(\dpre,\ddis,\den)$-violating.
\end{restatable}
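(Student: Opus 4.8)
The plan is to obtain this corollary as an immediate instance of \autoref{thm:gen}. Since that theorem applies to every path predicate that is both feasible and finitely realisable, it suffices to show that $\block$-justness of actions is feasible and that the mappings $\formon{a} = \diam{a}\tp$, $\formoff{a} = \fp$ and $\actelim{a} = \{b \in \actionset \mid a \nconc b\}$ witness its finite realisability in the sense of \autoref{def:gen}. Feasibility of $\block$-JA on finite LTSCs is established separately, so the heart of the argument is verifying the four conditions of \autoref{def:gen} for these particular mappings; the corollary then follows by applying \autoref{thm:gen} with the state in question regarded as the initial state (legitimate because \autoref{mucalc:gen} is closed and its denotation is computed statewise).

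First I would note that $s \in \llbracket\diam{a}\tp\rrbracket$ holds exactly when $a$ is enabled in $s$, so $\formon{a}$ expresses enabledness. Under this reading the \emph{invariant property} reduces to requiring that, for every state $s$ on $\pi$ and every $a \in \nonblock$ enabled in $s$, some action in $\actelim{a} = \{b \mid a \nconc b\}$ occurs on the suffix of $\pi$ from $s$; the disjunct ``a state satisfying $\formoff{a}$'' is vacuous because $\llbracket\fp\rrbracket = \emptyset$. Since $a' \in \actelim{a}$ is precisely $a \nconc a'$, this is a literal restatement of \autoref{def:ja}, so the invariant property holds by definition. The \emph{locking property} is equally direct: $s \notin \llbracket\formon{a}\rrbracket$ for all $a \in \nonblock$ says that no non-blocking action is enabled in $s$, which is exactly the definition of a $\block$-locked state. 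The \emph{exclusive property} is trivial, as $\llbracket\formoff{a}\rrbracket = \emptyset$, so $s \notin \llbracket\formoff{a}\rrbracket$ always holds.

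The only substantive step, and the one place where the structure of $\conc$ is actually used, is the \emph{persistent property}. Unfolding the instantiation, it asks: whenever $a \in \nonblock$ is enabled in $s$ and there is a finite path from $s$ to $s'$ on which no action in $\{b \mid a \nconc b\}$ occurs (the $\formoff{a}$ clause again being vacuous), then $a$ is enabled in $s'$. The hypothesis ``no action $b$ with $a \nconc b$ occurs on the path'' is equivalent to ``$a \conc b$ for every action $b$ occurring on the path'', so this is precisely the second clause of \autoref{def:concur-act}, the defining requirement of a concurrency relation on actions. Hence the persistent property is inherited directly from the assumption that $\conc$ is a valid concurrency relation, and no additional argument is needed. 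I do not anticipate a genuine obstacle here: the difficulty of this corollary is entirely one of bookkeeping, and the sole conceptual content is recognising that the persistent property is exactly the concurrency-relation axiom (and, relatedly, that irreflexivity of $\conc$ forces $a \in \actelim{a}$, matching the clause in \autoref{def:ja} that the eliminating action may be $a$ itself). With all four conditions verified and feasibility in hand, \autoref{thm:gen} yields the claim.
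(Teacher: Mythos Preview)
Your proposal is correct and follows essentially the same approach as the paper: both verify that the given $\formonsym$, $\formoffsym$, $\actelimsym$ satisfy the four properties of \autoref{def:gen} (with the persistent property coming from clause~2 of \autoref{def:concur-act} and the exclusive property being trivial since $\llbracket\fp\rrbracket=\emptyset$), invoke feasibility of $\block$-JA, and then apply \autoref{thm:gen}. Your write-up is slightly more explicit about why the persistent property matches the concurrency-relation axiom, but the argument is the same.
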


\subsection{Strong Fairness and Strong Hyperfairness}
SFA is not finitely realisable because we cannot observe within finitely many steps whether an action is relentlessly enabled: even if we observe several times that it is disabled, it may still be infinitely often enabled along the whole path.
Hence, we cannot use \autoref{mucalc:gen}.

Instead we observe that, on a path, actions that are not relentlessly enabled must eventually become perpetually disabled.
If the path is strongly fair, then all relentlessly enabled non-blocking actions occur infinitely often.
We can therefore say that a path is strongly fair if we can divide all non-blocking actions into two disjoint sets: those that occur infinitely often and those that eventually become perpetually disabled.
This observation is also made in \cite{stomp1989mu}, where a $\mu$-calculus formula for termination under strong fairness is given.

Using this idea, we give the following template formula for SFA:
\begin{equation}\label{mucalc:sfa}
        \neg\diam{\dpre\co\clos{\comp{\ddis}}} ( \diam{\den}\tp \lor \boxm{\nonblock}\fp \lor
        \bigvee_{\emptyset \neq F \subseteq \nonblock}\nu X. ( \bigwedge_{a \in F}\mu W. ( ( \bigwedge_{b \in \nonblock\setminus F}\boxm{b}\fp ) \land ( \diam{a \setminus\ddis}X \lor \diam{\comp{\ddis}}W ) ) ) )
\end{equation}
The use of negation, the exclusion of $\ddis$, and $\dpre$ in the diamond operator at the start of this formula are the same as in \autoref{mucalc:unfair}.
We explain the start of the formula after addressing the part starting with $\bigvee_{\emptyset \neq F \subseteq \nonblock}$.
Here, we use that on a strongly fair path, all non-blocking actions can be divided into those that occur infinitely often and those that become perpetually disabled.
The disjunction over subsets considers all possible ways of selecting some non-empty subset $F$ of $\nonblock$ that should occur infinitely often.
The greatest fixpoint states that infinitely often, all those actions must indeed occur within finitely many steps. Additionally, at no point may a non-blocking action not in $F$ be enabled. 
We exclude $F = \emptyset$ because the logic of the greatest fixed point formula we give relies on there being at least one $a$ in $F$. The special case that $F$ is empty and therefore a $\block$-locked state should be reached, is instead covered by explicitly considering $\boxm{\comp{\block}}\fp$ earlier in the formula.
Returning to the start of the formula, we allow a finite $\ddis$-free path before the greatest fixpoint is satisfied. The reason is that it may take several steps before all the non-blocking actions that are only finitely often enabled become perpetually disabled. 
Since we include a finite prefix already, we also add the cases that an action in $\den$ becomes enabled or that a $\block$-locked state is reached here, rather than deeper into the formula like in \autoref{mucalc:gen}.

\begin{restatable}{theorem}{thmsfa}\label{thm:sfa}
    An LTS satisfies \autoref{mucalc:sfa} if, and only if, its initial state does not admit $\block$-progressing paths that satisfy $\block$-strong fairness of actions and are $(\dpre,\ddis,\den)$-violating. 
\end{restatable}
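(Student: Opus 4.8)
Throughout, write $\Phi$ for the body $\diam{\den}\tp \lor \boxm{\nonblock}\fp \lor \bigvee_{\emptyset\neq F\subseteq\nonblock}\psi_F$ of \autoref{mucalc:sfa}, where $\psi_F = \nu X.\bigwedge_{a\in F}\chi_a(X)$ and $\chi_a(X) = \mu W.\big((\bigwedge_{b\in\nonblock\setminus F}\boxm{b}\fp) \land (\diam{a\setminus\ddis}X \lor \diam{\comp{\ddis}}W)\big)$. Taking the contrapositive, the statement is equivalent to: the initial state satisfies $\diam{\dpre\co\clos{\comp{\ddis}}}\Phi$ if and only if it admits a $\block$-progressing path that satisfies $\block$-strong fairness of actions and is $(\dpre,\ddis,\den)$-violating. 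The plan is to first derive a path characterisation of $\psi_F$, and hence of $\Phi$ and of the whole diamond formula, and then to prove each implication by a case analysis: on which disjunct of $\Phi$ is witnessed, respectively on the shape of the violating path.

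The technical heart is the characterisation of $\llbracket\psi_F\rrbracket$. Unfolding the least fixpoint shows that $s\in\llbracket\chi_a\rrbracket(X)$ iff there is a finite path from $s$, all of whose transitions carry labels outside $\ddis$ and whose last transition carries the label $a$ (so in particular $a\notin\ddis$), such that no non-final state enables an action in $\nonblock\setminus F$ and the final state lies in $X$; if $F\cap\ddis\neq\emptyset$ then $\llbracket\chi_a\rrbracket(X)$ and $\llbracket\psi_F\rrbracket$ are empty, so we may assume $F\subseteq\comp{\ddis}$. Feeding this into the greatest fixpoint and running the usual coinductive ``weaving'' argument, I would prove: $s\in\llbracket\psi_F\rrbracket$ if and only if there is an \emph{infinite} $\ddis$-free path from $s$ along which no state enables an action in $\nonblock\setminus F$ and every action in $F$ occurs infinitely often. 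The ``$\Leftarrow$'' part shows the set of such states is a post-fixpoint of $X\mapsto\bigwedge_{a\in F}\chi_a(X)$; the ``$\Rightarrow$'' part enumerates $F$ cyclically and at each stage peels off the $\chi_a$-segment up to the first occurrence of the current action, using that removing a finite prefix preserves ``occurs infinitely often'' and that every $\psi_F$-state itself satisfies all the conjuncts $\boxm{b}\fp$ (so segment endpoints behave). Reading off the leading diamond then yields: the initial state satisfies $\diam{\dpre\co\clos{\comp{\ddis}}}\Phi$ iff there is a path $\pi_{\mathit{pre}}\co\sigma$ with $\pi_{\mathit{pre}}$ matching $\dpre$ and $\sigma$ finite and $\ddis$-free, ending in a state $s''$ that (a) enables some action in $\den$, (b) is $\block$-locked, or (c) lies in $\llbracket\psi_F\rrbracket$ for some non-empty $F\subseteq\nonblock$.

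For ``formula implies path'': in case (a), take a $\den$-transition out of $s''$ and extend by a $\block$-progressing $\block$-SFA continuation — here invoking feasibility of $\block$-SFA on finite LTSs, which is readily verified (e.g.\ by a longest-waiting-first scheduler) — and note the composite infinite path is still $\block$-SFA, because a non-blocking action relentlessly enabled on any of its suffixes is relentlessly enabled on the appended tail and hence occurs there. In case (b) the finite path $\pi_{\mathit{pre}}\co\sigma$ works as it stands: it is $\block$-progressing and satisfies $\block$-SFA vacuously, since each of its suffixes ends in $s''$, which enables no non-blocking action. In case (c) append the infinite path from the $\psi_F$-characterisation and verify $\block$-SFA directly, using that along that tail the only non-blocking actions ever enabled lie in $F$ and all occur infinitely often. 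In every case the resulting path is $(\dpre,\ddis,\den)$-violating, because $\sigma$ — and, in case (c), the appended tail — is $\ddis$-free, so the path is $\ddis$-free up to the first occurrence of an action in $\den$ (which, if present, lies in $\sigma$ or is the chosen transition of case (a)).

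For ``path implies formula'', start from a $\block$-progressing $\block$-SFA $(\dpre,\ddis,\den)$-violating path $\pi_{\mathit{pre}}\co\pi_{\mathit{suf}}$. If an action in $\den$ occurs on $\pi_{\mathit{suf}}$, truncating just before its first occurrence yields a finite $\ddis$-free run from the end of $\pi_{\mathit{pre}}$ to a state enabling $\den$ — disjunct (a). Otherwise $\pi_{\mathit{suf}}$ is $\ddis$-free throughout; if it is finite, $\block$-progress makes its final state $\block$-locked — disjunct (b); if it is infinite, let $F$ be the set of non-blocking actions occurring infinitely often on it. Each action of $\nonblock\setminus F$ occurs only finitely often, hence is not relentlessly enabled on $\pi_{\mathit{suf}}$ (else $\block$-SFA would force infinitely many occurrences), so, $\actionset$ being finite, one may pass to a common suffix $\rho$ of $\pi_{\mathit{suf}}$ that never enables an action of $\nonblock\setminus F$. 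If $F\neq\emptyset$, then $\rho$ satisfies the conditions of the $\psi_F$-characterisation, so its first state lies in $\llbracket\psi_F\rrbracket$ — disjunct (c); if $F=\emptyset$, every state of $\rho$ is $\block$-locked — disjunct (b). In each subcase the finite $\ddis$-free run from the end of $\pi_{\mathit{pre}}$ to the witnessing state establishes $\diam{\dpre\co\clos{\comp{\ddis}}}\Phi$. I expect the $\psi_F$-characterisation to be the main obstacle: pinning down exactly which states along a witnessing $\chi_a$-segment must satisfy the conjuncts $\boxm{b}\fp$, and showing the weaving genuinely produces an infinite path on which \emph{every} action of $F$ recurs rather than stalling. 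A pervasive secondary subtlety, in both directions, is the handling of suffixes straddling the $\dpre$-prefix of a path; these are dealt with uniformly by noting that a non-blocking action relentlessly enabled on such a suffix is already relentlessly enabled on the infinite fair tail — which is exactly where finiteness of $\actionset$ and the definition of $\block$-SFA enter.
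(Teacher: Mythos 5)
Your proposal is correct and follows essentially the same route as the paper: characterise the inner least fixpoint as a finite $\ddis$-free segment ending in an $a$-transition with the $\boxm{b}\fp$ side-conditions, characterise the greatest fixpoint as an infinite $\ddis$-free path on which every action of $F$ recurs and no action of $\nonblock\setminus F$ is ever enabled, split on the three disjuncts, and use feasibility of $\block$-SFA to extend finite witnesses. The only difference is presentational: the paper proves a single generic theorem for a predicate $P_S$ parameterised by a mapping $\formdissym$ (so that the same argument also yields the strong-hyperfairness case) and then separately shows $P_S$ with $\formdis{b}=\boxm{b}\fp$ coincides with $\block$-SFA, whereas you inline that translation directly.
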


Due to the quantification over subsets, the formula is exponential in the number of actions in $\nonblock$. 
Beyond small models, it is therefore not practical.
However, it can serve as a basis for future work.
For instance, if fairness is applied to sets of actions rather than individual actions, the formula is exponential in the number of sets instead, which may be smaller depending on how the sets are formed \cite{spronck2023fairness}.

We can adapt the formula for strong fairness to a formula for strong hyperfairness, by replacing perpetual disabledness of non-blocking actions not in $F$ with perpetual unreachability.
\begin{equation}\label{mucalc:shfa}
        \neg\diam{\dpre\co\clos{\comp{\ddis}}} ( \diam{\den}\tp \lor \boxm{\nonblock}\fp \lor
        \bigvee_{\emptyset \neq F \subseteq \nonblock}\nu X. (\bigwedge_{a \in F}\mu W. ( ( \bigwedge_{b \in \nonblock\setminus F}\boxm{\clos{\nonblock}\co b}\fp ) \land ( \diam{a \setminus\ddis}X \lor \diam{\comp{\ddis}}W ) ) ) )
\end{equation}

\begin{restatable}{theorem}{thmshfa}\label{thm:shfa}
    An LTS satisfies \autoref{mucalc:shfa} if, and only if, its initial state does not admit a $\block$-progressing path that satisfies strong $\block$-hyperfairness of actions and is $(\dpre,\ddis,\den)$-violating. 
\end{restatable}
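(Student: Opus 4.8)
The plan is to mirror the proof of \autoref{thm:sfa}, exploiting that \autoref{mucalc:shfa} is obtained from \autoref{mucalc:sfa} by replacing each subformula $\boxm{b}\fp$ (``$b$ is disabled'') with $\boxm{\clos{\nonblock}\co b}\fp$ (``$b$ is not $\block$-reachable''), exactly as \autoref{def:shfa} is obtained from \autoref{def:sfa} by replacing relentless enabledness with relentless $\block$-reachability. The first step is to establish the reachability version of the combinatorial reformulation underpinning the SFA argument: a $\block$-progressing path $\pi$ satisfies $\block$-SHFA if, and only if, $\nonblock$ can be partitioned into a set $F$ of actions that occur infinitely often on $\pi$ and a complementary set of actions that are \emph{eventually perpetually $\block$-unreachable} on $\pi$, i.e.\ for each of which some suffix of $\pi$ has no state from which the action is $\block$-reachable. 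The forward implication applies \autoref{def:shfa} to the suffix past the last occurrence of a finitely-occurring non-blocking action to conclude it is not relentlessly $\block$-reachable there, hence nowhere $\block$-reachable on a further suffix; since $\actionset$ is finite, a single suffix can be chosen that works simultaneously for all of $\nonblock\setminus F$, using that ``nowhere $\block$-reachable on a suffix'' is inherited by sub-suffixes. The converse is the same short case analysis as for SFA.

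For the ``only if'' direction, assume \autoref{mucalc:shfa} fails, so there is a path matching $\dpre$ followed by a $\ddis$-free segment to a state $s$ satisfying the inner disjunction. If $s\models\diam{\den}\tp$, take the $\den$-transition and extend to a $\block$-SHFA path using feasibility of $\block$-SHFA for finite LTSs (a finite $\block$-SHFA path necessarily ends in a $\block$-locked state, hence is $\block$-progressing); the result is $(\dpre,\ddis,\den)$-violating via condition~3(b). If $s\models\boxm{\nonblock}\fp$, then $s$ is $\block$-locked and the finite path reaching $s$ is violating via condition~3(a). Otherwise the $\nu X$-disjunct holds for some $\emptyset\neq F\subseteq\nonblock$; unfolding $\nu X$ and cycling through $a\in F$ inside the inner $\mu W$'s produces an infinite $\ddis$-free continuation on which every $a\in F$ occurs infinitely often and \emph{every} visited state satisfies $\bigwedge_{b\in\nonblock\setminus F}\boxm{\clos{\nonblock}\co b}\fp$, so no $b\in\nonblock\setminus F$ is $\block$-reachable from any state of that suffix; by the reformulation the resulting infinite path is $\block$-SHFA, $\block$-progressing, and—being $\ddis$-free—violating via condition~3(a).

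For the ``if'' direction, assume a $\block$-progressing $\block$-SHFA $(\dpre,\ddis,\den)$-violating path $\pi=\pi_{\mathit{pre}}\co\pi_{\mathit{suf}}$ exists, with $\pi_{\mathit{pre}}$ matching $\dpre$. If $\den$ occurs in $\pi_{\mathit{suf}}$, its $\ddis$-free prefix before the first $\den$-occurrence reaches a state satisfying $\diam{\den}\tp$. If $\pi_{\mathit{suf}}$ is $\den$-free and finite, it ends (by $\block$-progression) in a $\block$-locked state satisfying $\boxm{\nonblock}\fp$. If $\pi_{\mathit{suf}}$ is $\den$-free and infinite, let $F$ be the set of non-blocking actions occurring infinitely often on $\pi_{\mathit{suf}}$; by the reformulation there is a suffix $\sigma$ of $\pi_{\mathit{suf}}$ on which every $b\in\nonblock\setminus F$ is nowhere $\block$-reachable. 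If $F=\emptyset$ the first state of $\sigma$ is then $\block$-locked and $\boxm{\nonblock}\fp$ applies; if $F\neq\emptyset$, a Knaster--Tarski argument with $Z$ the set of states of $\sigma$ shows the first state of $\sigma$ satisfies $\nu X.\bigwedge_{a\in F}\mu W.(\cdots)$, using that $\sigma$ is $\ddis$-free (so each $a\in F$ is witnessed by an $a\setminus\ddis$-step while the remaining steps stay within the $\mu W$-stages), that every state of $\sigma$ satisfies $\bigwedge_{b\in\nonblock\setminus F}\boxm{\clos{\nonblock}\co b}\fp$, and that each $a\in F$ recurs infinitely often. In every case the path from the initial state to the witnessing state is $\pi_{\mathit{pre}}$ (matching $\dpre$) followed by a $\ddis$-free segment, so $\diam{\dpre\co\clos{\comp{\ddis}}}(\cdots)$ holds there.

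The main obstacle is the interaction of \emph{blocking} transitions with $\block$-reachability: unlike enabledness, an action's $\block$-reachability need not persist when moving backwards along a blocking step, so one cannot argue ``$b$ becomes perpetually $\block$-unreachable and then stays so'' by plain invariance. The resolution, which must be made precise in both directions, is that \autoref{mucalc:shfa} asserts $\boxm{\clos{\nonblock}\co b}\fp$ at \emph{every} state of the witnessed suffix—not merely at the fixpoint re-entry points—which is exactly what is needed to pass from these local checks to non-$\block$-reachability of $b$ along a whole suffix; in the reformulation lemma the analogous point is sidestepped by phrasing ``eventually perpetually $\block$-unreachable'' as the existence of a suffix on which $b$ is \emph{nowhere} $\block$-reachable, a property closed under taking sub-suffixes. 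Beyond this, the only external ingredient is a feasibility lemma for $\block$-SHFA on finite LTSs, needed solely to close the $\diam{\den}\tp$ case.
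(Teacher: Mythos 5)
Your proposal is correct and follows essentially the same route as the paper: the same reformulation of $\block$-SHFA as a partition of $\nonblock$ into infinitely-occurring actions and actions that become perpetually $\block$-unreachable on a suffix, the same three-way case split on the disjuncts $\diam{\den}\tp$, $\boxm{\nonblock}\fp$ and the $F$-indexed greatest fixpoints, the same appeal to feasibility of $\block$-SHFA, and the same observation that the formula enforces $\boxm{\clos{\nonblock}\co b}\fp$ at every state of the witnessed suffix. The only (organisational) difference is that the paper factors the argument through a generic theorem parameterised by an abstract mapping $\formdissym$ and then instantiates it for SFA and SHFA, whereas you prove the SHFA case directly.
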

Since we are not aware of other completeness criteria that fit the same structure, we do not provide a generalised formula here like we did with \autoref{mucalc:gen}\forfull{, although we do prove a more general theorem in \autoref{app:sgen-proof}}.

\section{Application Example}\label{app:example}
We here give an example of an application of the template formulae.
In \cite{groote2021tutorial}, several mutual exclusion algorithms are analysed using the mCRL2 toolset. 
Their analysis of Dekker's algorithm \cite{dijkstra1962over} presents the following modal $\mu$-calculus formula for starvation freedom of processes with id's 0 and 1. For clarity, the notation has been adjusted to match the previous sections and action names have been simplified. 
\begin{equation}\label{mucalc:tutorial}
    \boxm{\clos{\allact}} \bigwedge_{i \in \{0,1\}} \boxm{ \{ \actname{wish\_flag}(i,b) \mid b \in \mathbb{B}\}}\mu X. (\boxm{\comp{\actname{enter}(i)}}X \land \diam{\allact}\tp)
\end{equation} 
Starvation freedom is a \pattern{global response} property. In this case, the starvation freedom of a process $i$ is represented as an instantiation of the pattern with $S_q = \{\actname{wish\_flag}(i,b) \mid b \in \mathbb{B}\}$ and $S_r = \{\actname{enter}(i)\}$.
Indeed, the above formula is equivalent to:
\begin{equation}
    \bigwedge_{i \in \{0,1\}}\neg\diam{\clos{\allact} \co \{\actname{wish\_flag}(i,b) \mid b \in \mathbb{B}\}}\nu X. (\diam{\noact}\tp \lor \boxm{\allact}\fp \lor \diam{\comp{\actname{enter}(i)}}X)
\end{equation}
Observe that, when taking $\block = \emptyset$, the above matches a conjunction of two instances of \autoref{mucalc:unfair}, taking $\dpre$, $\den$ and $\ddis$ as suggested in \autoref{tab:variables} for \pattern{global response}.
Thus, this formula captures starvation freedom under $\emptyset$-progress. 
In \cite{groote2021tutorial}, it is reported that mCRL2 finds a violating path for this formula; a path which the authors note is unfair. The exact fairness assumption considered is not made concrete. 
As an ad-hoc solution, the modal $\mu$-calculus formula is adjusted to specifically ignore that counterexample.
Subsequently, mCRL2 finds another counterexample, which the authors again claim is unfair. Instead of creating yet another formula, they move on to Peterson's algorithm, which is deemed easier to analyse.
Using our template formulae, we can easily produce a formula for starvation freedom under several different completeness criteria. We give the formula for $\emptyset$-WFA, as an example.
\begin{multline}
    \bigwedge_{i \in \{0,1\}}\neg\diam{\clos{\allact} \co \{\actname{wish\_flag}(i,b) \mid b \in \mathbb{B}\}}\\
    \nu X. (\bigwedge_{a \in \allact}(\diam{a}\tp \imps 
    \diam{\clos{\comp{\actname{enter}(i)}}}(\diam{\noact}\tp \lor (\boxm{a}\fp \land X) \lor 
    \diam{a \setminus \actname{enter}(i)}X)))    
\end{multline}
We check this formula on the model from \cite{groote2021tutorial} using mCRL2.
Since mCRL2 only supports quantification over data parameters and not over actions, the conjunction over $\actionset$ must be written out explicitly.
The tool reports that the formula is violated. Examining the counterexample reveals this is because actions in the model do not show which process performs the action. Therefore, process $i$ reading value $v$ from a register $r$ is labelled with the same action as process $j$ reading $v$ from $r$. 
We add the responsible process to each action label, and also define $\block = \{\actname{wish\_flag}(i,i,b) \mid i \in \{0,1\}, b \in \mathbb{B}\}$, to capture that processes are allowed to remain in their non-critical section indefinitely. This was not considered in \autoref{mucalc:tutorial}, but it is part of the mutual exclusion problem \cite{dijkstra65,glabbeek2023modelling}.
The tool reports that the modified formula is satisfied. We can therefore conclude that Dekker's algorithm satisfies starvation freedom when assuming weak fairness of actions, as long as it is taken into account for each action which process is responsible for it.

Our other formulae can be used in similar ways.
An example of how to use the justness formula in mCRL2, including a method for encoding the concurrency relation, is given in \cite{bouwman2020off}.

\section{Discussion}\label{subsec:whyliveness}
In this section, we briefly reflect on the coverage of the properties we consider, and our choice in focusing on the modal $\mu$-calculus.

Firstly, we have exclusively addressed liveness properties in this paper thus far. As indicated previously, the problem we are considering primarily crops up for these properties.
This is because, as pointed out in \cite{glabbeek2023modelling}, when a completeness criterion is feasible, assuming the criterion holds true or not has no impact on whether a safety property is satisfied or not.
The reason is that for safety properties on paths, any path that violates the property must contain a finite prefix such that any extension of that prefix also violates the property \cite{lamport1985basic}.
Therefore, if a completeness criterion is feasible, then whenever a model contains incomplete paths that violate a safety property it also contains complete paths that violate the property.
All completeness criteria discussed in \autoref{sec:completeness} are feasible with respect to finite LTSs, and hence we do not need to consider patterns that capture safety properties.
For modal $\mu$-calculus formulae for the safety properties of PSP, without integrated completeness criteria, we refer to \cite{CADPRAFMC} and \cite{remenska2016bringing}.
For properties that are a combination of safety and liveness, the components can be turned into separate formulae and checked separately.

Readers may also wonder about alternative methods of representing properties under completeness criteria, such as using LTL.  
As indicated in \autoref{sec:motivation}, there are many contexts where we also want to consider non-linear properties, and hence the modal $\mu$-calculus is preferred.
Automatic translations from LTL to the modal $\mu$-calculus exist, but can be exponential in complexity \cite{cranen2011linear} and it is unclear at this time if this blow-up is avoided in this case.
Anecdotal evidence \cite{pol2008multi-core} suggests this is not the case for existing translations.
In \cite{glabbeek2023modelling} several completeness criteria are represented in LTL, but it is noted that this translation requires introducing new atomic propositions which hides the complexity of this translation.
The representation of hyperfairness in particular may be expensive, since atomic propositions for all reachable actions are required.
It is also unclear how to combine LTL-based translations effectively with symbolic model checking approaches.
For these reasons, a direct representation in the modal $\mu$-calculus is preferable.

\section{Conclusion}\label{sec:conclusion}

In this paper, we have presented formulae for liveness properties under several completeness criteria. 
As part of this, we defined a property template that generalises the liveness properties of PSP, which has been estimated to cover a majority of properties found in the literature \cite{dwyer1999patterns}. 
The completeness criteria covered are progress, justness, weak fairness, strong fairness, and hyperfairness, all defined with respect to actions and parameterised with a set of blocking actions.
The formulae have all been manually proven to be correct.

\altfull{For future work, one goal is to formalise the proofs in the appendices using a proof assistant.}{For future work, one goal is to formalise our manual proofs using a proof assistant.}
Another avenue for future work is extending our formulae to cover a wider range of completeness criteria and properties. We suggest some potential extensions here.

One of our contributions is the identification of a shared common structure underlying justness, weak fairness and weak hyperfairness: they are finitely realisable path predicates.
Our formula for such predicates can be adapted to arbitrary feasible finitely realisable path predicates.
While we do not have such a generic formula for other completeness criteria, our characterisation of $(\dpre,\ddis,\den)$-violating paths can be used as a basis to express the non-existence of complete paths violating many common properties for different notions of completeness as well, as we demonstrate with strong fairness and strong hyperfairness.
We are especially interested in extending our formulae to allow fairness over sets of actions, rather than individual actions, similar to the task-based definitions from \cite{glabbeek2019progress}.

In terms of properties, we can look at proposed extensions of PSP, such as those suggested in \cite{cobleigh2006user}. There is also the \pattern{constrained chain} behaviour, which is a modification of precedence chain and response chain given in \cite{dwyer1999patterns}. 
There are extensions of PSP to real-time \cite{bellini2009expressing,konrad2005real} and probabilistic \cite{grunske2008specification} contexts as well.
Finally, in \cite{bouwman2023supporting} the formula from \cite{bouwman2020off} that formed the basis of \autoref{mucalc:gen} is extended to also include state information.

There are therefore many potentially useful extensions of the formulae presented in this paper. However, the presented template formulae already cover many completeness criteria and liveness properties, making them  useful for model checking in practice.

\bibliography{bib2doi}

\altfull{

\newpage
\appendix

\section{Property Patterns}\label{app:psp}
Here we recall the behaviours and scopes presented in \cite{dwyer1999patterns}.
The original presentation is not restricted to a particular logic and the patterns allow behaviour and scopes to be defined based on both states and actions.
We give the definitions specifically with respect to occurrences of actions, since those are the properties we consider in this paper.
We use $S_a$ (``after''), $S_b$ (``before''), $S_q$ (``query'') and $S_r$ (``required''/``response'') as placeholder names for property-specific sets of actions. We use $k$ for an arbitrary natural number.

The following behaviours are given:
\begin{itemize}
    \item \textbf{Absence}: no action in $S_r$ may occur.
    \item \textbf{Existence}: some action in $S_r$ must occur.
    \begin{itemize}
        \item \textbf{Existence at least/at most/exactly}: there must be at least/at most/exactly $k$ occurrences of actions in $S_r$. The \pattern{existence} pattern is an instantiation of \pattern{existence at least} with $k = 1$.
    \end{itemize}
    \item \textbf{Universality}: only actions in $S_r$ occur.
    \item \textbf{Precedence}: an occurrence of an action in $S_r$ must always be preceded by an occurrence of an action in $S_q$.
    \begin{itemize}
        \item \textbf{Chain precedence}: if actions from the sets $S_{r_0}$, $S_{r_1}$, $\ldots$, $S_{r_n}$ occur in that order (potentially with other actions in-between), then they must have been preceded by occurrences of actions from the sets $S_{q_0}$, $S_{q_1}$, $\ldots$, $S_{q_m}$, in that order.
    \end{itemize}
    \item \textbf{Response}: an occurrence of an action in $S_q$ must be followed by the occurrence of an action from $S_r$.
    \begin{itemize}
        \item \textbf{Chain response}: if actions from the sets $S_{q_0}$, $S_{q_1}$, $\ldots$, $S_{q_n}$ occur in that order (potentially with other actions in between), then they must be followed by occurrences of actions from the sets $S_{r_0}$, $S_{r_1}$, $\ldots$, $S_{r_m}$, in that order.
    \end{itemize}
\end{itemize}
All of the behaviours only need to hold within the chosen scope.
The following scopes are given:
\begin{itemize}
    \item \textbf{Global}: the full path.
    \item \textbf{Before}: the prefix of the path before the first occurrence of an action in $S_b$. If no action in $S_b$ occurs on the path, then the behaviour does not need to be satisfied anywhere.
    \begin{itemize}
        \item \textbf{Until}:  same as \pattern{before}, except that if no action in $S_b$ occurs, then the behaviour needs to hold on the full path.
    \end{itemize}
    \item \textbf{After}: the suffix of the path after the first occurrence of an action in $S_a$. If no action in $S_a$ occurs on the path, then the behaviour does not need to be satisfied anywhere.
    \item \textbf{Between}: every subpath of the path that starts after an occurrence of an action in $S_a$ and ends before the first following occurrence of an action in $S_b$. If there is an occurrence of an action in $S_a$ that is not eventually followed by an action in $S_b$, the behaviour does not need to be satisfied after that $S_a$. This combines \pattern{after} and \pattern{before}, but unlike the default \pattern{after} scope considers any occurrence of $S_a$, not merely the first.
    \begin{itemize}
        \item \textbf{After-until}: same as \pattern{between}, except that if there is an occurrence of an action in $S_a$ that is not eventually followed by an action in $S_b$, the behaviour still needs to be satisfied after that occurrence of $S_a$. This combines \pattern{after} and \pattern{until}.
    \end{itemize}
\end{itemize}
The \pattern{until} scope does not appear in \cite{dwyer1999patterns}, but \pattern{after-until} does.
We include the \pattern{until} scope from \cite{remenska2016bringing}, there called \pattern{before-variant}, because it can be seen as a simpler form of \pattern{after-until}.

We only consider liveness properties, so we must ask which combinations of behaviour and scope result in liveness properties.
To make this judgement, we need a formal definition of what makes a property a safety or liveness property.
For the purposes of this paper, since all the properties we consider are defined on occurrences of actions, we can use the following definition of a property:
\begin{definition}
    A \emph{property} is a set of sequences of actions. A path $\pi$ \emph{satisfies} the property if its sequence of actions is in the set, otherwise it \emph{violates} the property.
\end{definition}
We adapt the formal definitions of safety and liveness properties from \cite{lamport1985basic} and \cite{alpern2985defining} respectively to this definition of properties.
\begin{definition}\label{def:safety}
    A property $P$ is a \emph{safety} property if, and only if, every infinite sequence of actions not in $P$ has a finite prefix that is not in $P$.
\end{definition}
The consequence of this is that an infinite path that violates a safety property always has a finite prefix that violates it as well.
\begin{definition}\label{def:liveness}
    A property $P$ is a \emph{liveness} property if, and only if, for every finite sequence of actions $\mathit{f}$ there exists some infinite sequence of actions $\mathit{f}'$ such that $\mathit{f}\mathit{f}'$ is in $P$.
\end{definition}
In terms of paths, this means that for every finite path $\pi$ that violates a liveness property, there exists an infinite path $\pi'$ of which $\pi$ is a prefix that satisfies the property. Note that it is not required for an LTS that admits $\pi$ to also admit $\pi'$, only that such an extension could be made.
We now discuss which patterns form liveness properties.

First, we note that the \pattern{before} and \pattern{between} scopes will turn every behaviour into a safety property: whenever an action in $S_b$ occurs the behaviour should be satisfied before that occurrence, hence every path that violates the property will have a finite prefix, ending with the first occurrence of an action in $S_b$, that also violates the property.
The \pattern{global}, \pattern{until}, \pattern{after} and \pattern{after-until} scopes remain relevant.
Under these four scopes, the behaviours \pattern{absence}, \pattern{existence at most}, \pattern{universality}, \pattern{precedence} and \pattern{chain precedence} will always result in safety properties as well.
For each of these behaviours, some actions may not occur under certain circumstances (be it at all, after there have already been a number of occurrence of those actions, or when some other actions have not yet occurred), therefore every violating path has a finite prefix, ending with the occurrence of such an action, that also violates the path.
This leaves us with \pattern{existence}, \pattern{existence at least}, \pattern{existence exactly}, \pattern{response} and \pattern{chain response}.

We further drop \pattern{existence exactly} since it is merely a conjunction of \pattern{existence at least} and \pattern{existence at most}.
Both parts of the pattern can be expressed separately, so a separate formula for \pattern{existence exactly} is superfluous.

We could apply a similar argument to the \pattern{until} scope: it is merely a combination of \pattern{global} and \pattern{before}. 
Saying that an action in $S_r$ has to occur until $S_b$ (\pattern{existence until}), for instance, is the same as saying that an action in $S_r$ has to occur at all (\pattern{global existence}) and that if there is an occurrence of $S_b$, there must be an occurrence of $S_r$ before it (\pattern{existence before}).
This extends to \pattern{after-until} as well, although it requires a bit more care than simply combining \pattern{after} and \pattern{between}, since \pattern{after} always applies to the first occurrence of an action in $S_a$, whereas \pattern{after-until} refers to every occurrence. 
This could be achieved with minor modifications to the patterns.
However, it turns out we can relatively easily incorporate the \pattern{until} and \pattern{after-until} scopes into our formulae, so we include them for convenience.

\section{Representing \pattern{Chain Response}}\label{app:chain-resp}
We here illustrate how the \pattern{chain response} behaviour can be represented using our template formulae by combining several \pattern{response} formulae.

Consider, for example, sequences of two sets each: if an occurrence of $S_{q_0}$ is eventually followed by an occurrence of $S_{q_1}$, then there must subsequently be an occurrence of $S_{r_0}$ followed by $S_{r_1}$. 
There are two possible violating paths here: either $S_{q_0}, S_{q_1}$ is not followed by $S_{r_0}$, or $S_{q_0},S_{q_1},S_{r_0}$ is not followed by $S_{r_1}$.
These two violations cannot be slotted directly into the form of a single $(\dpre,\ddis,\den)$-violating path. 
Instead, we have two different $(\dpre,\ddis,\den)$-violating paths.
In general, if we have chain response with $S_{q_0}$ to $S_{q_n}$ and $S_{r_0}$ to $S_{r_m}$ then we get $m$ different violating paths. Specifically, for all $0 \leq i \leq m$, we get a violating path that consists of a sequence $S_{q_0}$, $\ldots$, $S_{q_n}$, $S_{r_0}$, $\ldots$, $S_{r_{i-1}}$ that may not be followed by an occurrence of $S_{r_i}$. For convenience, we write such violating paths as a sequence $S_0$, $S_1$, $\ldots$, $S_n$ that may not be followed by $S_{n+1}$. 

Such violating paths can be expressed through $\dpreb = \clos{\comp{\den}} \co S_0 \co \clos{\comp{\den \cup S_1}} \co S_1 \co \clos{\comp{\den \cup S_2}} \co S_2 \ldots \co \clos{\comp{\den\cup S_n}} \co S_n$, and $\ddis = S_{n+1}$. Each violating path must be given its own formula, where \autoref{tab:scopes} is still used for the scope, and all resulting formulae placed in conjunction. This way \pattern{chain-response} can be represented.

\begin{example}
    Say we want to express \pattern{chain response} with the scope \pattern{after-until} under WFA, and we take the chain that an occurrence of an action in $S_{q_0}$, if followed by an action in $S_{q_1}$, needs to be followed by an action in $S_{r_0}$ and then by an action in $S_{r_1}$.
    The violating paths are occurrences of an actions in $S_{q_0}, S_{q_1}$ not followed by an action $S_{r_0}$, and occurrences of actions in $S_{q_0}, S_{q_1}, S_{r_0}$ with no subsequent occurrence of an action in $S_{r_1}$.
    Both must be after the first occurrence of an action in $S_a$, and before the next occurrence of an action in $S_b$.
    The formula we need is then:
    \begin{align*}
        &\neg(\diam{\clos{\allact} \co S_a \co \clos{\comp{S_b}} \co S_{q_0} \co \clos{\comp{S_b \cup S_{q_1}}} \co S_{q_1}}\\
        &\quad\nu X.(\bigwedge_{a \in \nonblock}(\diam{a}\tp \imps\diam{\clos{\comp{S_{r_0}}}}(\diam{S_b}\tp \lor (\boxm{a}\fp \land X) \lor \diam{a \setminus S_{r_0}}X))))\\
        & \land\\
        &\neg(\diam{\clos{\allact} \co S_a \co \clos{\comp{S_b}} \co S_{q_0} \co \clos{\comp{S_b \cup S_{q_1}}} \co S_{q_1} \co \clos{\comp{S_b \cup S_{r_0}}} \co S_{r_0}}\\
        & \quad\nu X.(\bigwedge_{a \in \nonblock}(\diam{a}\tp \imps\diam{\clos{\comp{S_{r_1}}}}(\diam{S_b}\tp \lor (\boxm{a}\fp \land X) \lor \diam{a \setminus S_{r_1}}X))))\\
    \end{align*}
\end{example}

\section{Proofs of Feasibility}\label{app:props}
In \autoref{subsec:gen} we claimed WFA, WHFA and JA are feasible with respect to finite LTSs. 
In the proof of the SFA and SHFA formulae, we will need feasibility of SFA and SHFA as well.
In this appendix, we give those proofs.

All our proofs assume a fixed LTSC $M = (\states,\initstate,\actionset,\transrel,\conc)$, although the $\conc$ is only relevant for JA.
We also refer to an arbitrary environment $\env$, and set of blocking actions $\block \subseteq \actionset$.
When we refer to an arbitrary state or transition in a path in our proofs, it should be understood that we are referring to specific occurrences of those states and transitions unless explicitly stated otherwise.

Recall \autoref{def:feas}:
\deffeas*

\begin{proposition}\label{prop:wfa-feasible}
    $\block$-weak fairness of actions is feasible.
\end{proposition}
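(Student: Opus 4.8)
The goal is to show that for a fixed finite LTSC $M$, every finite path $\pi$ can be extended to a path $\pi'$ in $M$ that satisfies $\block$-weak fairness of actions. The natural strategy is a round-robin (fair scheduling) construction: starting from the final state of $\pi$, repeatedly pick an enabled non-blocking action that has been "waiting longest" and fire one of its transitions, so that no non-blocking action can remain perpetually enabled without eventually being taken. If at some point a $\block$-locked state is reached, the construction stops and the resulting finite path is trivially $\block$-WFA (and $\block$-progressing), so assume this does not happen and the construction yields an infinite path.

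Concretely, I would fix an enumeration $a_1, \dots, a_n$ of $\nonblock$ and build $\pi'$ in stages. Maintaining a cyclic pointer through this list, at each step look at the current state $s$: if $s$ is $\block$-locked, stop; otherwise advance the pointer to the next non-blocking action $a_j$ that is enabled in $s$ (such an action exists because $s$ is not $\block$-locked), fire one $a_j$-transition, and move the pointer past $a_j$. This guarantees: every time the pointer completes a full cycle, every non-blocking action that was enabled throughout that cycle has been executed. Hence if some $a \in \nonblock$ were perpetually enabled on some suffix $\pi''$ of $\pi'$ without occurring in $\pi''$, then after the pointer next reaches $a$ (which happens within $n$ steps, since each step advances the pointer) the action $a$ would be fired — contradiction. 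This shows that on every suffix, every perpetually enabled non-blocking action occurs, which is exactly Definition \ref{def:wfa}.

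I would structure the write-up as: (1) if the greedy construction terminates in a $\block$-locked state, the finite extension vacuously satisfies $\block$-WFA since no non-blocking action is perpetually enabled on any suffix of a finite path ending in a $\block$-locked state — indeed any perpetually enabled action would be enabled in the final state, contradicting $\block$-lockedness; (2) otherwise the construction produces an infinite path $\pi'$ extending $\pi$; (3) argue by contradiction as above that $\pi'$ satisfies $\block$-WFA. The main obstacle — though it is minor — is setting up the round-robin pointer carefully enough that the "within $n$ steps the pointer revisits $a$" claim is airtight: one must ensure the pointer genuinely advances at every step and wraps around, so that the time to revisit any fixed action is bounded, independent of which transitions happen to be enabled. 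Finiteness of $\actionset$ (hence of $\nonblock$) is essential here and is exactly the standing assumption of the paper. No subtlety arises from the concurrency relation $\conc$, as it plays no role in the definition of WFA.
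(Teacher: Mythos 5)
Your construction is correct, but it takes a genuinely different route from the paper. The paper's proof is essentially a citation: it invokes Theorem 6.1 of \cite{glabbeek2019progress}, which gives feasibility of weak fairness of actions for $\block = \emptyset$ whenever only countably many actions are enabled in each state (satisfied here since $\actionset$ is finite), and then observes that any $\emptyset$-WFA path is automatically $\block$-WFA for arbitrary $\block$, because $\block$-WFA only demands occurrences of the \emph{non-blocking} perpetually enabled actions. Your round-robin scheduler instead proves the statement from scratch: the cyclic-pointer argument (the pointer advances at least one position per step and cannot pass an enabled action without firing it, so a perpetually enabled $a$ is fired within $|\nonblock|$ steps) is sound, and your treatment of the terminating case via $\block$-lockedness is exactly right. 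What the paper's approach buys is brevity and reuse of an established result; what yours buys is self-containedness and uniformity: it is in the same spirit as the queue-based constructions the paper itself uses for the feasibility of WHFA and JA (\autoref{lem:whfa-more-feasible} and \autoref{prop:ja-feasible}), so your argument could serve as a drop-in replacement if one wanted to avoid the external citation. The only point to tighten in a full write-up is the one you already flag — making precise that the pointer's cyclic advance is strict at every step and that it cannot skip an enabled action — plus a one-line remark that a suffix beginning inside the original finite prefix inherits its perpetually enabled actions on the sub-suffix starting at the junction state, so the occurrence produced in the extension also witnesses the requirement for that longer suffix.
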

\begin{proof}
    It is proven in \cite[Theorem 6.1]{glabbeek2019progress} that if only countably many actions are enabled in each state of a transition system, then weak fairness of actions with $\block = \emptyset$ is feasible. 
    We have assumed a finite set $\actionset$, hence this theorem applies in our case.
    This means that every finite path can be extended to a path that is $\emptyset$-weakly fair.
    A path that satisfies $\emptyset$-WFA also satisfies $\block$-WFA for arbitrary $\block$, since $\emptyset$-WFA requires all actions in $\actionset$ to occur in suffixes that they are perpetually enabled in, and $\actionset$ is a superset of $\nonblock$. 
    We conclude that $\block$-weak fairness of actions is feasible.
\end{proof}

\begin{proposition}\label{prop:sfa-feasible}
    $\block$-strong fairness of actions is feasible.
\end{proposition}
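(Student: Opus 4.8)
Fix the blocking set $\block$, a finite LTS $M$, and an arbitrary finite path $\pi$ of $M$ ending in a state $s$. The plan is to reduce the claim to constructing one good continuation from $s$, and then to build that continuation by a fair round-robin schedule that \emph{navigates} towards actions before firing them.

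First I would record an elementary ``prefix absorption'' fact: if $\rho$ is any path starting in $s$ that satisfies $\block$-SFA, then $\pi\co\rho$ (a valid path, since $\pi$ ends in the first state of $\rho$) also satisfies $\block$-SFA. Indeed, a suffix $\sigma$ of $\pi\co\rho$ is either already a suffix of $\rho$ --- handled by assumption --- or of the form $\tau\co\rho$ with $\tau$ a suffix of $\pi$; in the latter case every suffix of $\rho$ is a suffix of $\sigma$, so any non-blocking action relentlessly enabled in $\sigma$ is relentlessly enabled in $\rho$ and hence occurs in $\rho$, which is a suffix of $\sigma$. Thus it suffices to show: from every state $s$ there is a path starting in $s$ that satisfies $\block$-SFA.

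For this, fix an enumeration $a_1,\ldots,a_k$ of $\nonblock$ (if $\nonblock=\emptyset$ the one-state path $s$ works, since then every state is $\block$-locked). Build a path from $s$ in rounds, keeping a current state $s_{\mathit{cur}}$ (initially $s$) and cycling a counter through $1,\ldots,k$; the round handling $a_c$ does the following: if $s_{\mathit{cur}}$ is $\block$-locked, stop and output the path constructed so far; otherwise, if some state enabling $a_c$ is reachable from $s_{\mathit{cur}}$, append a finite path from $s_{\mathit{cur}}$ to such a state $t$ followed by a transition $t\xrightarrow{a_c}t'$ and set $s_{\mathit{cur}}:=t'$; if no such state is reachable, append nothing. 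If the construction halts, its output is a finite path ending in a $\block$-locked state, hence no non-blocking action is enabled in that final state and none is relentlessly enabled in any suffix, so $\block$-SFA holds vacuously. If it never halts, then at the start of every round $s_{\mathit{cur}}$ is not $\block$-locked, so some $a_i$ is enabled in $s_{\mathit{cur}}$ itself and is fired within the next $k$ rounds; since each round appends only finitely many transitions ($M$ being finite), at least one transition is appended per block of $k$ rounds and the resulting path $\rho$ is infinite. To see $\rho$ satisfies $\block$-SFA, take a suffix $\sigma$ and a non-blocking $a_j$ relentlessly enabled in $\sigma$: the states occurring as $s_{\mathit{cur}}$ at the starts of the infinitely many rounds handling $a_j$ lie arbitrarily far along $\rho$, so pick one such round $r$ whose initial state $q$ lies inside $\sigma$; by relentless enabledness the suffix of $\sigma$ from $q$ contains a state enabling $a_j$, which --- being further along $\rho$ --- is reachable from $q$, so round $r$ takes the firing branch and $a_j$ occurs after $q$, inside $\sigma$. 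Hence $a_j$ occurs in $\sigma$. Combining this with the prefix-absorption fact, $\pi$ extends to a $\block$-SFA path, proving feasibility.

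The delicate step is the last one. The natural first attempt --- a round-robin that in the round for $a_j$ fires $a_j$ only if it happens to be enabled \emph{at the checkpoint} --- fails, because relentless enabledness promises an enabling state only \emph{somewhere} in every suffix, not at the checkpoints; the essential idea is therefore to let the schedule walk to a reachable enabling state first, and the crux is to verify that such a walk is always available and that the fired occurrence lands inside the chosen suffix. A tidier variant, available since $M$ is finite, is: if a $\block$-locked state is reachable from $s$, route to it (a finite path ending in a $\block$-locked state satisfies $\block$-SFA); otherwise walk from $s$ into a bottom strongly connected component $C$ of the sub-LTS reachable from $s$ --- which exists, is reachable, and (as no $\block$-locked state is reachable) contains no $\block$-locked state --- and then round-robin within $C$, which is possible because $C$ is strongly connected and closed under transitions; the fairness check is then immediate, since any action relentlessly enabled in a suffix of the tail is enabled somewhere in $C$ and hence fired in every round handling it.
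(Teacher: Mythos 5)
Your proof is correct, but it takes a genuinely different route from the paper. The paper's proof is essentially a citation: it invokes Theorem~6.1 of van Glabbeek's work on progress, justness and fairness (which establishes feasibility of $\emptyset$-SFA when only countably many actions are enabled in each state, hence in particular for finite $\actionset$), and then observes that $\emptyset$-SFA implies $\block$-SFA for arbitrary $\block$ because the latter only constrains the non-blocking actions. You instead give a self-contained construction, in the style the paper itself uses for its WHFA and JA feasibility proofs: a prefix-absorption reduction to extending from a single state, followed by a round-robin schedule. Your key additional idea --- walking to a reachable state that enables $a_c$ before firing it, rather than firing only when $a_c$ happens to be enabled at the checkpoint --- is exactly what is needed to cope with relentless (as opposed to perpetual) enabledness, and your verification is sound: the ``at least one transition per block of $k$ rounds'' argument correctly rules out a non-halting, non-productive run, and the fairness check correctly uses that any state of $\rho$ lying beyond the start of a round is reachable from that round's start state. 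The bottom-SCC variant you sketch is also valid and arguably cleaner for finite LTSs. What the paper's approach buys is brevity and reuse of an established result; what yours buys is independence from that external theorem and a construction that is uniform with the other feasibility proofs in the appendix.
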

\begin{proof}
     It is proven in \cite[Theorem 6.1]{glabbeek2019progress} that if only countably many actions are enabled in each state of a transition system, then strong fairness of actions with $\block = \emptyset$ is feasible. We have assumed a finite set $\actionset$, hence this theorem applies.
     Similarly to WFA, as argued in \autoref{prop:wfa-feasible}, $\emptyset$-SFA implies $\block$-SFA for arbitrary $\block$ because $\block$-SFA requires only actions in $\nonblock$  to occur when they are relentlessly enabled.
     We conclude $\block$-strong fairness of actions is feasible.
\end{proof}

For the two forms of hyperfairness, we first prove a supporting lemma.
\begin{lemma}\label{lem:whfa-more-feasible}
    Every finite path $\pi$ can be extended to a path $\pi'$ that satisfies weak $\block$-hyperfairness of actions, such that all occurrences of blocking actions in $\pi'$ are part of $\pi$.
\end{lemma}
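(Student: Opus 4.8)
The plan is to extend $\pi$ by appending an infinite round-robin schedule of non-blocking transitions, so that no blocking action is ever introduced after $\pi$ and every non-blocking action that can be kept $\block$-reachable from some point on is fired infinitely often; in one degenerate case the schedule appends nothing and $\pi' = \pi$.

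First I would record two auxiliary facts. Fact A: any path that ends in a $\block$-locked state $s$ already satisfies weak $\block$-hyperfairness of actions, because every transition enabled in $s$ is blocking, so the only $\block$-free path starting in $s$ is the trivial one, hence the only actions $\block$-reachable from $s$ are those enabled in $s$, none of which is non-blocking; as every suffix of the path contains $s$, no non-blocking action is perpetually $\block$-reachable in any suffix and the condition holds vacuously. Fact B: a non-$\block$-locked state has some non-blocking action enabled, and that action is $\block$-reachable from it via the trivial $\block$-free path. I would also reformulate the goal: an infinite path $\rho$ satisfies $\block$-WHFA as soon as every non-blocking action that is perpetually $\block$-reachable in \emph{some} suffix of $\rho$ occurs infinitely often in $\rho$ --- indeed, if $a$ is perpetually $\block$-reachable in a suffix $\rho''$ and occurs infinitely often in $\rho$, then it occurs in $\rho''$, which is exactly what the definition demands there.

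Next I would describe the construction. Fix an enumeration $a_1, \dots, a_n$ of the finite set $\nonblock$ and let the ``current state'' be the final state of $\pi$. Then repeat rounds, where a round runs $i = 1, \dots, n$ and at step $i$ does the following: if the current state is $\block$-locked, halt; else, if $a_i$ is $\block$-reachable from the current state, choose a $\block$-free path (necessarily finite, being a path that ends in a state) from the current state to a state in which $a_i$ is enabled, append it to the path being built, then append an $a_i$-transition and move the current state to its target; else do nothing and proceed to $i+1$. Let $\pi'$ be the resulting finite or infinite path. Every appended transition carries a non-blocking label, so every occurrence of a blocking action in $\pi'$ lies in the prefix $\pi$. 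Each round terminates (finitely many steps, each gluing on a finite segment), and by Fact B a round that does not halt appends at least one transition; hence the process either halts --- and then, by construction, at a $\block$-locked state, so $\pi'$ is finite --- or it runs forever, yielding an infinite $\pi'$.

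Finally I would verify $\block$-WHFA for $\pi'$. If $\pi'$ is finite it ends in a $\block$-locked state, so Fact A applies. If $\pi'$ is infinite, fix a non-blocking $a$ that is perpetually $\block$-reachable in some suffix $\rho''$ and let $i$ be its index in the enumeration; since the path grows unboundedly, after finitely many rounds the current state has advanced onto $\rho''$ and stays there, so from then on, at step $i$ of each round, $a_i = a$ is $\block$-reachable from the current state and an $a$-transition is appended; with infinitely many rounds this makes $a$ occur infinitely often in $\pi'$, hence in $\rho''$. By the reformulation above, $\pi'$ satisfies $\block$-WHFA. I expect the only real obstacle to be the bookkeeping of the degenerate/halting case --- checking that the round-robin is well defined, that each non-halting round makes progress, and that halting can occur only at a $\block$-locked state --- since the scheduling idea itself is routine.
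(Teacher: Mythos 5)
Your proposal is correct and follows essentially the same strategy as the paper's proof: extend $\pi$ using only non-blocking transitions via a fair scheduler over $\nonblock$ (the paper uses a queue that re-appends served actions, you use a fixed round-robin enumeration — an inessential difference), terminate only at a $\block$-locked state where WHFA holds vacuously, and in the infinite case argue that any action perpetually $\block$-reachable on a suffix is scheduled and fired infinitely often once the construction has advanced into that suffix. The auxiliary facts and the final verification match the paper's argument step for step.
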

\begin{proof}
    Let $\pi$ be an arbitrary finite path.
    We prove that $\pi$ can be extended to path $\pi'$ that satisfies weak $\block$-hyperfairness of actions, such that there are no occurrences of blocking actions in the extension. We do this through construction of the path $\pi'$. 
    We will construct $\pi'$ in steps. Let $\pi_i$ with $i \geq 0$ be the path constructed in the $i$'th iteration, with $\pi_0 = \pi$. 
    Let $s_i$ be the last state of $\pi_i$.
    For this construction, we use a queue $Q$ containing non-blocking actions. At the start of the construction, $Q$ is initialised with exactly one copy all non-blocking actions $\block$-reachable from $s_0$, the final state of $\pi$, in some arbitrary order.
    The construction has the following invariants: $Q$ contains exactly one copy of every non-blocking action $\block$-reachable in the final state of the path constructed so far. It may contain zero or one copies of non-blocking actions not $\block$-reachable from this state. It contains no blocking actions. Additionally, the only occurrences of blocking actions in the path constructed thus far are in $\pi$.
    
    At each step $i > 0$, we do the following: first, we determine if $Q$ is empty or not. If it is empty, we take $\pi_i = \pi_{i-1}$ and the construction terminates. If $Q$ is not empty, we pop the head $a$ from $Q$. If $a$ is not $\block$-reachable from $s_{i-1}$, then we let $\pi_{i} = \pi_{i-1}$ and go to step $i + 1$. The invariants are maintained because we only removed $a$ from $Q$, and $a$ was not $\block$-reachable from $s_{i-1} = s_i$. 
    If $a$ is $\block$-reachable from $s_{i-1}$, then there exists some path $\pi_i'$ consisting of only non-blocking actions starting in $s_{i-1}$ and ending in a state $s_{i-1}'$ such that some transition $t_a$ with $\action{t_a} = a$ is enabled in $s_{i-1}'$. Let $\pi_{i} = \pi_{i-1} \co \pi_{i-1}' t_a \target{t_a}$ and append $a$ back to the end of $Q$. Then continue to step $i + 1$. 
    The invariants are maintained in this case as well. This is because every action $\block$-reachable in $s_{i}$ must also have been $\block$-reachable from $s_{i-1}$, since $s_{i}$ is $\block$-reachable from $s_{i-1}$. Since $Q$ at the start of this step contains the same actions as at the end of this step, and by the invariant it contained all actions at the start that are $\block$-reachable from $s_{i-1}$, it also contains all actions that are $\block$-reachable from $s_{i}$ at the end of the step. Finally, the segment we added did not contain any blocking actions, and $a$ itself is non-blocking because all actions in $Q$ are non-blocking.

    There are two potential outcomes to this construction: either $Q$ becomes empty and the construction terminates, or $Q$ never becomes empty and the construction continues infinitely. We prove that in either case, the path $\pi'$ that is ultimately constructed is weakly $\block$-hyperfair of actions and does not contain occurrences of blocking actions beyond those already present in $\pi$.
    \begin{itemize}
        \item If $Q$ becomes empty and the construction terminates, then the final path $\pi'$ is $\pi_i$ for the $i$ on which $Q$ was determined to be empty. The final state of $\pi'$, $s'$ is then a state in which no non-blocking actions are $\block$-reachable. Hence, there are no non-blocking actions perpetually $\block$-reachable on any suffix of $\pi'$ and so $\pi'$ is trivially $\block$-WHFA. 
        \item If $Q$ never becomes empty then the construction continues forever. Let $\pi'$ be the infinite path $\pi_{\infty}$. Let $\pi''$ be an arbitrary suffix of $\pi'$, and let $a$ be an arbitrary action in $\nonblock$ that is perpetually $\block$-reachable on $\pi''$. We prove $a$ occurs in $\pi''$. Consider that if $a$ is perpetually $\block$-reachable on $\pi''$, then it is $\block$-reachable in every state of $\pi''$.
        Consider also that, since $\pi''$ is a suffix of the infinite path $\pi'$, $\pi''$ is also infinite. 
        In our construction, we add only a finite number of steps to the path in every iteration. Therefore, $\pi''$ was created as a part of $\pi'$ over infinitely many iterations.
        Since $a$ is enabled in every state of $\pi''$, by the invariants $a$ must be in $Q$ at the start of all iterations of the construction that contributed to $\pi''$, with possible exception of the first.
        Since $Q$ is a queue and $\actionset$ is finite, $a$ will be at the head of the queue during the construction of $\pi''$ infinitely many times. 
        Whenever $a$ was at the head of the queue during the construction, a finite number of steps were added to the path that ended with a transion labelled with $a$. Hence, $a$ occurs in $\pi''$, and so $\pi'$ is $\block$-WHFA.
    \end{itemize}
    In both cases, that no new occurrences of blocking actions are added to the path comes directly from the invariants.
\end{proof}

\begin{proposition}\label{prop:whfa-feasible}
    Weak $\block$-hyperfairness of actions is feasible.
\end{proposition}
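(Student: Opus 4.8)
The plan is to observe that this proposition is an immediate consequence of \autoref{lem:whfa-more-feasible}, which was just established. Recall from \autoref{def:feas} that a path predicate $P$ is feasible precisely when, for every LTS $M$, every finite path $\pi$ in $M$ can be extended to a path $\pi'$ that satisfies $P$ and is still a valid path of $M$. The lemma already delivers, for an arbitrary finite path $\pi$, an extension $\pi'$ of $\pi$ that satisfies weak $\block$-hyperfairness of actions; since $\pi'$ is constructed by appending genuine transitions of the fixed LTSC (the segments $\pi_{i-1}'t_a\,\target{t_a}$ built inside the construction), $\pi'$ is by construction a valid path. So I would simply invoke the lemma and conclude.

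Concretely, the proof would read: Let $M$ be an arbitrary (finite) LTS and let $\pi$ be an arbitrary finite path in $M$. By \autoref{lem:whfa-more-feasible}, $\pi$ can be extended to a path $\pi'$ of $M$ that satisfies weak $\block$-hyperfairness of actions. Hence, by \autoref{def:feas}, weak $\block$-hyperfairness of actions is feasible. The additional guarantee of the lemma --- that the extension introduces no new occurrences of blocking actions --- is stronger than what feasibility requires and can simply be discarded here (it is retained in the lemma because it is needed when reasoning about the template formulae, where blocking actions must be handled with care).

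There is no real obstacle at this stage: all the work has been pushed into \autoref{lem:whfa-more-feasible}, whose proof carries the queue-based fair-scheduling construction and the case analysis on whether the queue eventually empties. If anything, the only thing to double-check is that \autoref{def:feas} asks for extension to \emph{some} valid path of $M$ (not necessarily an infinite one), which matches the lemma's statement exactly --- in the terminating case of the construction the extension is finite and ends in a state with no non-blocking $\block$-reachable action, which is vacuously weakly $\block$-hyperfair.
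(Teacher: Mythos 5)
Your proposal matches the paper's proof exactly: the paper also derives \autoref{prop:whfa-feasible} as an immediate consequence of \autoref{lem:whfa-more-feasible}, noting that the lemma is a strictly stronger statement. Your additional remarks (validity of the constructed path, discarding the no-new-blocking-actions guarantee) are correct elaborations of the same one-line argument.
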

\begin{proof}
    This follows from \autoref{lem:whfa-more-feasible}, which is a stronger property.
\end{proof}

\begin{proposition}\label{prop:shfa-feasible}
    Strong $\block$-hyperfairness of actions in feasible.
\end{proposition}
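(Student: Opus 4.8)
The plan is to \emph{reduce} the claim to the already-established feasibility of weak hyperfairness (\autoref{prop:whfa-feasible}), via the observation that strong $\emptyset$-hyperfairness of actions and weak $\emptyset$-hyperfairness of actions coincide. Concretely, I would first argue that when $\block = \emptyset$, an action is perpetually $\emptyset$-reachable on a path $\pi$ exactly when it is relentlessly $\emptyset$-reachable on $\pi$. One direction is immediate from the definitions. For the other, suppose $a$ is relentlessly $\emptyset$-reachable on $\pi$ and let $s$ be any state of $\pi$; the suffix of $\pi$ starting in $s$ then contains a state $s'$ from which $a$ is $\emptyset$-reachable, and since the fragment of $\pi$ from $s$ to $s'$ is a path and hence vacuously $\emptyset$-free, prepending it witnesses that $a$ is $\emptyset$-reachable from $s$ as well. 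So $a$ is perpetually $\emptyset$-reachable on $\pi$. This is the coincidence noted in \cite{lamport2000fairness}; it is worth remarking explicitly that it breaks for nonempty $\block$ precisely because that forward fragment of $\pi$ need not be $\block$-free. Consequently $\emptyset$-SHFA is feasible: by \autoref{prop:whfa-feasible} instantiated with $\block = \emptyset$, every finite path $\pi$ of $M$ extends, within $M$, to a path $\pi'$ satisfying $\emptyset$-WHFA, which equals $\emptyset$-SHFA.

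The second step is to check that this $\pi'$ additionally satisfies $\block$-SHFA. Here I would use that every path is vacuously $\emptyset$-free, so a $\block$-free witnessing path is also $\emptyset$-free; hence $\block$-reachability implies $\emptyset$-reachability, and therefore relentless $\block$-reachability on a suffix implies relentless $\emptyset$-reachability on that suffix. Thus, for any suffix $\pi''$ of $\pi'$ and any $a \in \nonblock$ relentlessly $\block$-reachable on $\pi''$, the action $a$ is relentlessly $\emptyset$-reachable on $\pi''$, and since $a \in \actionset$ and $\pi'$ satisfies $\emptyset$-SHFA, $a$ occurs in $\pi''$. Hence $\pi'$ satisfies $\block$-SHFA, and since $\pi$ is a prefix of $\pi'$ and $\pi'$ is a valid path of $M$, this establishes feasibility per \autoref{def:feas}. (The extension produced for $\emptyset$-SHFA may contain occurrences of blocking actions not present in $\pi$, but this is harmless: $\block$-SHFA imposes no constraint forbidding blocking actions from occurring.)

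I do not expect a serious obstacle, since the real work was done in \autoref{lem:whfa-more-feasible}/\autoref{prop:whfa-feasible}; the only point requiring genuine care is the justification of the coincidence of perpetual and relentless $\emptyset$-reachability, together with keeping the two reachability notions cleanly separated when $\block$ is nonempty. If one wished to avoid relying on \cite{lamport2000fairness} entirely, the alternative would be to replay the queue-based construction of \autoref{lem:whfa-more-feasible} directly for SHFA, but with a queueing discipline that keeps re-processing every action that remains relentlessly $\block$-reachable; getting such a construction to simultaneously make progress on all relevant actions is noticeably more delicate, which is why the reduction route above is the one I would take.
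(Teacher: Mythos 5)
Your argument is correct, but it takes a genuinely different route from the paper's. The paper invokes the stronger \autoref{lem:whfa-more-feasible} to obtain an extension $\pi'$ satisfying $\block$-WHFA in which \emph{no new blocking actions occur}, and then argues by contradiction that this particular $\pi'$ also satisfies $\block$-SHFA: any suffix witnessing a violation has a further suffix $\pi'''$ containing no blocking actions at all, and on such a $\block$-free tail relentless $\block$-reachability collapses to perpetual $\block$-reachability, so $\block$-WHFA already forces the action to occur. You instead perform the collapse globally: you pass to $\block=\emptyset$, where every path fragment is vacuously $\emptyset$-free, so relentless and perpetual $\emptyset$-reachability coincide and $\emptyset$-WHFA equals $\emptyset$-SHFA; feasibility of $\emptyset$-WHFA (\autoref{prop:whfa-feasible} at $\block=\emptyset$) then yields an $\emptyset$-SHFA extension, and monotonicity ($\block$-reachability implies $\emptyset$-reachability, hence relentless $\block$-reachability implies relentless $\emptyset$-reachability, and $\nonblock\subseteq\actionset$) transfers $\emptyset$-SHFA to $\block$-SHFA. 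Both proofs rest on the same core observation --- reachability notions collapse in the absence of blocking actions --- but yours needs only the plain feasibility proposition rather than the strengthened lemma, and mirrors the monotonicity argument the paper itself uses for WFA and SFA in \autoref{prop:wfa-feasible} and \autoref{prop:sfa-feasible}; the paper's version, in exchange, produces a witness with the additional property that all blocking actions lie in the original prefix, which is reused elsewhere. Your parenthetical that the $\emptyset$-route extension may introduce blocking actions, and that this is harmless for $\block$-SHFA, is the right thing to flag and is correctly resolved.
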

\begin{proof}
    We prove that every finite path $\pi$ can be extended to a path $\pi'$ that satisfies strong $\block$-hyperfairness of actions. 
    Let $\pi$ be an arbitrary finite path, then by \autoref{lem:whfa-more-feasible} we know that there exists a path $\pi'$ that extends $\pi$ and satisfies weak $\block$-hyperfairness of actions, and has no occurrences of blocking actions save those already present in $\pi$.
    We will use $\pi'$ to witness that there exists a strongly $\block$-hyperfair extension of $\pi$, by proving $\pi'$ satisfies strong $\block$-hyperfairness as well as weak $\block$-hyperfairness.
    
    Towards a contradiction, assume that $\pi'$ does not satisfy strong $\block$-hyperfairness of actions. Then $\pi'$ must have a suffix $\pi''$ such that there is an action $a \in \nonblock$ that is relentlessly $\block$-reachable in $\pi''$ and yet does not occur in $\pi''$.
    If $a$ is relentlessly $\block$-reachable in $\pi''$, it is also relentlessly $\block$-reachable in every suffix of $\pi''$.
    Let $\pi'''$ be a suffix of $\pi''$ such that $\pi'''$ does not contain any occurrences of blocking actions. That such a suffix exists follows from $\pi'$ only having occurrences of blocking actions in the finite prefix $\pi$.
    We now have the path $\pi'''$ on which $a$ is relentlessly $\block$-reachable and that does not contain occurrences of blocking actions.
    Let $s$ be an arbitrary state on $\pi'''$. Since $a$ is relentlessly $\block$-reachable, there must be a state $s'$ on $\pi'''$ past $s$ such that $a$ is $\block$-reachable from $s'$.
    And since there are no occurrences of blocking actions on $\pi'''$, $a$ is also $\block$-reachable from $s$. Hence, $a$ is $\block$-reachable from every state of $\pi'''$ and is therefore perpetually $\block$-reachable on $\pi'''$.    
    
    We constructed $\pi'''$ as a suffix of $\pi''$ which is a suffix of $\pi'$, so $\pi'''$ is a suffix of $\pi'$ as well.
    We know that $\pi'$ satisfies weak $\block$-hyperfairness, so since $a$ is perpetually $\block$-reachable on $\pi'''$, a suffix of $\pi'$, $a$ also occurs in $\pi'''$.
    Since $\pi'''$ is a suffix of $\pi''$, we know that $a$ occurs on $\pi'$.
    However, we assumed previously that $a$ does not occur on $\pi''$.
    We have reached a contradiction and therefore conclude that $\pi'$ satisfies strong $\block$-hyperfairness as well as weak $\block$-hyperfairness.
\end{proof}

\begin{proposition}\label{prop:ja-feasible}
    $\block$-justness of actions is feasible.
\end{proposition}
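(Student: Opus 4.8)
The plan is to reuse the queue-driven construction from the proof of \autoref{lem:whfa-more-feasible}, exploiting the fact that for justness an enabled non-blocking action can always be eliminated by \emph{itself}: since $\conc$ is irreflexive, $a \nconc a$ for every action $a$. Fix an arbitrary finite path $\pi$. I build an extension $\pi'$ as the limit of paths $\pi_0 = \pi, \pi_1, \pi_2, \dots$, maintaining a FIFO queue $Q$ of non-blocking actions that still need to be eliminated. Initialise $Q$ with one copy of every non-blocking action enabled in the final state of $\pi$. At step $i \ge 1$: if $Q$ is empty, stop with $\pi' = \pi_{i-1}$; otherwise pop the head $a$ of $Q$ and let $s$ be the final state of $\pi_{i-1}$. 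If $a$ is not enabled in $s$, set $\pi_i = \pi_{i-1}$. If $a$ is enabled in $s$, choose an $a$-transition $t$ out of $s$, set $\pi_i = \pi_{i-1}\co s\,t\,\target{t}$, and append to the tail of $Q$ every non-blocking action enabled in $\target{t}$ that is not already in $Q$ (possibly $a$ itself). A routine induction then maintains the invariants that (i) every non-blocking action enabled in the final state of the current path occurs in $Q$, and (ii) $Q$ consists of non-blocking actions only, at most one copy each, so $\lvert Q \rvert \le \lvert\nonblock\rvert$ throughout, using finiteness of $\actionset$; moreover each $\pi_i$ is a valid path of $M$ extending $\pi$, since only existing transitions are appended.

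If $Q$ becomes empty at some step, $\pi'$ is finite and, by invariant (i), its final state has no enabled non-blocking action, i.e.\ it is a $\block$-locked state. Otherwise the construction runs forever; it then performs infinitely many ``processing'' steps (if from some point on every popped head were merely discarded, $Q$ would strictly shrink and eventually become empty), so $\pi' = \pi_\infty$ is infinite. In either case I claim $\pi'$ is $\block$-just of actions. Suppose not: some non-blocking $a$ is enabled in a state $s_j$ of $\pi'$ while no $a'$ with $a \nconc a'$ occurs in the suffix of $\pi'$ starting at $s_j$. By the persistence clause of \autoref{def:concur-act} (item 2), $a$ is then enabled in every state of that suffix. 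In the finite case this contradicts the final state being $\block$-locked. In the infinite case: from the moment $s_j$ is reached, $a$ lies in $Q$ --- by invariant (i) if $s_j$ lies in the extension, and by the initialisation if $s_j$ lies in $\pi$, since then $a$, never eliminated afterwards, is enabled in the final state of $\pi$. Since popping removes the head and additions go to the tail, the distance of $a$ to the head of $Q$ strictly decreases at each step, so $a$ is popped at some later step; at that step the current final state lies at or beyond $s_j$, hence $a$ is still enabled there, so an $a$-transition is appended --- an occurrence of $a$ in the suffix from $s_j$. As $a \nconc a$, this contradicts the assumption, so $\pi'$ is $\block$-just of actions and $\block$-JA is feasible.

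The main obstacle is purely one of bookkeeping: choosing invariants strong enough that both final arguments go through --- in the terminating case that the final state is genuinely $\block$-locked, and in the non-terminating case that \emph{every} enabled occurrence of \emph{every} non-blocking action, including occurrences inside the original prefix $\pi$, is eventually eliminated. The persistence clause of \autoref{def:concur-act} is exactly what makes this manageable, as it lets ``enabled earlier and not yet eliminated'' be replaced by ``enabled in the current final state'', which invariant (i) directly tracks.
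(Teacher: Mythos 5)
Your construction is correct and follows essentially the same route as the paper's proof: a FIFO queue of pending non-blocking actions served round-robin, with item 2 of \autoref{def:concur-act} guaranteeing that a queued action whose earlier enabling has not been eliminated is still enabled when its turn comes, and irreflexivity of $\conc$ guaranteeing that performing it eliminates it. The only difference is bookkeeping: the paper eagerly deletes from $Q$ every action eliminated by the transition just appended, whereas you keep stale entries and discard them lazily when they are popped while disabled; both variants sustain the invariants needed for the $\block$-locked final state in the terminating case and the eventual-occurrence argument in the infinite case.
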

\begin{proof}
    Let $\pi$ be an arbitrary finite path.
    We prove $\pi$ can be extended to a path $\pi'$ satisfying $\block$-justness of actions. We do this through construction of such a path $\pi'$. We do this in steps, where $\pi_i$ with $i \geq 0$ represents the path constructed in step $i$. Let $\pi_0 = \pi$. Let $s_i$ be the last state of $\pi_i$ for all $i \geq 0$. 
    For this construction we use a queue $Q$.
    
    The initial contents of $Q$ are determined by $\pi_0$: it contains exactly one copy of every non-blocking action that is enabled in some state of $\pi_0$ but has not been subsequently eliminated. The order of these actions is arbitrary.
    The construction has the following invariant: $Q$ contains exactly one copy of every non-blocking action that is enabled in some state of the path constructed so far, but has not been subsequently eliminated. Trivially, this invariant holds at initialisation.
    
    The construction proceeds as follows:
    in step $i$, with $i > 0$, we construct $\pi_i$ from $\pi_{i-1}$ using $Q$. 
    At this point, $Q$ contains exactly one copy of every non-blocking action that was enabled in some state of $\pi_{i-1}$ but has not subsequently been eliminated. 
    If $Q$ is empty, let $\pi_i = \pi_{i-1}$ and the construction terminates.
    Otherwise, we pop the head of $Q$, let this action be $a$. The action $a$ must have been enabled in some state $s_a$ of $\pi_{i-1}$ such that the subpath $\pi_a$ of $\pi_{i-1}$ from $s_a$ to $s_{i-1}$ does not contain an occurrence of an action that eliminates $a$. 
    By the second property of concurrency relations on actions, $a$ must still be enabled in $s_{i-1}$. Let $t_a$ be a transition enabled in $s_{i-1}$ with $\action{t_a} = a$, let $\pi_i = \pi_{i-1} t_a \target{t_a}$. We modify $Q$ in two steps: firstly, every action $b$ that is in $Q$ such that $b \nconc a$ is removed from $Q$. Secondly, every non-blocking action that is enabled in $\target{t_a} = s_i$ that is not yet in $Q$ gets appended to $Q$ in some arbitrary order. 
    At this point, the invariant is again satisfied: by removing all actions that are eliminated by $a$ from $Q$, we ensure that $Q$ no longer contains those actions that were not eliminated in $\pi_{i-1}$  but are eliminated in $\pi_i$. By afterwards adding those actions that are enabled in $s_i$, we include those actions that are newly enabled without being eliminated yet.
    We proceed to the next iteration.

    This construction either terminates after finitely many steps, or continues forever, the latter case results in an infinite path. We show that in either case, the constructed path $\pi'$ satisfies $\block$-justness of actions.
    \begin{itemize}
        \item If the construction terminates during step $i$, then $\pi' = \pi_i$. By the invariant, $Q$ contains exactly those non-blocking actions that are enabled in $\pi'$ without being subsequently eliminated, and $Q$ must be empty because the construction terminated. Let $s'$ be the final state of $\pi'$. If there are non-blocking actions enabled in $s'$, then those actions are enabled in $\pi'$ without being subsequently eliminated, since there are no further transitions in $\pi'$. 
        Since no such actions can exist, we know $s'$ is a $\block$-locked state.
        If there were a non-blocking action enabled on $\pi'$ that has not subsequently been eliminated through the occurrence of an interfering action, then by the second property of concurrency relations on actions that non-blocking action should still be enabled on $s'$. 
        Since $s'$ is a $\block$-locked state, this is impossible and hence $\pi'$ is $\block$-JA.
        \item If the construction never terminates, then we construct an infinite path $\pi_{\infty} = \pi'$.  Let $s$ be some arbitrary state of $\pi'$ and let $a$ be an arbitrary non-blocking action that is enabled in $s$. We prove that $a$ is eliminated in the suffix $\pi_a'$ of $\pi'$ starting in $s$. 
        Towards a contradiction, assume that $a$ is not eliminated in $\pi'_a$.
        Let $i > 0$ be the first iteration of the construction such that $s_{i-1}$ is in $\pi_a'$. 
        Since $s_{i-1}$ is in $\pi_a'$, it either is $s$ or comes after $s$, and since $a$ is not eliminated in $\pi_a'$ it must be the case, by the second property of concurrency relations on actions, that $a$ is enabled in $s_{i-1}$.
        Hence, by the invariant, $Q$ must have contained $a$ at the start of iteration $i$. 
        Since $Q$ contains at most one copy of every non-blocking action and $\actionset$ is finite, there are finitely many actions before $a$ in the queue. Every iteration, at least one action gets removed from $Q$ and new actions get appended. Hence, $a$ is either removed early or eventually becomes the head of the queue. If $a$ is removed early, this is because an action occurred that eliminates $a$, hence $a$ is eliminated in $\pi_a'$. If $a$ becomes the head of the queue, then we ensure $a$ itself occurs. By the first property of concurrency relations on actions, $a$ eliminates itself. In this case too, $a$ is eliminated in $\pi_a'$. 
        This contradicts our assumption that $a$ was not eliminated.
        We conclude that all non-blocking actions that are enabled in some state of $\pi'$ are subsequently eliminated. Hence, $\pi'$ satisfies $\block$-justness of actions.
    \end{itemize}
    In either case, the finite path $\pi$ can be extended to a path $\pi'$ that satisfies $\block$-JA. We conclude $\block$-justness of actions is feasible.
\end{proof}

\section{Correctness of Formulae}\label{app:proofs}
In this appendix, we provide the correctness proofs for the presented formulae.
First, there is a supporting proposition we use repeatedly throughout the different proofs.

All our proofs assume a fixed LTSC $M = (\states,\initstate,\actionset,\transrel,\conc)$, although the $\conc$ is only relevant for JA.
We also refer to an arbitrary environment $\env$ and set of blocking actions $\block \subseteq \actionset$.
We define the length of a finite path to be the number of transitions occurring in it. A path of length $0$ contains only a single state and is called the empty path. 

\subsection{Supporting Proposition}
The following proposition gives the semantics of a least fixed point formula that occurs in several of our presented formulae.
\begin{restatable}{proposition}{proplfpbasic}\label{prop:lfp-basic}
    For all states $s \in \states$, formal variables $Y$, modal $\mu$-calculus formulae $\phi_1$ and $\phi_2$ that do not depend on $Y$, and set of actions $\alpha$, it is the case that
    $s$ is in $\llbracket \lfpb{Y}{\phi_1 \land (\phi_2 \lor \diam{\alpha}Y)} \rrbracket$ if, and only if, $s$ admits a finite path $\pi$ satisfying the following requirements:
    \begin{enumerate}
        \item all actions occurring in $\pi$ are in $\alpha$, and
        \item all states in $\pi$ are in $\llbracket \phi_1 \rrbracket$, and
        \item the final state of $\pi$ is in $\llbracket \phi_2 \rrbracket$.
    \end{enumerate}
\end{restatable}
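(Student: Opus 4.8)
The plan is to reduce the statement to the Kleene iteration characterisation of the least fixpoint. Write $F$ for the monotone operator $Z \mapsto \llbracket \phi_1 \land (\phi_2 \lor \diam{\alpha}Y) \rrbracket_{\update{\env}{Y}{Z}}$ on $2^{\states}$. Since $\phi_1$ and $\phi_2$ do not depend on $Y$, their denotations are independent of the value bound to $Y$, so $F$ has the explicit form $F(Z) = \llbracket\phi_1\rrbracket \cap \bigl(\llbracket\phi_2\rrbracket \cup \{s \in \states \mid \exists a \in \alpha, s' \in \states.\ s \xrightarrow{a} s' \land s' \in Z\}\bigr)$. As $\states$ is finite, $L := \llbracket \mu Y.(\phi_1 \land (\phi_2 \lor \diam{\alpha}Y)) \rrbracket_{\env} = \bigcup_{n \in \mathbb{N}} F^n(\emptyset)$, and this set is the least fixpoint, so in particular $L = F(L)$. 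I would spell out explicitly why the ``does not depend on $Y$'' hypothesis makes $F$ have this form, since the rest of the argument rests on it.

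For the ``only if'' direction, I would prove by induction on $n$ that every $s \in F^n(\emptyset)$ admits a finite path of length less than $n$ satisfying conditions 1--3. The base case $n = 0$ is vacuous since $F^0(\emptyset) = \emptyset$. For the step, $s \in F^{n+1}(\emptyset) = F(F^n(\emptyset))$ gives $s \in \llbracket\phi_1\rrbracket$ together with either $s \in \llbracket\phi_2\rrbracket$ --- in which case the empty path consisting of just $s$ witnesses 1--3 (condition 1 holds vacuously) --- or the existence of $a \in \alpha$ and $s'$ with $s \xrightarrow{a} s'$ and $s' \in F^n(\emptyset)$; applying the induction hypothesis to $s'$ yields a path $\pi'$, and prepending the transition $(s,a,s')$ gives a path $\pi = s\,(s,a,s')\,\pi'$ from $s$. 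One then checks the bookkeeping: the one new action lies in $\alpha$, the one new state lies in $\llbracket\phi_1\rrbracket$, and the final state is that of $\pi'$, hence still in $\llbracket\phi_2\rrbracket$. Since $s \in F^n(\emptyset)$ for some $n$ whenever $s \in L$, this yields the desired witnessing path.

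For the ``if'' direction, I would induct on the length of the witnessing path $\pi$ to show $s \in L$. If $\pi$ is the empty path, then $s \in \llbracket\phi_1\rrbracket \cap \llbracket\phi_2\rrbracket$, so $s \in F(L) = L$ irrespective of $L$. If $\pi = s\,(s,a,s')\,\pi'$, then $a \in \alpha$, $s \in \llbracket\phi_1\rrbracket$, and $\pi'$ is a strictly shorter path from $s'$ satisfying conditions 1--3 for the same $\phi_1$, $\phi_2$ and $\alpha$; by the induction hypothesis $s' \in L$, so that $s \in \llbracket \diam{\alpha}Y \rrbracket_{\update{\env}{Y}{L}}$, and again $s \in F(L) = L$.

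I do not expect a genuine obstacle; the only care-requiring points are (i) making precise that $\phi_1, \phi_2$ not depending on $Y$ lets us treat $\llbracket\phi_1\rrbracket$ and $\llbracket\phi_2\rrbracket$ as fixed subsets of $\states$ so that $F$ really has the displayed shape, and (ii) the routine verification that prepending a single transition preserves all three conditions and that deleting the first transition of a witnessing path leaves a legitimate shorter witness. Finiteness of $\states$ is used only to justify $L = \bigcup_n F^n(\emptyset)$; alternatively one could obtain the ``only if'' direction directly from fixpoint induction and the ``if'' direction from the fact that $L$ is the least prefixed point, avoiding even this mild appeal to finiteness.
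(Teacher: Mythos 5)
Your proposal is correct. The ``only if'' direction is essentially the paper's argument: the paper proves (as a standalone lemma) that $s$ lies in the $i$'th approximant $T^i(\emptyset)$ if and only if $s$ admits a witnessing path of length at most $i-1$, by induction on $i$, and your induction over $F^n(\emptyset)$ is one half of that. Where you genuinely diverge is the ``if'' direction. The paper establishes the approximant lemma as a biconditional and must then show that a shortest witnessing path has length at most $|\states|-1$ --- it does this with a pigeonhole/loop-removal argument (a repeated state on a too-long path can be excised, contradicting minimality) so that the path fits within the bounded union $\bigcup_{0 \le i \le |\states|} T^i(\emptyset)$. You instead induct directly on the length of the witnessing path and use only the fixpoint equation $F(L)=L$: the empty path gives $s \in \llbracket\phi_1\rrbracket \cap \llbracket\phi_2\rrbracket \subseteq F(L)$, and otherwise the tail of the path puts $s'$ in $L$ by the induction hypothesis, whence $s \in F(L) = L$. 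This sidesteps the pigeonhole argument entirely, needs no bound on the path length, and (as you note) localises the use of finiteness of $\states$ to the Kleene characterisation of $L$, which could itself be replaced by a least-prefixed-point argument. The trade-off is that the paper's route yields the slightly stronger quantitative statement relating approximant index to path length, which it does not actually need elsewhere; your route is the more economical proof of the proposition as stated.
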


We first need to prove a supporting lemma. To prove this lemma, we use an alternate characterisation of the semantics of least fixpoints to the one we presented in \autoref{subsec:mucalc}. We only give the definitions that we require for our proofs.
The following presentation can be found in \cite{bradfield2001modal,bradfield200712}, amongst others.

Let $Y$ be an arbitrary formal variable, $\phi$ be an arbitrary modal $\mu$-calculus formula and $\env$ an arbitrary environment. 
Let $T$ be the transformer associated with $\mu Y.\phi$, defined as
$$ T(\mathcal{F}) = \{s \in \states \mid s \in \llbracket \phi \rrbracket_{\env [Y := \mathcal{F}]}\}$$
And define
\begin{align*}
    T^0(\mathcal{F}) &= \mathcal{F}\\
 T^{i+1}(\mathcal{F}) &= T(T^i(\mathcal{F}))
\end{align*}
Then we can calculate the semantics of $\lfp{Y}{\phi}$ under $\env$ as:
\begin{align*}
    \llbracket \lfp{Y}{\phi}\rrbracket_{\env} &= \bigcup_{0 \leq i \leq |\states|} T^i(\emptyset)
\end{align*}
Note that this definition only works for finite systems, since it uses $|\states|$. A version exists for infinite systems, but is not relevant here.
We call $T^i(\emptyset)$ the $i$'th approximation of $\phi$.

For the subsequent lemmas, we fix formal variable $Y$.

\begin{lemma}\label{lem:lfp-approx}
     For all environments $\env$, states $s \in \states$, modal $\mu$-calculus formulae $\phi_1$ and $\phi_2$ that do not depend on $Y$, sets of actions $\alpha$, and natural numbers $0 \leq i \leq |\states|$, it holds that:
     $s$ is in the $i$'th approximation of $\lfpb{Y}{\phi_1 \land (\phi_2 \lor \diam{\alpha}Y)} $ under $\env$ if, and only if, $s$ admits a finite path $\pi$ meeting the following conditions:
     \begin{enumerate}
         \item $\pi$ has length at most $i - 1$, and
         \item only actions in $\alpha$ occur in $\pi$, and
         \item all states in $\pi$ are in $\llbracket \phi_1 \rrbracket_{\env}$, and
         \item the final state of $\pi$ is in $\llbracket \phi_2 \rrbracket_{\env}$.
     \end{enumerate}
\end{lemma}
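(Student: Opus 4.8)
The plan is a routine induction on $i$, using the approximant characterisation $\llbracket \lfp{Y}{\phi} \rrbracket_{\env} = \bigcup_{0 \le i \le |\states|} T^i(\emptyset)$ recalled just above. For the base case $i = 0$ we have $T^0(\emptyset) = \emptyset$, so the left-hand side is empty; on the right-hand side, condition~1 would require a finite path of length at most $-1$, which cannot exist, so that side is empty as well. The two sides therefore coincide.

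For the inductive step, assume the equivalence for $i$ and establish it for $i+1$. Unfolding the transformer, $s \in T^{i+1}(\emptyset)$ holds exactly when $s \in \llbracket \phi_1 \land (\phi_2 \lor \diam{\alpha}Y) \rrbracket_{\env[Y := T^i(\emptyset)]}$; since neither $\phi_1$ nor $\phi_2$ depends on $Y$, this reduces to: $s \in \llbracket \phi_1 \rrbracket_{\env}$ and, in addition, either $s \in \llbracket \phi_2 \rrbracket_{\env}$, or there exist $a \in \alpha$ and $s'$ with $s \xrightarrow{a} s'$ and $s' \in T^i(\emptyset)$. For the forward direction, suppose $s \in T^{i+1}(\emptyset)$. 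If $s \in \llbracket \phi_2 \rrbracket_{\env}$, then the length-$0$ path consisting of $s$ alone is a witness for the right-hand side (conditions 2 and 3 hold trivially, and $0 \le (i+1)-1$ since $i \ge 0$). Otherwise pick $a \in \alpha$ and $s' \in T^i(\emptyset)$ with $s \xrightarrow{a} s'$; by the induction hypothesis $s'$ admits a path $\pi'$ of length at most $i-1$ using only $\alpha$-actions, with all states in $\llbracket \phi_1 \rrbracket_{\env}$ and final state in $\llbracket \phi_2 \rrbracket_{\env}$. Prepending the transition $(s,a,s')$ gives a path from $s$ of length at most $i = (i+1)-1$ with the same three properties, using $a \in \alpha$ and $s \in \llbracket \phi_1 \rrbracket_{\env}$.

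For the backward direction, suppose $s$ admits a path $\pi$ of length at most $i$ satisfying conditions 2--4; in particular $s \in \llbracket \phi_1 \rrbracket_{\env}$. If $\pi$ has length $0$, its final state is $s$, so $s \in \llbracket \phi_2 \rrbracket_{\env}$ and hence $s \in T^{i+1}(\emptyset)$. If $\pi$ has length at least $1$, write it as the transition $(s,a,s')$ followed by a path $\pi'$ from $s'$; then $a \in \alpha$, and $\pi'$ has length at most $i-1$ and still satisfies conditions 2--4, so by the induction hypothesis $s' \in T^i(\emptyset)$. Thus $s \xrightarrow{a} s'$ with $a \in \alpha$ and $s' \in T^i(\emptyset)$, giving $s \in \llbracket \diam{\alpha}Y \rrbracket_{\env[Y := T^i(\emptyset)]}$, which together with $s \in \llbracket \phi_1 \rrbracket_{\env}$ yields $s \in T^{i+1}(\emptyset)$.

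I do not expect any real obstacle: the argument is elementary. The points needing mild care are the off-by-one between the approximant index $i$ and the length bound $i-1$ (so that the empty path exactly handles the $\phi_2$ disjunct and the prepend/split operation shifts the bound by one), the vacuous satisfaction of conditions 2 and 3 by the empty path, and the observation that $\phi_1$ and $\phi_2$ are $Y$-free, so their denotations are unaffected by reassigning $Y$ in the environment. Note also that the bound $i \le |\states|$ is never used in the induction itself; it is kept only so that the lemma plugs directly into the approximant formula for $\llbracket \lfpb{Y}{\phi_1 \land (\phi_2 \lor \diam{\alpha}Y)} \rrbracket$ when deriving \autoref{prop:lfp-basic}, where one additionally argues that any witnessing path can be shortened --- by excising a repeated state, which preserves conditions 1--3 --- to one of length at most $|\states|-1$.
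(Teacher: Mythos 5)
Your proof is correct and follows essentially the same route as the paper: induction on the approximant index, unfolding the transformer using that $\phi_1,\phi_2$ are $Y$-free, and splitting on whether the witnessing path is empty or starts with an $\alpha$-transition. The only difference is cosmetic — the paper runs the induction with two base cases ($i=0$ and $i=1$) whereas you handle $i=1$ uniformly inside the step, and your closing remarks about the unused bound $i\le|\states|$ and the path-shortening argument match how the paper derives \autoref{prop:lfp-basic}.
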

\begin{proof}
    Let $T$ be the transformer of $\lfpb{Y}{\phi_1 \land (\phi_2 \lor \diam{\alpha}Y)}$.
    We prove that $s$ is in $T^i(\emptyset)$ if, and only if, $s$ admits a finite path $\pi$ of length at most $i - 1$ on which only actions in $\alpha$ occur, all states are in $\llbracket \phi_1 \rrbracket_{\env}$, and which ends in a state $s' \in \llbracket \phi_2 \rrbracket_{\env}$.
    We do this by induction on $i$.

    For the first \emph{base}, take $i = 0$. Note that $s$ is in the $i$'th approximation if $s \in T^0(\emptyset)$. However, $T^0(\emptyset) = \emptyset$, so $s$ cannot be in the $0$'th approximation.
    Indeed, we cannot have a path of length at most $-1$. So in both directions of the bi-implication, the left side of the implication does not hold.

    For the second \emph{base}, take $i = 1$. We prove the bi-implication.
    \begin{itemize}
        \item First, assume $s$ is in the first approximation. Then $s \in T^1(\emptyset) = \{s \in \states \mid s \in \llbracket \phi_1 \land (\phi_2 \lor \diam{\alpha}Y)\rrbracket_{\env [Y := \emptyset]}\}$. Hence, $s \in \llbracket \phi_1 \land (\phi_2 \lor \diam{\alpha}Y) \rrbracket_{\env[Y := \emptyset]}$.
        Through the semantics of the modal $\mu$-calculus, and using that $\phi_1$ and $\phi_2$ do not depend on $Y$, this becomes $s \in (\llbracket \phi_1 \rrbracket_{\env} \cap (\llbracket \phi_2 \rrbracket_{\env} \cup \llbracket \diam{\alpha}Y \rrbracket_{\env[Y := \emptyset]})$. Therefore, $s \in \llbracket \phi_1 \rrbracket_{\env}$, and $s$ is in $\llbracket \phi_2 \rrbracket_{\env}$ or $ s \in \{s \in \states \mid \exists_{s' \in \states}.s \xrightarrow{\alpha} s' \land s' \in \emptyset\}$. It is not possible for a state $s'$ to exist that is in $\emptyset$, hence $s \in \llbracket \phi_2 \rrbracket_{\env}$. 
        Let $\pi$ be the empty path from $s$. No actions occur on $\pi$, so trivially all occurring actions are in $\alpha$. Since $s$ satisfies both $\phi_1$ and $\phi_2$ under $\env$, the other conditions are met as well.
        We conclude that $s$ admits a finite path of length at most 0 on which only actions in $\alpha$ occur, all states are in $\llbracket \phi_1 \rrbracket_{\env}$, and which ends in a state in $\llbracket \phi_2 \rrbracket_{\env}$.
        \item Second, assume $s$ admits a path of length at most $0$ on which only actions on $\alpha$ occur, all states are in $\llbracket \phi_1 \rrbracket_{\env}$, and which ends in a state in $\llbracket \phi_2 \rrbracket_{\env}$. The only path starting in $s$ of length at most $0$ is the path consisting of only $s$. Hence, $s \in \llbracket \phi_1 \rrbracket_{\env}$ and $s \in \llbracket \phi_2 \rrbracket_{\env}$. Since $\phi_1$ and $\phi_2$ do not depend on $Y$, we also have $s \in \llbracket \phi_1 \rrbracket_{\env [Y := \emptyset]}$ and $s \in \llbracket \phi_2 \rrbracket_{\env[Y := \emptyset]}$. 
        If $s \in \llbracket \phi_2 \rrbracket_{\env[Y := \emptyset]}$, it is also in the superset $\llbracket \phi_2 \rrbracket_{\env[Y := \emptyset]} \cup \llbracket \diam{\alpha}Y\rrbracket_{\env[Y := \emptyset]}$. We conclude that $s \in  \{s \in \states \mid s \in \llbracket \phi_1 \land ( \phi_2 \lor \diam{\alpha}Y)\rrbracket_{\env [Y := \emptyset]}\} = T^1(\emptyset)$ and hence $s$ is in the first approximation.
    \end{itemize}

    The \emph{induction hypothesis} we use is that a state $s'$ is in the $k$'th approximation of $\mu Y. (\phi_1 \land (\phi_2 \lor \diam{\alpha}Y))$ if, and only if, $s'$ admits a finite path of length at most $k - 1$ on which only actions in $\alpha$ occur, all states are in $\llbracket \phi_1 \rrbracket_{\env}$, and which ends in a state in $\llbracket \phi_2 \rrbracket_{\env}$. This is for all $k \geq 1$.
    
    For the \emph{step} case, we prove the claim for $k + 1$.
    Let $S$ be the set of states that admit finite paths of length at most $k - 1$ on which only actions in $\alpha$ occur, all states are in $\llbracket \phi_1 \rrbracket_{\env}$ and which end in a state in $\llbracket \phi_2 \rrbracket_{\env}$. By the induction hypothesis, $S = T^k(\emptyset)$.
    Since this lemma is a bi-implication, we prove both directions separately.
    \begin{itemize}
        \item We assume $s \in T^{k+1}(\emptyset)$. We need to prove $s$ admits a path $\pi$ that is of length at most $k + 1 - 1 = k$, on which only actions in $\alpha$ occur, all states are in $\llbracket\phi_1 \rrbracket_{\env}$, and which ends in a state in $\llbracket \phi_2 \rrbracket_{\env}$.
        We have $s \in T^{k+1}(\emptyset) = T(T^k(\emptyset)) = T(S)$
        Hence, $s \in \{s \in \states \mid s \in \llbracket \phi_1 \land (\phi_2 \lor \diam{\alpha}Y) \rrbracket_{\env [Y := S]}\}$. This reduces to $s \in \llbracket \phi_1 \rrbracket_{\env}$, and $s \in \llbracket \phi_2 \rrbracket_{\env} \lor s \in \{s \in \states \mid \exists_{s' \in \states}.s \xrightarrow{\alpha} s' \land s' \in S\}$, because $\phi_1$ and $\phi_2$ do not depend on $Y$.
        We do a case distinction on whether $s \in \llbracket \phi_2 \rrbracket_{\env}$.
        \begin{itemize}
            \item If $s \in \llbracket \phi_2 \rrbracket_{\env}$, then the path $\pi$ consisting of only $s$ is a path of length 0 on which only actions in $\alpha$ occur, and all states are in $\llbracket \phi_1 \rrbracket_{\env}$, and which ends in a state in $\llbracket \phi_2 \rrbracket_{\env}$. Since we assumed $k \geq 1$, we know $0 \leq k$, hence $s$ admits a path meeting the requirements of length at most $k$.
            \item If $s \not\in \llbracket \phi_2 \rrbracket_{\env}$, then $s \in \{s \in \states \mid \exists_{s' \in \states}.s \xrightarrow{\alpha} s' \land s' \in S\}$. Hence, there exists a state $s'$ such that that there exists an $\alpha$-transition $t$ from $s$ to $s'$ and $s'$ is in $S$.
            Since $s' \in S$, we know $s'$ admits a path $\pi'$ of length at most $k-1$, on which only actions in $\alpha$ occur, all states are in $\llbracket \phi_1 \rrbracket_{\env}$, and which ends in a state satisfying $\llbracket \phi_2 \rrbracket_{\env}$.
            Let $\pi = s t \pi'$. Since $\pi'$ has length at most $k - 1$ and we added one transition, $\pi$ has length at most $k$. Additionally, $t$ is an $\alpha$-transition, as are all transitions in $\pi'$, so all transitions in $\pi$ are labelled with actions in $\alpha$. Since $s \in \llbracket \phi_1 \rrbracket_{\env}$ and all states in $\pi'$ are as well, all states on $\pi$ meet this requirement. Finally,  since $\pi'$ ends in a state satisfying $\llbracket \phi_2 \rrbracket_{\env}$, so does $\pi$. Hence, $\pi$ is a witness that $s$ admits a path meeting all requirements.
        \end{itemize}
        In both cases $s$ admits such a path $\pi$ of length at most $k$.

        \item We assume $s$ admits a path $\pi$ of length at most $k$ such that all transitions on $\pi$ are labelled with actions in $\alpha$, all states are in $\llbracket \phi_1 \rrbracket_{\env}$, and $\pi$ ends in a state satisfying $\llbracket \phi_2 \rrbracket_{\env}$.
        We prove $s \in T^{k+1}(\emptyset) = T(T^k(\emptyset)) = T(S) = \{s \in \states \mid s \in \llbracket \phi_1 \land (\phi_2 \lor \diam{\alpha}Y) \rrbracket_{\env [Y := S]}\}$.
        We do a case distinction on whether the length of $\pi$ is zero.
        \begin{itemize}
            \item If the length of $\pi$ is zero, then $\pi = s$ and $s \in \llbracket \phi_2 \rrbracket_{\env}$. Additionally, since all states on $\pi$ are in $\llbracket \phi_1 \rrbracket_{\env}$, so is $s$. Since $\phi_1$ and $\phi_2$ do not depend on $Y$, we also have $s \in \llbracket \phi_1 \rrbracket_{\env[Y := S]}$ and $s \in \llbracket \phi_2 \rrbracket_{\env[Y := S]}$, and hence also $s \in \llbracket \phi_1 \land (\phi_2 \lor \diam{\alpha}Y)\rrbracket_{\env [Y := S]}$. Thus, $s \in T(S) = T^{k+1}(\emptyset)$.
            \item If the length of $\pi$ is greater than zero, then there is at least one transition in $\pi$. Let $t$ be the first transition of $\pi$. Since there are only $\alpha$-transitions in $\pi$, $t$ is an $\alpha$-transition. Let $s'$ be the target of $t$, and let $\pi'$ be the suffix of $\pi$ starting in $s'$. Then since the length of $\pi$ is at most $k$, the length of $\pi'$ is at most $k -1$. Hence, $\pi'$ witnesses that $s'$ admits a path of length at most $k-1$ on which only $\alpha$-transitions occur, all states are in $\llbracket \phi_1 \rrbracket_{\env}$, and which ends in a state satisfying $\llbracket \phi_2 \rrbracket_{\env}$. Hence, $s \in S$.
            So $s$ admits an $\alpha$-transition, namely $t$, to a state in $S$, namely $s'$. Therefore $s \in \llbracket \diam{\alpha}Y \rrbracket_{\env [ Y := S]}$ and hence also $s \in \llbracket \phi_1 \land (\phi_2 \lor \diam{\alpha}Y)\rrbracket_{\env[Y := S]}$. We conclude that $s \in T(S) = T^{k+1}(\emptyset)$.
        \end{itemize}
        In both cases we demonstrate that $s$ is in the $k+1$'th approximation.
    \end{itemize}
    We have proven both sides of the bi-implication that $s$ is in the $k+1$'th approximation of $\mu Y. (\phi_1 \land (\phi_2 \lor \diam{\alpha}Y))$ if, and only if, $s$ admits a finite path of length at most $k$ on which only $\alpha$ actions occur, all states are in $\llbracket \phi_1 \rrbracket_{\env}$, and which ends in a state satisfying $\llbracket \phi_2 \rrbracket_{\env}$.
    This proves the step case.

    By induction, we have proven the claim holds for all $i \geq 0$. Therefore it also holds for all $0 \leq i \leq |\states|$. We conclude the lemma holds.
\end{proof}

We can now prove the main claim, \autoref{prop:lfp-basic}:
\proplfpbasic*
\begin{proof}
    This claim is a bi-implication, so we prove both directions.
    \begin{itemize}
        \item If $s$ is in the semantics of $\lfpb{Y}{\phi_1 \land (\phi_2 \lor \diam{\alpha}Y)}$, then $s$ is in the least fixed point of the transformer $T$ matching this formula. Hence, there are one or more natural numbers $0 \leq i \leq |\states|$ such that $s \in T^i(\emptyset)$. Let $i$ be the smallest such number, then by \autoref{lem:lfp-approx}, $s$ admits a path of length at most $i - 1$ that meets all three conditions. This path witnesses that $s$ indeed admits a path meeting all three conditions.
        \item Assume $s$ admits at least one finite path that satisfies all three conditions. Let $\pi$ be the shortest such path that $s$ admits. Let $k$ be the length of $\pi$. We first prove that $0 \leq k + 1 \leq |\states|$. Trivially, a path has length at least $0$, so $0 \leq k + 1$. Towards a contradiction, assume $k + 1 > |\states|$. A path of length $j$ contains $j + 1$ individual instances of states: the initial state of the path and the target of every transition on the path. The path $\pi$ has length $k$, and so contains at least $k + 1$ individual occurrences of states, and $k + 1 > |\states|$. Hence, $\pi$ contains strictly more than $|\states|$ individual instances of states. Considering there are exactly $|\states|$ states in the LTS, by the pigeonhole principle there must be at least one state $s'$ that is visited at least twice on $\pi$. Let $\pi_1$ be the prefix of $\pi$ up until the first occurrence of $s'$, and let $\pi_2$ be the suffix of $\pi$ starting in the last occurrence of $s'$. Now let $\pi' = \pi_1 \co \pi_2$. This is a valid path, since $\pi_1$ ends in state $s'$ and $\pi_2$ starts in this state. $\pi'$ contains a subset of the actions and states of $\pi$, and has the same final state as $\pi$, so it satisfies all three conditions. Since we chose $s'$ to be a state that occurred more than once on $\pi$, $\pi'$ contains at least one transition less than $\pi$ and hence $\pi \neq \pi'$. In fact, $\pi'$ is shorter than $\pi$. However, we chose $\pi$ to be the shortest path that meets all three conditions starting in $s$. We have reached a contradiction.
        Hence, we conclude that $k + 1 \leq |\states|$. 
        This means that $s$ admits a path of length at most $k$ with $0 \leq k + 1 \leq |\states|$ that meets the three conditions. By \autoref{lem:lfp-approx}, this means that $s$ is in the $k+1$'th approximation of $\lfpb{Y}{\phi_1 \land(\phi_2 \lor \diam{\alpha}Y)}$. Since the semantics of the formula are the union of all approximations, we conclude $s$ is in $\llbracket \lfpb{Y}{\phi_1 \land (\phi_2 \lor \diam{\alpha}Y)} \rrbracket$.
    \end{itemize}
    We have proven both directions of the bi-implication.
\end{proof}

\subsection{Proof of Progress Formula}\label{app:unfair-proof}

We prove \autoref{thm:unfair}:
\thmunfair*

\autoref{mucalc:unfair} is:
\begin{equation*}
    \neg\diam{\dpre}\nu X. (\diam{\den}\tp \lor \boxm{\nonblock}\fp \lor \diam{\comp{\ddis}}X)
\end{equation*}

We fix arbitrary $\block$, $\dpre$, $\ddis$ and $\den$ for this proof.
We first prove the formula without the $\neg\diam{\dpre}$ at the start.
For this, let $S_P$ be the set of states that admit $\block$-progressing paths that are $(\emptyseq,\ddis,\den)$-violating.
In other words, these are the states that admit paths that are $\block$-progressing and that are $\ddis$-free up until an occurrence of an action in $\den$.

We first prove that $S_P$ is a fixed point of $\nu X.(\diam{\den}\tp \lor \boxm{\nonblock}\fp \lor \diam{\comp{\ddis}}X)$, and then that it is the greatest fixed point.
\begin{lemma}\label{lem:unfair-fix}
    $S_P$ is a fixed point of the transformer $T_P$ defined by:
    $$T_P(\mathcal{F}) = \bigcap_{a \in \nonblock}\{s \in \states \mid s\in \llbracket \diam{\den}\tp \lor \boxm{\nonblock}\fp \lor \diam{\comp{\ddis}}X \rrbracket_{\update{\env}{X}{\mathcal{F}}}\}$$
\end{lemma}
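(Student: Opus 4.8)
The plan is to unfold the definitions on both sides and verify the two inclusions $T_P(S_P) \subseteq S_P$ and $S_P \subseteq T_P(S_P)$, which together give the claimed equality. First I would simplify $T_P$: the body of the intersection $\bigcap_{a \in \nonblock}$ does not mention $a$, so by the $\mu$-calculus semantics $s \in T_P(\mathcal{F})$ holds exactly when at least one of (i) some action in $\den$ is enabled in $s$, (ii) no non-blocking action is enabled in $s$, i.e. $s$ is $\block$-locked, or (iii) there is a transition $s \xrightarrow{b} s'$ with $b \notin \ddis$ and $s' \in \mathcal{F}$. I will also use the elementary fact that in a finite LTS every finite path extends to a $\block$-progressing path — repeatedly fire a non-blocking transition until a $\block$-locked state is reached, or else forever — so in particular every state admits a $\block$-progressing path.

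For $S_P \subseteq T_P(S_P)$, I would take $s \in S_P$ together with a witnessing path $\pi$ from $s$ that is $\block$-progressing and $(\emptyseq,\ddis,\den)$-violating, and case split on the first step of $\pi$. If $\pi$ is just the state $s$, then $\block$-progression forces $s$ to be $\block$-locked, so disjunct (ii) holds. Otherwise write $\pi = s \cdot (s\xrightarrow{b}s') \cdot \pi'$. If $b \in \den$ then disjunct (i) holds. If $b \notin \den$, then since $\pi$ is $\ddis$-free up to its first $\den$-occurrence and $b$ is not that occurrence, $b \notin \ddis$; moreover the suffix $\pi'$ is $\block$-progressing (a suffix of a $\block$-progressing path) and is itself $(\emptyseq,\ddis,\den)$-violating, so $s' \in S_P$ and disjunct (iii) holds via $s \xrightarrow{b} s'$.

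For $T_P(S_P) \subseteq S_P$, I would take $s \in T_P(S_P)$ and, according to which disjunct holds, exhibit a $\block$-progressing $(\emptyseq,\ddis,\den)$-violating path from $s$. From (ii): the single-state path at $s$ is $\block$-progressing and $\ddis$-free, so it satisfies clause 3(a) of the violating-path definition. From (i): fire $s \xrightarrow{a} s'$ with $a \in \den$ and extend $s'$ to a $\block$-progressing path; the result contains a $\den$-occurrence at its first transition, and the prefix before that occurrence is the bare state $s$, hence $\ddis$-free, so clause 3(b) holds. From (iii): prepend $s \xrightarrow{b} s'$ (with $b \notin \ddis$) to a witnessing path from $s' \in S_P$; prepending a single non-$\ddis$ step keeps the path $\block$-progressing and keeps it $\ddis$-free up to its first $\den$-occurrence.

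I expect the only genuinely delicate point to be the bookkeeping in the last case of each inclusion: showing that appending or prepending one transition to a $(\emptyseq,\ddis,\den)$-violating path again yields such a path. This requires a small sub-case analysis on whether the new transition is labelled by an action in $\den$ and on whether the underlying violating path satisfies clause 3(a) or 3(b), but each sub-case is immediate once written out. Everything else is routine unfolding of the definitions of $(\emptyseq,\ddis,\den)$-violating path and $\block$-progressing path together with the semantics of $\diam{\den}\tp$, $\boxm{\nonblock}\fp$ and $\diam{\comp{\ddis}}X$.
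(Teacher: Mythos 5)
Your proposal is correct and follows essentially the same route as the paper's proof: mutual set inclusion, the same three-way case split on the disjuncts of the fixpoint body, the same arbitrary extension to a $\block$-progressing path after firing a $\den$-transition, and the same prepend/strip-first-transition arguments for the $\diam{\comp{\ddis}}X$ case. The extra observation that the intersection over $a \in \nonblock$ is vacuous is harmless and correct.
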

\begin{proof}
    To prove $S_P$ is a fixed point of $T_P$, we prove $T_P(S_P) = S_P$. We do this through mutual set inclusion.
    \begin{itemize}
        \item Let $s$ be an arbitrary state in $T_P(S_P)$. We therefore know that $s \in \llbracket \diam{\den}\tp \lor \boxm{\nonblock}\fp \lor \diam{\comp{\ddis}}X\rrbracket_{\update{\env}{X}{S_P}}$. We do a case distinction on which of those conditions $s$ satisfies.
        \begin{itemize}
            \item If $s$ satisfies $\diam{\den}\tp$, then there is a transition $t$ enabled in $s$ that is labelled with an action in $\den$. Let $\pi = s t \target{t}$. This is a path that is $\ddis$-free up until the first occurrence of an action in $\den$. We now extend $\pi$ arbitrarily, either until a $\block$-locked state is reached or infinitely. This is always possible: as long we are not in a $\block$-locked state there is always a non-blocking action enabled that can be appended to the path we are constructing. This way, a $\block$-progressing path that is $(\emptyseq,\ddis,\den)$-violating is constructed.

            \item If $s$ satisfies $\boxm{\nonblock}\fp$, then $s$ is a $\block$-locked state. Hence, the empty path is a path that $s$ admits that is $\block$-progressing and on which, trivially, no actions in $\ddis$ occur. Hence, $s$ admits a $\block$-progressing path that is $(\emptyseq,\ddis,\den)$-violating.

            \item If $s$ satisfies $\diam{\comp{\ddis}}X$ with $X = S_P$, then $s$ admits a transition labelled with an action not in $\ddis$ to a state in $S_P$.
            Let $t$ be such a transition and $s' = \target{t}$. Then since $s' \in S_P$, $s'$ admits a path $\pi'$ that is $\block$-progressing and $\ddis$-free up until the first occurrence of an action in $\den$. Let $\pi = st\pi'$: this path too is $\block$-progressing and $\ddis$-free up until the first occurrence of an action in $\den$. Hence, $\pi$ witnesses that $s$ admits a $\block$-progressing path that is $(\emptyseq,\ddis,\den)$-violating.
        \end{itemize}
        Therefore $s \in T_P(S_P) \imps s \in S_P$.
        \item Let $s$ be an arbitrary state in $S_P$, then $s$ admits a path $\pi$ that is $\block$-progressing and $\ddis$-free up until the first occurrence of $\den$. We prove $s \in \llbracket \diam{\den}\tp \lor \boxm{\nonblock}\fp \lor \diam{\comp{\ddis}}X\rrbracket_{\update{\env}{X}{S_P}}$.
        First, we do a case distinction on whether $\pi$ is the empty path.
        \begin{itemize}
            \item If it is, then since $\pi$ is $\block$-progressing, $s$ must be a $\block$-locked state. Hence $s \in \llbracket \boxm{\nonblock}\fp \rrbracket$.
            \item If $\pi$ is not the empty path, we do a case distinction on whether the first transition of $\pi$ is labelled with an action in $\den$.
            \begin{itemize}
                \item If it is, then $s \in \llbracket \diam{\den}\tp \rrbracket$.
                \item If it is not, then since $\pi$ is $\ddis$-free up until the first occurrence of an action in $\den$, and the first transition of $\pi$ is not labelled with an action in $\den$, this transition must be labelled with an action not in $\ddis$. Let $t$ be this transition and $s'$ the target of $t$. Let $\pi'$ be the suffix of $\pi$ starting in $s'$. Then because $\pi$ is $\block$-progressing and $\ddis$-free up until the first occurrence of an action in $\den$, $\pi'$ meets those requirements as well. Hence, $s' \in S_P$. We conclude that $s$ admits a non-$\ddis$ transition to a state in $S_P$, and so $s \in \llbracket \diam{\comp{\ddis}}X\rrbracket_{\update{\env}{X}{S_P}}$.
            \end{itemize}
        \end{itemize}
        In every case, $s$ is in one of the options of $\llbracket \diam{\den}\tp \lor \boxm{\nonblock}\fp \lor \diam{\comp{\ddis}}X\rrbracket_{\update{\env}{X}{S_P}}$. Hence, $s \in T_P(S_P)$.
    \end{itemize}
    By mutual set inclusion, we conclude that $T_P(S_P) = S_P$.
\end{proof}

\begin{lemma}\label{lem:unfair-greatest}
    $S_P$ is the greatest fixed point of the transformer defined in \autoref{lem:unfair-fix}.
\end{lemma}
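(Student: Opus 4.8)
The plan is to invoke the standard characterisation of the greatest fixed point of a monotone operator as the union of all its post-fixed points. Since the body $\diam{\den}\tp \lor \boxm{\nonblock}\fp \lor \diam{\comp{\ddis}}X$ is syntactically monotone in $X$, the transformer $T_P$ of \autoref{lem:unfair-fix} is monotone, so its greatest fixed point equals $\bigcup\{\mathcal{F} \mid \mathcal{F} \subseteq T_P(\mathcal{F})\}$. As \autoref{lem:unfair-fix} already shows that $S_P$ is a fixed point of $T_P$, it suffices to prove that every post-fixed point $\mathcal{F}$ (i.e.\ every $\mathcal{F}$ with $\mathcal{F} \subseteq T_P(\mathcal{F})$) satisfies $\mathcal{F} \subseteq S_P$: then $S_P$ contains every post-fixed point and in particular every fixed point, while being itself a fixed point, hence is the greatest one, so $S_P = \llbracket \nu X. (\diam{\den}\tp \lor \boxm{\nonblock}\fp \lor \diam{\comp{\ddis}}X) \rrbracket$.

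So fix a post-fixed point $\mathcal{F}$ and an arbitrary $s \in \mathcal{F}$; I would build a $\block$-progressing, $(\emptyseq,\ddis,\den)$-violating path out of $s$ incrementally. The construction maintains the invariant that the finite path $s_0 t_1 s_1 \cdots s_i$ produced so far, with $s_0 = s$, has $s_i \in \mathcal{F}$ and has $t_1, \ldots, t_i$ all labelled with actions outside $\ddis \cup \den$. At each stage, since $s_i \in \mathcal{F} \subseteq T_P(\mathcal{F})$, the state $s_i$ satisfies at least one disjunct of the body under $\update{\env}{X}{\mathcal{F}}$, and I would branch on them in priority order. If $s_i$ satisfies $\boxm{\nonblock}\fp$, it is a $\block$-locked state; halt, and note that $s_0 t_1 \cdots s_i$ is finite, ends in a $\block$-locked state (hence is $\block$-progressing), and is $\ddis$-free, hence $(\emptyseq,\ddis,\den)$-violating via condition~3(a). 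Otherwise, if $s_i$ satisfies $\diam{\den}\tp$, take a $\den$-labelled transition out of $s_i$ and then prolong the path arbitrarily, at each step following some enabled non-blocking action, which is possible until a $\block$-locked state is reached; the result is infinite or ends $\block$-locked, hence $\block$-progressing, and since by the invariant everything before this first $\den$-occurrence is $\ddis$-free, the path is $(\emptyseq,\ddis,\den)$-violating via condition~3(b). Otherwise $s_i$ must satisfy $\diam{\comp{\ddis}}X$ with $X$ interpreted as $\mathcal{F}$; let $t_{i+1}$ be a witnessing transition to a state $s_{i+1} \in \mathcal{F}$. Because the previous branch did not apply, $s_i$ does not satisfy $\diam{\den}\tp$, so $\action{t_{i+1}} \notin \den$ as well, the invariant is preserved, and I continue from $s_{i+1}$.

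If the construction never halts it yields an infinite path out of $s$ all of whose steps avoid $\ddis$; being infinite it is $\block$-progressing, and being $\ddis$-free it is $(\emptyseq,\ddis,\den)$-violating via condition~3(a). In every case $s$ admits a $\block$-progressing $(\emptyseq,\ddis,\den)$-violating path, i.e.\ $s \in S_P$. Since $s \in \mathcal{F}$ was arbitrary, $\mathcal{F} \subseteq S_P$, which (together with \autoref{lem:unfair-fix}) completes the proof.

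The only genuinely delicate point is organising the case analysis so that the construction never gets stuck and never performs a $\ddis$-labelled step before the first $\den$-occurrence; handling $\boxm{\nonblock}\fp$ and $\diam{\den}\tp$ before $\diam{\comp{\ddis}}X$ is exactly what forces the transition chosen in the third branch to be labelled outside $\ddis \cup \den$, thereby preserving the invariant. The ``prolong arbitrarily'' step uses nothing beyond the elementary fact that every finite path in a finite LTS extends to one that is infinite or ends in a $\block$-locked state, so no appeal to feasibility of the more intricate criteria (\autoref{def:feas}) is needed here.
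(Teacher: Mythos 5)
Your proposal is correct and follows essentially the same route as the paper: an incremental path construction from an arbitrary state of a (post-)fixed point, maintaining the invariant that the endpoint stays in $\mathcal{F}$ and the path is $\ddis$-free, with the same three-way case analysis and the same treatment of the non-terminating case. The only differences are cosmetic — you argue over post-fixed points via Knaster--Tarski where the paper argues over fixed points directly, and your extra bookkeeping that steps also avoid $\den$ is harmless but unnecessary, since an entirely $\ddis$-free path is already violating via condition 3(a) regardless of intermediate $\den$-occurrences.
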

\begin{proof}
    We prove $S_P$ is the greatest fixed point by showing that for every $\mathcal{F} \subseteq \states$ such that $T_P(\mathcal{F}) = \mathcal{F}$, $\mathcal{F} \subseteq S_P$. To this end, let $\mathcal{F}$ be an arbitrary fixed point of $T_P$ and let $s$ be an arbitrary state in $\mathcal{F}$.
    We show that $s \in S_P$ by constructing a path $\pi$ from $s$ that is $\block$-progressing and $\ddis$-free up until the first occurrence of an action in $\den$.

    We initialise the construction with $\pi = s$. Throughout the construction, let $s'$ be the last state of the path $\pi$ constructed thus far.
    The invariants of this construction are that $s' \in \mathcal{F}$, and the $\pi$ constructed thus far is $\ddis$-free. This trivially holds initially.

    We know that $s' \in \mathcal{F}$ and $\mathcal{F} = T_P(\mathcal{F})$. Hence, $s' \in \llbracket \diam{\den}\tp \lor \boxm{\nonblock}\fp \lor \diam{\comp{\ddis}}X\rrbracket_{\update{\env}{X}{\mathcal{F}}}$.
    We do a case distinction on which of the three 
    cases $s'$ satisfies.
    \begin{itemize}
        \item If $s' \in \llbracket \diam{\den}\tp \rrbracket$ then $s'$ admits a transition labelled with an action in $\den$. Similar to the argument given in the proof of \autoref{lem:unfair-fix}, we can use this transition as the first transition of an otherwise arbitrary path $\pi'$ that is $\block$-progressing and, because its first transition is labelled with an action in $\den$, $\ddis$-free up until the first occurrence of an action in $\den$. We append this path $\pi'$ to the $\pi$ constructed thus far and terminate the construction. The resulting path $\pi$ is $\block$-progressing and $\ddis$-free up until the first occurrence of an action in $\den$, since before this iteration of the construction $\pi$ was $\ddis$-free. Hence, it witnesses $s \in S_P$.
        \item If $s' \in \llbracket \boxm{\nonblock}\fp \rrbracket$ then $s'$ is a $\block$-locked state and hence we can terminate with the $\pi$ constructed thus far. This path is $\block$-progressing since it ends in a $\block$-locked state and by the invariant it is $\ddis$-free. Hence, it witnesses $s \in S_P$.
        \item If $s' \in \llbracket \diam{\comp{\ddis}}X\rrbracket_{\update{\env}{X}{\mathcal{F}}}$ then $s'$ admits a non-$\ddis$-transition to a state in $\mathcal{F}$. We append this transition and its subsequent target to $\pi$ and continue the construction.
        We know that the new last state of $\pi$ must be in $\mathcal{F}$, and that the path is still $\ddis$-free.
    \end{itemize}
    We have argued in the termination cases that the resulting path witnesses $s \in S_P$.
    If the construction never terminates, then we have an infinite path, which is therefore $\block$-progressing. By the invariant, the path will also be $\ddis$-free, so it witnesses $s \in S_P$.

    Since such a witness can always be constructed, $\mathcal{F} \subseteq S_P$ and so $S_P$ is the greatest fixed point of $T_P$.
\end{proof}

\begin{corollary}\label{cor:unfair}
        The set of states characterised by $\nu X.(\diam{\den}\tp \lor \boxm{\nonblock}\fp \lor \diam{\comp{\ddis}}X)$ is exactly the set of states that admit $\block$-progressing paths that are $(\emptyseq,\ddis,\den)$-violating.
\end{corollary}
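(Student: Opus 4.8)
The plan is to read this corollary directly off \autoref{lem:unfair-fix} and \autoref{lem:unfair-greatest} via the standard fixpoint semantics of the $\mu$-calculus, so essentially no new work is needed. Write $\psi$ for the formula $\diam{\den}\tp \lor \boxm{\nonblock}\fp \lor \diam{\comp{\ddis}}X$. The only free occurrence of $X$ in $\psi$ sits under a diamond and under no negations, so $X$ occurs positively in $\psi$ and the operator $F\colon 2^{\states}\to 2^{\states}$ defined by $F(\mathcal{F}) = \llbracket\psi\rrbracket_{\update{\env}{X}{\mathcal{F}}}$ is $\subseteq$-monotone. Unwinding the definitions of $\nu$ and $\mu$ from \autoref{subsec:mucalc}, Knaster--Tarski gives that $\llbracket\nu X.\psi\rrbracket_{\env}$ is the $\subseteq$-greatest fixed point of $F$; and as $\nu X.\psi$ is closed this set does not depend on $\env$. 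One small bookkeeping point: the transformer $T_P$ of \autoref{lem:unfair-fix} coincides with $F$, since the set inside its $\bigcap_{a\in\nonblock}$ does not mention $a$ and so the intersection collapses (and if $\nonblock=\emptyset$ one simply reads $T_P$ as $F$).

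With this in place the argument is immediate. Let $S_P$ be the set of states that admit a $\block$-progressing $(\emptyseq,\ddis,\den)$-violating path (equivalently: a $\block$-progressing path that is $\ddis$-free up to the first occurrence of an action in $\den$), exactly as fixed just before the two lemmas. \autoref{lem:unfair-fix} states $F(S_P)=S_P$, so $S_P$ is a fixed point of $F$; \autoref{lem:unfair-greatest} states $\mathcal{F}\subseteq S_P$ for every fixed point $\mathcal{F}$ of $F$. Hence $S_P$ is the greatest fixed point of $F$, and therefore $\llbracket\nu X.\psi\rrbracket = S_P$, which is precisely the statement of the corollary.

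The main obstacle is not in this corollary but in the lemmas it rests on, and in particular in \autoref{lem:unfair-greatest}: its proof is the genuinely coinductive step, turning membership in an arbitrary fixed point $\mathcal{F}$ into an explicit $\block$-progressing path that is $\ddis$-free up to the first $\den$-action, via a step-by-step construction driven by the case split on which of $\diam{\den}\tp$, $\boxm{\nonblock}\fp$ and $\diam{\comp{\ddis}}X$ the current state satisfies. That work has already been carried out above, so the corollary itself is pure packaging. The remaining step towards \autoref{thm:unfair} — replacing $\emptyseq$ by a general regular prefix $\dpre$ and adding the outer negation — is then a separate short argument using the semantics of $\neg\diam{\dpre}(\cdot)$ together with the decomposition $\pi=\pi_{\mathit{pre}}\co\pi_{\mathit{suf}}$ from the definition of a $(\dpre,\ddis,\den)$-violating path.
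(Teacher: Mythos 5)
Your proposal is correct and matches the paper exactly: the corollary is stated there without proof precisely because it is the immediate Knaster--Tarski packaging of \autoref{lem:unfair-fix} ($S_P$ is a fixed point) and \autoref{lem:unfair-greatest} (every fixed point is contained in $S_P$), which is all you do. Your side remark that the $\bigcap_{a\in\nonblock}$ in the definition of $T_P$ is vacuous (the set inside does not mention $a$) is also accurate — it is a harmless artefact carried over from the analogous transformer in the later proofs.
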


All that remains is to prove that the $\dpre$ aspect of the $(\dpre, \ddis, \den)$-violating path is properly characterised in the full formula.
\begin{lemma}\label{lem:violate-unfair}
    For all environments $\env$ and states $s \in \states$, it holds that $s \in \llbracket \diam{\dpre}\nu X.(\diam{\den}\tp \lor \boxm{\nonblock}\fp \lor \diam{\comp{\ddis}}X) \rrbracket_{\env}$ if, and only if, $s$ admits a path that is $\block$-progressing and $(\dpre, \ddis, \den)$-violating.
\end{lemma}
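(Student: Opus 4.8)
The plan is to reduce \autoref{lem:violate-unfair} to \autoref{cor:unfair} by prefixing it with a standard characterisation of the diamond over regular formulae. Concretely, I would first prove the auxiliary claim that for every regular formula $R$, every modal $\mu$-calculus formula $\phi$, every environment $\env$ and every state $s\in\states$, we have $s\in\llbracket\diam{R}\phi\rrbracket_{\env}$ if, and only if, $s$ admits a finite path that matches $R$ and whose final state lies in $\llbracket\phi\rrbracket_{\env}$. This is proven by induction on the structure of $R$.

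The cases $R=\emptyseq$ and $R=\alpha$ follow immediately from the abbreviations $\diam{\emptyseq}\phi=\phi$ and $\diam{\alpha}\phi=\bigvee_{a\in\alpha}\diam{a}\phi$ together with the base semantics of $\diam{a}$: the witnessing path is the empty path, respectively a path of length one. For $R=R_1\co R_2$ one uses $\diam{R_1\co R_2}\phi=\diam{R_1}\diam{R_2}\phi$ and applies the induction hypothesis twice, splitting a witnessing path at the endpoint of its $R_1$-matching prefix and glueing in the other direction; for $R=R_1+R_2$ one uses $\diam{R_1+R_2}\phi=\diam{R_1}\phi\lor\diam{R_2}\phi$ and selects the disjunct that the path matches. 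The only delicate case is $R=\clos{R_1}$, where $\diam{\clos{R_1}}\phi=\mu Y.(\diam{R_1}Y\lor\phi)$ for a fresh $Y$; here I would use the approximant characterisation of least fixpoints recalled just before \autoref{prop:lfp-basic}, showing by a nested induction on the approximant index $i$ that $s$ lies in the $i$'th approximant exactly when $s$ admits a path that is a concatenation of fewer than $i$ segments, each matching $R_1$, ending in $\llbracket\phi\rrbracket_{\env}$; in the step one invokes the outer induction hypothesis for $R_1$ to peel off one segment. As in the proof of \autoref{prop:lfp-basic}, a pigeonhole argument over $\states$ shows that no more than $|\states|$ segments are ever needed, so the union of the approximants captures exactly the states admitting such paths. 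This star case is the main obstacle, though it is entirely routine.

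Given the auxiliary claim, I would instantiate it with $R=\dpre$ and $\phi=\nu X.(\diam{\den}\tp\lor\boxm{\nonblock}\fp\lor\diam{\comp{\ddis}}X)$, the greatest-fixpoint subformula of \autoref{mucalc:unfair}. Then $s\in\llbracket\diam{\dpre}\phi\rrbracket_{\env}$ holds iff $s$ admits a finite path $\pi_{\mathit{pre}}$ matching $\dpre$ and ending in some state $s'$ with $s'\in\llbracket\phi\rrbracket$. By \autoref{cor:unfair}, $s'\in\llbracket\phi\rrbracket$ iff $s'$ admits a $\block$-progressing path $\pi_{\mathit{suf}}$ that is $(\emptyseq,\ddis,\den)$-violating, i.e.\ $\ddis$-free up until the first occurrence of an action in $\den$. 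For the forward direction, since $\pi_{\mathit{pre}}$ is finite and ends where $\pi_{\mathit{suf}}$ starts, $\pi:=\pi_{\mathit{pre}}\co\pi_{\mathit{suf}}$ is a well-defined path; it is $\block$-progressing precisely because $\pi_{\mathit{suf}}$ is (a finite prefix affects neither infiniteness nor ending in a $\block$-locked state), and $\pi_{\mathit{pre}},\pi_{\mathit{suf}}$ witness that $\pi$ is $(\dpre,\ddis,\den)$-violating. For the converse, any $\block$-progressing $(\dpre,\ddis,\den)$-violating path from $s$ decomposes by definition as $\pi_{\mathit{pre}}\co\pi_{\mathit{suf}}$ with $\pi_{\mathit{pre}}$ a finite path matching $\dpre$ and $\pi_{\mathit{suf}}$ satisfying condition~3 of the violating-path definition; then $\pi_{\mathit{suf}}$, starting in the final state $s'$ of $\pi_{\mathit{pre}}$, is itself $\block$-progressing and $(\emptyseq,\ddis,\den)$-violating, so $s'\in\llbracket\phi\rrbracket$ by \autoref{cor:unfair} and hence $s\in\llbracket\diam{\dpre}\phi\rrbracket_{\env}$ by the auxiliary claim. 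Both directions together yield the lemma.
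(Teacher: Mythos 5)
Your proof is correct and takes essentially the same route as the paper's: split off a finite prefix matching $\dpre$ via the diamond over the regular formula, then invoke \autoref{cor:unfair} on the state reached and concatenate (resp.\ decompose) to obtain the witnessing path in each direction. The paper merely asserts that the characterisation of $\diam{\dpre}\phi$ ``follows directly from the definition of the diamond operator'', whereas you make the underlying structural induction on regular formulae (including the approximant argument for $\clos{R}$) explicit --- added rigour, but not a different argument.
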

\begin{proof}
    It follows directly from the definition of the diamond operator that this formula characterises those states that admit a path $\pi$ that has a prefix matching $\dpre$ that ends in some state $s'$ in $\llbracket\nu X.(\diam{\den}\tp \lor \boxm{\nonblock}\fp \lor \diam{\comp{\ddis}}X) \rrbracket_{\env}$. By \autoref{cor:unfair}, $s'$ admits a $\block$-progressing path $\pi'$ that is $(\emptyseq,\ddis,\den)$-violating. Hence, $\pi \co \pi'$ witnesses the admission of a $\block$-progressing and $(\dpre, \ddis, \den)$-violating path.
\end{proof}
\autoref{mucalc:unfair} is the negation of this formula, hence it expresses that a state does not admit such a path.
\autoref{thm:unfair} follows directly.

\subsection{Proof of WFA, WHFA and JA Formulae}\label{app:gen-proof}

In \autoref{subsec:gen}, we defined finitely realisable predicates as follows:
\defgen*

We gave a general formula for feasible, finitely-realisable completeness criteria, 
\autoref{mucalc:gen}:
\begin{equation*}
    \neg\diam{\dpre}\nu X. (\bigwedge_{a \in \nonblock}(\formon{a} \imps 
    \diam{\clos{\comp{\ddis}}}(\diam{\den}\tp \lor (\formoff{a} \land X) \lor 
    \diam{\actelim{a} \setminus \ddis}X)))
\end{equation*}

We will prove the correctness of this formula before proving that it can indeed be used to express WFA, WHFA and JA.
However, before proving anything on the general formula, we first prove a few properties that can be derived from those given in \autoref{def:gen}.
\begin{proposition}\label{prop:gen-finite}
    Every $\block$-progressing, finite path satisfies every finitely realisable path predicate, for all $\block\subseteq\actionset$.   
\end{proposition}
\begin{proof}
    Let $P$ be an arbitrary finitely realisable predicate on paths and let $\pi$ be an arbitrary $\block$-progressing, finite path.
    We assume towards a contradiction that $\pi$ does not satisfy $P$.
    Then there must exist a state $s$ on $\pi$ and an action $a \in \nonblock$ such that $s$ satisfies $ \formon{a}$, and the suffix of $\pi$ starting in $s$, $\pi'$, does not contain an occurrence of an action in $\actelim{a}$ nor a state that satisfies $\formoff{a}$.
    Since $\pi$ is finite, so is $\pi'$. Let $s_f$ be the final state of $\pi'$.
    By the persistent property, $s_f$ satisfies $\formon{a}$.
    However, by the locking property that means $s_f$ cannot be a $\block$-locked state.
    Since $s_f$ is also the final state of $\pi$, this means $\pi$ is not a $\block$-progressing path.
    This contradicts the assumption on $\pi$; we conclude that $\pi$ satisfies $P$.
\end{proof}

\begin{proposition}\label{prop:gen-to-progress}
    Every path that satisfies a finitely realisable path predicate $P$ is $\block$-progressing.
\end{proposition}
\begin{proof}
    Let $P$ be an arbitrary finitely realisable predicate on paths and let $\pi$ be an arbitrary path.
    If $\pi$ is infinite, then it is trivially $\block$-progressing.
    Hence, we assume $\pi$ is finite. 
    Let $s$ be the final state of $\pi$.
    Towards a contradiction, assume $s$ is not $\block$-locked.
    Then by the locking property, there must exist an action $a \in \nonblock$ such that $s \in \llbracket \formon{a}\rrbracket$.
    Since $\pi$ satisfies $P$, it must be the case that in the suffix $\pi'$ of $\pi$ starting in $s$, an action in $\actelim{a}$ occurs or a state occurs that satisfies $\formoff{a}$, by the invariant property.
    The suffix of $\pi$ starting in its final state $s$ consists of only the state $s$ itself, so trivially there are no action occurrences. So it must be the case that $s \in \llbracket \formoff{a} \rrbracket$.
    However, by the exclusive property this is impossible.
    We have reached a contradiction and conclude that $s$ is a $\block$-locked state, and hence $\pi$ is $\block$-progressing.
\end{proof}

\begin{proposition}\label{prop:gen-prepend}
    If a path $\pi$ satisfies finitely realisable path predicate $P$, then every path of which $\pi$ is a suffix also satisfies $P$.
\end{proposition}
\begin{proof}
    Let $P$ be an arbitrary finitely realisable predicate on paths and let $\pi$ be an arbitrary path that satisfies $P$, with initial state $s_0$.
    Let $\pi'$ be an arbitrary finite path that ends in $s_0$.
    We prove that $\pi'' = \pi' \co \pi$ satisfies $P$.
    Towards a contradiction, assume $\pi''$ does not satisfy $P$.
    Then there must exist a state $s$ on $\pi''$ such that $s$ satisfies $\formon{a}$ for some $a \in \nonblock$, yet the suffix of $\pi''$ starting in $s$, $\pi_s$, contains no occurrence of an action in $\actelim{a}$ nor a state that satisfies $\formoff{a}$.
    If $s$ is on $\pi$, then since $\pi$ satisfies $P$ this situation is impossible.
    Hence, we must assume $s$ is on $\pi'$.
    Let $\pi_s'$ be the subpath of $\pi''$ from $s$ to $s_0$, this is a prefix of $\pi_s$.
    Since $\pi_s$ contains no occurrence of an action in $\actelim{a}$ or a state in $\formoff{a}$, the same holds for $\pi_s'$.
    By the persistent property, that means $s_0$ satisfies $\formon{a}$.
    Since $s_0$ is on $\pi$, and $\pi$ satisfies $P$, this means that $\pi$ must contain an occurrence of an action in $\actelim{a}$ or a state in which $\formoff{a}$ is satisfied.
    Since $\pi$ is a suffix of $\pi_s$ as well, $\pi_s$ contains such an occurrence too.
    This contradicts that $\pi_s$ does not contain such an occurrence.
    Hence, we conclude $\pi''$ satisfies $P$.    
\end{proof}

\begin{proposition}\label{prop:gen-suffix}
    If a path $\pi$ satisfies a finitely realisable path predicate $P$, then every suffix of $\pi$ also satisfies $P$.
\end{proposition}
\begin{proof}
    This follows directly from the invariant property: a path $\pi$ satisfies an finitely realisable predicate on paths $P$ exactly when every state on $\pi$ meets the condition that all non-blocking actions for which the state satisfies $\formon{a}$ are eliminated (through $\actelim{a}$ or $\formoff{a}$) in the suffix of the path starting in that state.
    If this condition holds on all states of a path, it also holds on all states of a suffix of that path.
\end{proof}

We now prove \autoref{thm:gen}:
\thmgen*

For the proof of this theorem, we fix $\block$, $\dpre$, $\ddis$ and $\den$. We also fix a feasible, finitely realisable path predicate $P$.
To characterise the semantics of the formula, we first split it into multiple smaller subformulae.
\begin{align*}
    &\mathit{violate_G} &&= \diam{\dpre}\mathit{invariant_G}\\
    &\mathit{invariant_G} &&= \nu X. (\bigwedge_{a \in \nonblock}(\formon{a} \imps \mathit{eliminate_G}(a)))\\
    &\mathit{eliminate_G}(a) &&= \diam{\clos{\comp{\ddis}}}(\diam{\den}\tp \lor (\formoff{a} \land X) \lor \diam{\actelim{a} \setminus \ddis}X )
\end{align*}
We have that \autoref{mucalc:gen} $= \neg\mathit{violate_G}$.

We can now characterise the semantics of parts of the formula.
We start with $\mathit{eliminate_G}$.
\begin{lemma}\label{lem:sat}
    For all environments $\env$, states $s \in \states$, actions $a \in \nonblock$ and sets $\mathcal{F} \subseteq \states$, it holds that $s \in \llbracket \mathit{eliminate_G(a)\rrbracket_{\env [X := \mathcal{F}]}}$ if, and only if, $s$ admits a finite path $\pi$ with final state $s_{\mathit{final}}$ that satisfies the following conditions:
    \begin{enumerate}
        \item $\pi$ is $\ddis$-free, and
        \item one of the following three holds:
        \begin{enumerate}
            \item at least one action in $\den$ is enabled in $s_{\mathit{final}}$, or
            \item $s_{\mathit{final}} \in \mathcal{F}$ and $s_{\mathit{final}}$ satisfies $\formoff{a}$, or
            \item $s_{\mathit{final}} \in \mathcal{F}$ and the last transition in $\pi$, $t_{\mathit{final}}$, is labelled with an action in $\actelim{a}\setminus\ddis$.
        \end{enumerate}
    \end{enumerate}
\end{lemma}
\begin{proof}
    Note that the formula $\mathit{eliminate_G}(a)$ is equivalent to $\mu Y. (\diam{\den}\tp \lor (\formoff{a} \land X) \lor \diam{\actelim{a}\setminus\ddis}X \lor \diam{\comp{\ddis}}Y)$, since the regular formula hides the least fixpoint operator.
    Since $\formoff{a}$ is a closed formula,$\diam{\den}\tp \lor (\formoff{a} \land X) \lor \diam{\actelim{a}\setminus \ddis}X$ is a formula that does not depend on $Y$. We can therefore apply \autoref{prop:lfp-basic} (with $\phi_1 = \tp$) to conclude that $s \in \llbracket \mathit{eliminate_G}(a) \rrbracket_{\update{\env}{X}{\mathcal{F}}}$ if, and only if, $s$ admits a finite path $\pi'$ meeting the following conditions:
    \begin{enumerate}
    \setcounter{enumi}{2}
        \item all actions occurring in $\pi'$ are in $\comp{\ddis}$, and
        \item the final state of $\pi'$, $s'_{\mathit{final}}$ is in $\llbracket \diam{\den}\tp \lor (\formoff{a} \land X) \lor \diam{\actelim{a} \setminus \ddis}X \rrbracket_{\update{\env}{X}{\mathcal{F}}}$. 
    \end{enumerate}
    Condition 3 can be restated as $\pi'$ being $\ddis$-free.
    Condition 4 can be subdivided into three options: at least one action in $\den$ is enabled in $s'_{\mathit{final}}$ (4a), or $s'_{\mathit{final}}\in \mathcal{F}$ and  $s'_{\mathit{final}}$ satisfies $\formoff{a}$ (4b), or $s'_{\mathit{final}}$ admits a transition to a state in $\mathcal{F}$ that is labelled with an action in $\actelim{a} \setminus \ddis$ (4c).
    The correspondence between 1 and 3, as well as 2a and 4a, and 2b and 4b, is direct. In the a and b cases, in both directions, $\pi$ and $\pi'$ describe paths meeting the exact same conditions.
    It remains to argue that if $s$ admits a path $\pi$ meeting conditions 1 and 2c, it also admits a path $\pi'$ meeting conditions 3 and 4c, and vice versa. 
    Here too, the correspondence is rather direct: $\pi$ contains at least one transition, its last transition, which is labelled with an action in $\actelim{a} \setminus \ddis$ to a state in $\mathcal{F}$. A path $\pi'$ can be constructed from $\pi$ by dropping this last transition. The other way around, 4c witnesses such a transition exists, so by appending this transition to $\pi'$ a path $\pi$ can be constructed.
    Hence, it follows from \autoref{prop:lfp-basic} that $s$ indeed admits a path meeting conditions 1 and 2 exactly when it is in the semantics of $\mathit{eliminate_G}(a)$ when $X = \mathcal{F}$.
\end{proof}

We proceed to characterise the semantics of $\mathit{invariant_G}$. 
We will prove that it exactly describes those states that admit paths that are $\block$-progressing, $(\emptyseq, \ddis,\den)$-violating and satisfy $P$. 
Formally, define set $S_G$ to be exactly those states in $M$ that admit a path $\pi$ meeting the following conditions:
\begin{itemize}
    \item $\pi$ satisfies $P$, and
    \item $\pi$ is $\block$-progressing, and
    \item $\pi$ satisfies one of the following conditions:
    \begin{itemize}
        \item $\pi$ is $\ddis$-free, or
        \item $\pi$ contains an occurrence of an action in $\den$, and the prefix of $\pi$ before the first occurrence of an action in $\den$ is $\ddis$-free.
    \end{itemize}
\end{itemize}
As before, we frequently refer to the last condition as $\pi$ being $\ddis$-free up until the first occurrence of an action in $\den$.

We prove that $\llbracket \mathit{invariant_G} \rrbracket_{\env} = S_G$. We do this by proving $S_G$ is the greatest fixed point of the transformer characterising $\mathit{invariant_G}$. To do this, we first prove it is a fixed point of this transformer and then that every fixed point of this transformer is a subset of $S_G$.
\begin{lemma}\label{lem:inv-fix}
    $S_G$ is a fixed point of the transformer $T_G$ defined by:
    \begin{multline*}
        T_G(\mathcal{F}) = \bigcap_{a \in \nonblock}\left\{s \in \states \mid s \in \llbracket\formon{a}\rrbracket_{\env [X := \mathcal{F}]} \imps 
        s \in \llbracket \mathit{eliminate_G}(a)\rrbracket_{\update{\env}{X}{\mathcal{F}}}\right\}
    \end{multline*}
\end{lemma}
\begin{proof}
    To prove $S_G$ is a fixed point of $T_G$, we need to prove $T_G(S_G) = S_G$. We do this through mutual set inclusion.
    \begin{itemize}
        \item Let $s$ be an arbitrary element of $T_G(S_G)$, we prove $s \in S_G$. From $s \in T_G(S_G)$ we know that for all actions $a \in \nonblock$, $s \in \llbracket \formon{a} \rrbracket_{\env[X := S_G]}$ implies $s \in \llbracket \mathit{eliminate_G}(a) \rrbracket_{\env[X:= S_G]}$. We do a case distinction on whether $s$ satisfies $\formon{a}$ for some $a \in \nonblock$.
        \begin{itemize}
            \item If there are no non-blocking actions $a$ such that $s$ satisfies $\formon{a}$, then by the locking property it must be the case that $s$ is a $\block$-locked state. Hence, trivially $s$ admits a $\block$-progressing path on which no actions in $\ddis$ ever occur, namely the empty path. By \autoref{prop:gen-finite}, it must be the case that a finite, $\block$-progressing path satisfies $P$. Hence, $s \in S_G$. 
            \item If there is such an action, let $a$ be an arbitrary non-blocking action such that $s$ satisfies $\formon{a}$. Because $s \in \llbracket \formon{a} \rrbracket_{\env[X := S_G]}$, $s \in \llbracket \mathit{eliminate_G}(a)\rrbracket_{\env [X := S_G]}$. By \autoref{lem:sat}, we conclude that $s$ admits a finite path $\pi$ with final state $s_{\mathit{final}}$ meeting the following requirements:
            \begin{enumerate}
                \item $\pi$ is $\ddis$-free, and
                \item one of the following three holds:
                \begin{enumerate}
                    \item at least one action in $\den$ is enabled in $s_{\mathit{final}}$, or
                    \item $s_{\mathit{final}} \in S_G$ and $s_{\mathit{final}}$ satisfies $\formoff{a}$, or
                    \item $s_{\mathit{final}} \in S_G$ and the last transition in $\pi$, $t_{\mathit{final}}$ is labelled with an action in $\actelim{a}\setminus \ddis$.
                \end{enumerate}
            \end{enumerate}
            We use this information to construct a path $\pi_w$ that witnesses $s \in S_G$. For this, we do a case distinction on whether 2a holds.
            \begin{itemize}
                \item If $\pi$ satisfies 1 and 2a, then $\pi$ is a path on which no actions in $\ddis$ occur until a state is reached where at least one action in $\den$ is enabled. Let $\pi'$ be $\pi$ extended with an arbitrary transition labelled with an action in $\den$ and the subsequent target state. Then $\pi'$ is a finite path on which there are no occurrences of actions in $\ddis$ before the first occurrence of an action in $\den$. This path is either already $\block$-progressing or not. In the former case, it is finite and therefore trivially satisfies $P$ (\autoref{prop:gen-finite}). We can then take $\pi' = \pi_w$. Alternatively, it is not $\block$-progressing. In this case, then because $P$ is feasible we can extend it to a path $\pi_w$ that satisfies $P$, and therefore by \autoref{prop:gen-to-progress} is also $\block$-progressing. On $\pi_w$, it is still the case that there are no occurrences of actions in $\ddis$ until the first occurrence of an action in $\den$. Hence, we have provided a witness for $s \in S_G$.
                \item If $\pi$ does not satisfy 2a, then it satisfies 1 and 2b, or 1 and 2c. In either case, $s$ admits a finite, $\ddis$-free path to a state $s_{\mathit{final}}$. Since $s_{\mathit{final}}$ is in $S_G$, we know it admits a $\block$-progressing path $\pi'$ that satisfies $P$ and is $\ddis$-free up until the first occurrence of an action in $\den$. Let $\pi_w = \pi \co \pi'$. Since $\pi$ is $\ddis$-free, we know that prepending $\pi$ to $\pi'$ does not violate that the resulting path is $\ddis$-free until the first occurrence of an action in $\den$. Additionally, by \autoref{prop:gen-prepend}, $\pi_w$ satisfies $P$ because $\pi'$ does. Similarly, $\pi_w$ is $\block$-progressing because $\pi'$ is. Hence, $\pi_w$ witnesses $s \in S_G$.
            \end{itemize}
        \end{itemize}
        We have proven in every scenario that $s \in S_G$.

        \item Assume $s \in S_G$, we prove $s \in T_G(S_G)$. To prove $s \in T_G(S_G)$, we need to show that for all non-blocking actions $a$ such that $s$ satisfies $\formon{a}$, $s \in \llbracket \mathit{eliminate_G}(a) \rrbracket_{\env [X := S_G]}$. Towards this end, we prove $s \in \llbracket \mathit{eliminate_G}(a) \rrbracket_{\env [X := S_G]}$, where $a$ is an arbitrary non-blocking action such that $\formon{a}$ is satisfied in $s$. By \autoref{lem:sat}, it is sufficient to prove that $s$ admits a finite path $\pi$ with final state $s_{\mathit{final}}$ satisfying the the same two conditions as given previously as (1) and (2).
        We know from $s \in S_G$ that $s$ admits a $\block$-progressing path $\pi_G$ that satisfies $P$ and is $\ddis$-free up until the first occurrence of an action in $\den$. We use $\pi_G$ to construct $\pi$.

        First, we do a case distinction on whether there is an occurrence of an action in $\den$ in $\pi_G$.
        \begin{itemize}
            \item If there is, then let $t_{\mathit{en}}$ be the first transition of $\pi_G$ that is labelled with an action in $\den$. Let $s_{\mathit{final}}$ be the source state of $t_{\mathit{en}}$. Then we take $\pi$ to be the prefix of $\pi_G$ ending in $s_{\mathit{final}}$. Since $\pi_G$ is $\ddis$-free up until the first occurrence of an action in $\den$, and $t_{\mathit{en}}$ is the first occurrence of action action in $\den$, we know $\pi$ is $\ddis$-free, so satisfies condition 1. Additionally, by construction an action in $\den$ is enabled in $s_{\mathit{final}}$, the final state of $\pi$, because $t_{\mathit{en}}$ is enabled. Therefore, $\pi$ satisfies condition 2a. We have constructed a path $\pi$ meeting conditions 1 and 2.
            \item If there is no occurrence of an action in $\den$ in $\pi_G$, then $\pi_G$ is fully $\ddis$-free. Recall that we assumed that $s$ satisfies $\formon{a}$ for the non-blocking action $a$. Since $\pi_G$ satisfies $P$, we know by the invariant property that $\pi_G$ contains an occurrence of some action in $\actelim{a}$ or a state that satisfies $\formoff{a}$. We do a case distinction on whether some state of $\pi_G$ satisfies $\formoff{a}$.
            \begin{itemize}
                \item If $\pi_G$ contains a state satisfying $\formoff{a}$, let $s_{\mathit{final}}$ be the first such state and let $\pi$ be the prefix of $\pi_G$ ending in $s_{\mathit{final}}$. Then $\pi$ is $\ddis$-free because $\pi_G$ is, and ends in a state satisfying $\formoff{a}$. To show $\pi$ satisfies conditions 1 and 2b, it remains to show that $s_{\mathit{final}}$ is in $S_G$. By \autoref{prop:gen-suffix}, the suffix $\pi_G'$ of $\pi_G$ starting in $s_{\mathit{final}}$ satisfies $P$. Additionally, $\pi_G'$ is $\block$-progressing and $\ddis$-free because $\pi_G$ is. Hence, $\pi_G'$ witnesses $s_{\mathit{final}} \in S_G$. The $\pi$ we have constructed satisfies conditions 1 and 2.
                \item If no state of $\pi_G$ satisfies $\formoff{a}$, then $\pi_G$ must contain an occurrence of some action in $\actelim{a}$. Then we let $t$ be the first transition that occurs in $\pi_G$ that is labelled with an action in $\actelim{a}$. Let $\action{t} = b$. Note that because $\pi_G$ is $\ddis$-free, $b \in \actelim{a}\setminus\ddis$. Let $\pi$ be the prefix of $\pi_G$ such that $t$ is the final transition of $\pi$. Then $\pi$ is $\ddis$-free (condition 1) and the last transition of $\pi$ is labelled with an action in $\actelim{a}\setminus\ddis$. To show this $\pi$ satisfies both 1 and 2c, it remains to show that the final state of $\pi$, $s_{\mathit{final}}$ is in $S_G$. The same argument applies here as in the previous case. Hence, $\pi$ satisfies conditions 1 and 2.
            \end{itemize}
        \end{itemize}
        We have proven $s \in \llbracket \mathit{eliminate_G}(a)\rrbracket_{\env [X := S_G]}$ for all $a$ such that $s$ satisfies $\formon{a}$, and hence $s \in T_G(S_G)$.
    \end{itemize}
    By mutual set inclusion, $S_G$ is a fixed point of $T_G$.
\end{proof}

Next, we need to prove $S_G$ is the greatest fixed point of the transformer.
For this, we first need a supporting lemma.
\begin{lemma}\label{lem:inner}
    For all states $s$ in a fixed point $\mathcal{F}$ of $T_G$ as defined in \autoref{lem:inv-fix}, if there is no action in $\den$ that is reachable from $s$ without doing an action in $\ddis$ and there exists at least one action $a \in \nonblock$ such that $s$ satisfies $\formon{a}$, then there exists a finite path $\pi$ from $s$ to some state $s'$ meeting all of the following conditions:
    \begin{enumerate}
        \item $s' \in \mathcal{F}$, and
        \item $\pi$ has length at least one, and
        \item $\pi$ is $\ddis$-free, and
        \item for all actions $a \in \nonblock$ such that $s$ satisfies $\formon{a}$, there is a state on $\pi$ that satisfies $\formoff{a}$ or there is a transition on $\pi$ labelled with an action in $\actelim{a}$.
    \end{enumerate}
\end{lemma}
\begin{proof}
    Let $\mathcal{F}$ be an arbitrary fixed point of $T_G$ and let $s$ be an arbitrary state in $\mathcal{F}$ such that there is at least one non-blocking action for which $s$ satisfies $\formonsym$ and it is impossible to reach a state in which an action in $\den$ is enabled from $s$ without doing an action in $\ddis$.
    Let $L$ be the set of non-blocking actions for which $s$ satisfies $\formonsym$. By the definition of $s$, we have $|L| \geq 1$.

    We prove the more general claim that for all subsets $L' \subseteq L$ that contain at least one element, $s$ admits a path $\pi$ meeting conditions 1 through 3 as well as the following rephrasing of condition 4 with respect to $L'$: for all actions $a \in L'$, there is a state on $\pi$ that satisfies $\formoff{a}$ or there is a transition on $\pi$ labelled with an action in $\actelim{a}$. We prove this by induction on the size of $L'$. Since $L$ contains at least one element and is a subset of itself, this also proves the lemma.

    \textit{Base case}: for $|L'| = 1$, let $a$ be the one non-blocking action in $L'$. Since $L' \subseteq L$, $s$ satisfies $\formon{a}$. Observe that since $s \in \mathcal{F}$ and $\mathcal{F} = T_G(\mathcal{F})$, $s \in \llbracket \formon{a}\rrbracket_{\update{\env}{X}{\mathcal{F}}} \imps s \in \llbracket \mathit{eliminate_G}(a) \rrbracket_{\update{\env}{X}{\mathcal{F}}}$. By \autoref{lem:sat}, this allows us to conclude that $s$ admits a finite path $\pi$ to some state $s'$ such that the following holds:
    \begin{enumerate}
    \setcounter{enumi}{4}
        \item $\pi$ is $\ddis$-free, and
        \item one of the following three holds:
        \begin{enumerate}
            \item at least one action in $\den$ is enabled in $s'$, or
            \item $s' \in \mathcal{F}$ and $s'$ satisfies $\formoff{a}$, or
            \item $s' \in \mathcal{F}$ and the last transition of $\pi$, $t$, is labelled with an action in $\actelim{a}\setminus\ddis$.
        \end{enumerate}
    \end{enumerate}
    We prove that this $\pi$ also satisfies conditions 1 through 4.
    We assumed that it is impossible to reach a state in which an action in $\den$ is enabled from $s$ without doing an action in $\ddis$. Since $\pi$ is $\ddis$-free, it is therefore impossible that an action in $\den$ is enabled in $s'$. We can conclude that condition 6a cannot hold.
    Hence, 6b or 6c must hold. 
    In either case, $s' \in \mathcal{F}$ so condition 1 is satisfied.
    If condition 6b holds, then $s'$ satisfies $\formoff{a}$. We know that $s$ satisfies $\formon{a}$ and by the exclusive property, this means $s$ and $s'$ cannot be the same state so condition 2 holds.
    If 6c holds instead, then since $\pi$ contains at least one transition it has length at least one, so condition 2 also holds.
    Condition 3 follows directly from condition 5.
    It remains to show that condition 4 is satisfied, regardless of whether 6b or 6c holds. This follows from $a$ being the only action in $L'$ and, either through 6b or 6c, there being some occurrence of a state satisfying $\formoff{a}$ or an occurrence of an action in $\actelim{a}$.
    We conclude that this $\pi$ meets conditions 1 through 4.

    \textit{Step case}: let $|L'| = n + 1$ for $n \geq 1$ and assume that the claim holds when $|L'| = n$ (\textit{induction hypothesis}). We prove that $s$ has a path $\pi$ to a state $s'$ meeting all four conditions.
    Let $a$ be an arbitrary action in $L'$. Now apply the induction hypothesis to $s$ with respect to the set $L'' = L' \setminus \{a\}$, which has size $n$. This gives us a path $\pi'$ to a state $s''$ meeting all four conditions, condition 4 specifically with respect to $L''$.
    We use $\pi'$ to construct $\pi$. 
    We do a case distinction on whether $s''$ satisfies $\formon{a}$.
    \begin{itemize}
        \item If it does, then since $s'' \in \mathcal{F}$ and $\mathcal{F}$ is a fixed point of $T_G$, $s'' \in \llbracket \mathit{eliminate_G}(a)\rrbracket_{\update{\env}{X}{\mathcal{F}}}$. We can then use the same argument as given in the base case and apply \autoref{lem:sat} to find a path $\pi''$ from $s''$ to some state $s' \in \mathcal{F}$ that is $\ddis$-free and on which an action in $\actelim{a}$ occurs or on which a state exists that satisfies $\formoff{a}$. Let $\pi = \pi' \co \pi''$; this $\pi$ satisfies all four conditions. That $s' \in \mathcal{F}$ follows from the application of \autoref{lem:sat}; we know from the application of the induction hypothesis that $\pi'$ has length at least one, so $\pi$ does as well; both $\pi'$ and $\pi''$ are $\ddis$-free; and finally for all actions in $L'$ except $a$ there is a state in which $\formoffsym$ is satisfied for this action, or an action in the associated $\actelimsym$ occurs in $\pi'$, the action $a$ itself is eliminated in $\pi''$. Hence, we eliminate all actions in $L'$.
        \item If $s''$ does not satisfy $\formoff{a}$, then by the persistent property, it must be the case that $\pi'$ already contains a state that satisfies $\formoff{a}$ or an occurrence of an action in $\actelim{a}$. Hence, $\pi'$ already satisfies all four conditions for the whole set $L'$ and so is a candidate for $\pi$.
    \end{itemize}
    In either case, there exists a path $\pi$ meeting all four conditions.
    This proves the claim on arbitrary subsets of $L$ containing at least one element. As stated, $L$ is a subset of itself and contains at least one element, and so the lemma follows directly.
\end{proof}

\begin{lemma}\label{lem:inv-greatest}
    $S_G$ is the greatest fixed point of the transformer $T_G$ as defined in \autoref{lem:inv-fix}.
\end{lemma}
\begin{proof}
    Let $\mathcal{F} \subseteq \states$ be an arbitrary fixed point of $T_G$, meaning we have $\mathcal{F} = T_G(\mathcal{F})$. We prove that $\mathcal{F} \subseteq S_G$.
    To this end, let $s$ be an arbitrary state in $\mathcal{F}$. 
    We prove $s \in S_G$ by constructing a path $\pi$ from $s$ that is $\block$-progressing, satisfies $P$ and on which no actions in $\ddis$ occur up until the first occurrence of an action in $\den$.

    We do a case distinction on whether there exists some non-blocking action $a$ such hat $s$ satisfies $\formon{a}$.
    If no such action exists, then by the locking property, $s$ is a $\block$-locked state. In this case, the empty path is a $\block$-progressing from $s$ which, by \autoref{prop:gen-finite}, satisfies $P$ and on which trivially no actions in $\ddis$ occur. This path witnesses $s \in S_G$.
    
    We proceed under the assumption that there exists some non-blocking action $a$ such that $s$ satisfies $\formon{a}$.
    Let $a$ be an arbitrary such action.
    Since $s \in \mathcal{F}$ and $\mathcal{F}$ is a fixed point of $T_G$, we conclude that $s \in \llbracket \mathit{eliminate_G}(a)\rrbracket_{\update{\env}{X}{\mathcal{F}}}$. 
    By \autoref{lem:sat} we know that $s$ admits a finite path $\pi'$ to a state $s_{\mathit{final}}$ such that the following holds:
    \begin{enumerate}
        \item $\pi'$ is $\ddis$-free, and
        \item one of the following holds:
        \begin{itemize}
            \item at least one action in $\den$ is enabled in $s_{\mathit{final}}$, or
            \item $s_{\mathit{final}} \in \mathcal{F}$ and $s_{\mathit{final}}$ satisfies $\formoff{a}$, or
            \item $s_{\mathit{final}} \in \mathcal{F}$ and the last transition in $\pi'$, $t_{\mathit{final}}$ is labelled with an action in $\actelim{a} \setminus \ddis$.
        \end{itemize}
    \end{enumerate}
    We do a case distinction on whether $s$ admits some path $\pi'$ that meets conditions 1 and 2a.
    If it does, then this $\pi'$ is a finite, $\ddis$-free path to a state in which an action in $\den$ is enabled. We extend $\pi'$ by appending a transition labelled with an action in $\den$ to it together with the associated target state, creating a finite path $\pi''$ that is $\ddis$-free up until the first occurrence of an action in $\den$. By feasibility of $P$, we can extend $\pi''$ to a path $\pi$ that satisfies $P$ and, by \autoref{prop:gen-to-progress} is also $\block$-progressing. It still holds that this path is $\ddis$-free up until the first occurrence of an action in $\den$. Hence, this $\pi$ witnesses $s \in S_G$.
    
    If $s$ does not admit a path that satisfies conditions 1 and 2a, then it is impossible to reach a state in which an action in $\den$ is enabled from $s$ without doing $\ddis$-steps.
    We can then apply \autoref{lem:inner} to conclude that $s$ admits a finite path $\pi'$ to a state $s'$ that meets the following requirements:
    \begin{enumerate}
        \item $s' \in \mathcal{F}$, and
        \item $\pi'$ has length at least one, and
        \item $\pi'$ is $\ddis$-free, and
        \item for all action $a \in \nonblock$ such that $s$ satisfies $\formon{a}$, there is a state on $\pi'$ that satisfies $\formoff{a}$ or there is a transition on $\pi'$ labelled with an action in $\actelim{a}$.
    \end{enumerate}
    We can use this to construct $\pi$ from $s$.
    We start with $\pi = \pi'$. Subsequently, we consider the last state of the path constructed thus far, $s'$: if there are no actions $a \in \nonblock$ such that $s'$ satisfies $\formon{a}$, then by the locking property the constructed path is $\block$-progressing and finite, and hence by \autoref{prop:gen-finite} it satisfies $P$. Additionally, since it is constructed from $\ddis$-free segments it is $\ddis$-free and hence witnesses $s \in S_G$. If there are non-blocking actions for which $s'$ satisfies $\formonsym$, then we can apply \autoref{lem:inner} again to find a new $\pi'$ and append it to the path constructed thus far. The procedure then repeats, potentially infinitely. Note that \autoref{lem:inner} can always be applied because the last state of the constructed path is always in $\mathcal{F}$, and we have established already that it is impossible to reach a state in which an action in $\den$ is enabled from $s$ without doing an action in $\ddis$: this remains true for all states reached through appending $\ddis$-free segments to $\pi$.

    If the construction ever terminates then it is because we reached a state where no actions are ``on''. Hence, as argued above, we have constructed a path that witnesses $s \in S_G$.
    If the procedure repeats infinitely then an infinite path is constructed, this is because every segment we add has length at least one.
    If an infinite path $\pi$ is constructed this way, then this path is trivially $\block$-progressing. It is also $\ddis$-free, since it is constructed from $\ddis$-free segments. It remains to argue $\pi$ satisfies $P$. To this end, let $s_*$ be an arbitrary state of $\pi$ and let $a$ be an arbitrary non-blocking action such that $s_*$ satisfies $\formon{a}$. Let $\pi_*$ be the suffix of $\pi$ starting in $s_*$. Towards a contradiction, assume that $\pi_*$ contains neither a state in which $\formoff{a}$ is satisfied, nor a transition labelled with an action in $\actelim{a}$. Then by the persistent property, every state of $\pi_*$ must satisfy $\formon{a}$. In our construction, we always append finite paths to the infinite path we are constructing. Hence, there exist infinitely many states of $\pi_*$ to which \autoref{lem:inner} has been applied to find the next path to append to $\pi$. Let $s_*'$ be such a state. Since $\formon{a}$ must be satisfied in $s_*'$, it must be the case that the path we added to $\pi$ after applying \autoref{lem:inner} to $s_*'$ contained a state that satisfies $\formoff{a}$ or a transition labelled with an action in $\actelim{a}$. This contradicts our assumption that neither such a state nor such a transition exists in $\pi_*$. Hence, we conclude that the infinite path $\pi$ satisfies $P$ and thus witnesses $s \in S_G$. 

    In every case, $s \in S_G$ and thus $\mathcal{F} \subseteq S_G$, therefore $S_G$ is the greatest fixed point of $T_G$.
\end{proof}
Since the semantics of $\mathit{invariant_G}$ are exactly the greatest fixed point of $T_G$, we can conclude the following from the definition of $S_G$.
\begin{corollary}\label{cor:inv}
    The set of states characterised by $\mathit{invariant_G}$ is exactly the set of states that admit $\block$-progressing, $(\emptyseq,\ddis,\den)$-violating paths that satisfy $P$.
\end{corollary}

All that remains is to prepend the $\dpre$ part of the formula.
\begin{lemma}\label{lem:violate}
    For all environments $\env$ and states $s \in \states$, it holds that $s \in \llbracket \mathit{violate_G} \rrbracket_{\env}$ if, and only if, $s$ admits a path that is $\block$-progressing, satisfies $P$ and is $(\dpre, \ddis, \den)$-violating.
\end{lemma}
\begin{proof}
    It follows directly from the definition of the diamond operator that $\mathit{violate_G}$ characterises those states that admit a path $\pi$ that has a prefix matching $\dpre$ that ends in some state $s'$ in $\llbracket \mathit{invariant_G} \rrbracket_{\env}$. By \autoref{cor:inv}, $s'$ admits a $\block$-progressing path $\pi'$ that is $\ddis$-free up until the first occurrence of an action in $\den$ and satisfies $P$. By \autoref{prop:gen-prepend}, prepending $\pi$ to $\pi'$ results in a path that still satisfies $P$. It is also still $\block$-progressing, and by construction, it has a prefix matching $\dpre$, namely $\pi$, after which it is $\ddis$-free up until the first occurrence of an action in $\den$. Hence, it is $(\dpre, \ddis, \den)$-violating.
\end{proof}
\autoref{mucalc:gen} is the negation of $\mathit{violate}_G$, hence it expresses that a state does not admit such a path.
\autoref{thm:gen} follows directly.

In \autoref{subsec:gen}, we described how $\formonsym$, $\formoffsym$ and $\actelimsym$ should be defined for WFA, WHFA and JA. We here prove that these assignments indeed satisfy the conditions on finitely realisable predicates, and then use \autoref{thm:gen} to prove the formulae correct.
\subsubsection{WFA}\label{app:wfa-proof}
We prove \autoref{cor:wfa}.
\corwfa*

\begin{proof}
We argue this assignment to $\formonsym$, $\formoffsym$ and $\actelimsym$ satisfies the four properties of finitely realisable predicates.

\begin{enumerate}
    \item The invariant property: let $\pi$ be an arbitrary path. According to $\block$-WFA, $\pi$ is complete if for all suffixes of $\pi$, all non-blocking actions that are perpetually enabled on the suffix also occur in that suffix.
    We prove this is equivalent to the condition that a path $\pi$ is complete if for all states $s$ in $\pi$ and all non-blocking actions $a$, if $s$ satisfies $\diam{a}\tp$ then the suffix of $\pi$ starting in $s$ contains a state satisfying $\boxm{a}\fp$ or an occurrence of $a$. Refer to this condition as $P$.
    Let $\pi'$ be an arbitrary suffix of $\pi$ and let $s$ be the first state of $\pi'$. Let $a$ be an arbitrary non-blocking action. If $a$ is not enabled in $s$, then WFA does not require $a$ to occur in $\pi'$ since it is not perpetually enabled, and $P$ does not require $a$ to occur in $\pi'$ because $\diam{a}\tp$ is not satisfied in $s$.
    If $a$ is enabled in $s$, we consider if $a$ is perpetually enabled on $\pi'$. If it is not, WFA does not require $a$ to occur in $\pi'$. Also $P$ does not require this, since within finitely many steps a state will occur on $\pi'$ in which $\boxm{a}\fp$ is satisfied.
    If $a$ is perpetually enabled in $\pi'$, then WFA requires $a$ to occur on $\pi'$. Since there are no states on $\pi'$ that satisfy $\boxm{a}\fp$, $P$ also requires an occurrence of $a$ in $\pi'$.
    Hence, in all cases, WFA and $P$ give the exact same requirements on action occurrences in $\pi'$.

    \item The locking property: if a state is $\block$-locked, no non-blocking actions are enabled in that state. This is identical to the definition that $\diam{a}\tp$ is violated for all $a \in \nonblock$.

    \item The exclusive property: $\diam{a}\tp = \neg\boxm{a}\fp$ for all $a \in \actionset$, so this follows directly.

    \item The persistent property: this follows from $\diam{a}\tp = \neg\boxm{a}\fp$ as well. If there is no state on a path that satisfies $\boxm{a}\fp$ for some $a \in \nonblock$, then the final state of that path satisfies $\neg\boxm{a}\fp$ and hence satisfies $\diam{a}\tp$.
\end{enumerate}
Hence, $\block$-WFA is a finitely realisable predicate on paths.

That $\block$-WFA is feasible was proven as \autoref{prop:wfa-feasible}.
We apply \autoref{thm:gen} to conclude the theorem holds.
\end{proof}

\subsubsection{WHFA}\label{app:whfa-proof}

We prove \autoref{cor:whfa}.
\corwhfa*

\begin{proof}
We argue this assignment to $\formonsym$, $\formoffsym$ and $\actelimsym$ satisfies the four properties of finitely realisable predicates.

\begin{enumerate}
    \item The invariant property: let $\pi$ be an arbitrary path. According to $\block$-WHFA, $\pi$ is complete if for all suffixes of $\pi$, all non-blocking actions that are perpetually $\block$-reachable on the suffix also occur in that suffix.
    We prove this is equivalent to the condition that a path $\pi$ is complete if for all states $s$ in $\pi$ and all non-blocking actions $a$, if $s$ satisfies $\diam{\clos{\nonblock}\co a}\tp$ then the suffix of $\pi$ starting in $s$ contains a state satisfying $\boxm{\clos{\nonblock}\co a}\fp$ or an occurrence of $a$. Refer to this condition as $P$.
    Let $\pi'$ be an arbitrary suffix of $\pi$ and let $s$ be the first state of $\pi'$. Let $a$ be an arbitrary non-blocking action. If $a$ is not $\block$-reachable from $s$, then WHFA does not require $a$ to occur in $\pi'$ since it is not perpetually reachable, and $P$ does not require $a$ to occur in $\pi'$ because $\diam{\clos{\nonblock}\co a}\tp$ is not satisfied in $s$.
    If $a$ is $\block$-reachable from $s$, we consider if $a$ is perpetually $\block$-reachable on $\pi'$. If it is not, WHFA does not require $a$ to occur in $\pi'$. Also $P$ does not require this, since within finitely many steps a state will occur on $\pi'$ in which $\boxm{\clos{\nonblock} \co a}\fp$ is satisfied.
    If $a$ is perpetually $\block$-reachable in $\pi'$, then WHFA requires $a$ to occur on $\pi'$. Since there are no states on $\pi'$ that satisfy $\boxm{\clos{\nonblock} \co a}\fp$, $P$ also requires an occurrence of $a$ in $\pi'$.
    Hence, in all cases, WHFA and $P$ give the exact same requirements on action occurrences in $\pi'$.

    \item The locking property: if a state is $\block$-locked, no non-blocking actions are enabled in that state. In that case, no non-blocking actions can be $\block$-reachable since there must be an occurrence of a blocking action before a state can be reached in which a non-blocking action is enabled. In the other direction, if a state has no $\block$-reachable non-blocking actions, then no non-blocking actions are enabled since every enabled action is also reachable. Hence, the two conditions coincide.

    \item The exclusive property: $\diam{\clos{\nonblock} \co a}\tp = \neg\boxm{\clos{\nonblock} \co a}\fp$ for all $a \in \actionset$, so this follows directly.

    \item The persistent property: this follows from $\diam{\clos{\nonblock} \co a}\tp = \neg\boxm{\clos{\nonblock} \co a}\fp$ as well. If there is no state on a path that satisfies $\boxm{\clos{\nonblock} \co a}\fp$ for some $a \in \nonblock$, then the final state of that path satisfies $\neg\boxm{\clos{\nonblock} \co a}\fp$ and hence satisfies $\diam{\clos{\nonblock} \co a}\tp$.
\end{enumerate}
Hence, $\block$-WHFA is a finitely realisable predicate on paths.

That $\block$-WHFA is feasible was proven as \autoref{prop:whfa-feasible}.
We apply \autoref{thm:gen} to conclude the theorem holds.
\end{proof}

\subsubsection{JA}\label{app:ja-proof}

Finally, we prove \autoref{cor:ja}.
\corja*

\begin{proof}
We use $\elim{a}$ as shorthand for $\{b \in \actionset \mid a \nconc b\}$ from here on.
We argue this assignment to $\formonsym$, $\formoffsym$ and $\actelimsym$ satisfies the four properties of finitely realisable predicates.

\begin{enumerate}
    \item The invariant property: we need to show that an arbitrary path $\pi$ satisfies JA exactly when it satisfies the condition that for all states $s$ of $\pi$ and all $a \in \nonblock$, if $s$ satisfies $\diam{a}\tp$ then in the suffix of $\pi$ starting in $s$, there must be a state that satisfies $\fp$ or an occurrence of an action in $\elim{a}$. We refer to this latter condition as $P$.
    Justness requires that for every state in $\pi$, every action that is enabled must be eliminated in the suffix of $\pi$ starting in $s$.
    The correspondence between $P$ and $JA$ follows directly from the observation that there can never be a state that satisfies $\fp$, hence $P$ also requires that after a non-blocking action is enabled, it must subsequently be eliminated.

    \item The locking property: if a state is $\block$-locked, no non-blocking actions are enabled in that state. This is identical to the definition that $\diam{a}\tp$ is violated for all $a \in \nonblock$.

    \item The exclusive property: $\fp$ is violated in every state, hence this property is trivially satisfied.

    \item The persistent property: this follows directly from the second requirement on concurrency relations on actions.
\end{enumerate}
Hence, $\block$-JA is a finitely realisable predicate on paths.

That $\block$-JA is feasible was proven as \autoref{prop:ja-feasible}.
We apply \autoref{thm:gen} to conclude the theorem holds.
\end{proof}

\subsection{Proof of SFA and SHFA Formulae}\label{app:sgen-proof}
We here prove \autoref{thm:sfa} and \autoref{thm:shfa}.
We prove both by first proving a more generic theorem.
For this, we introduce an assumption that generalises both strong fairness and strong hyperfairness, which we name $P_S$.
$P_S$ depends on a mapping $\formdissym$ from non-blocking actions to closed modal $\mu$-calculus formulae.
The assumption $P_S$ is then as follows: a path $\pi$ satisfies $P_S$ if for every non-blocking action $a$ , it holds that $a$ occurs infinitely often on $\pi$ or there is a suffix of $\pi$ on which $\formdis{a}$ is perpetually satisfied.
In line with previous definitions, ``perpetually satisfied'' means satisfied in every state.
\begin{theorem}\label{thm:sgen}
    An LTS satisfies \autoref{mucalc:sgen} if, and only if, its initial state does not admit a $\block$-progressing path that is $(\dpre,\ddis,\den)$-violating, and satisfies $P_S$,
    for a given choice of $\formdissym$ such that: $P_S$ is feasible, and a state is $\block$-locked if, and only if, it satisfies $\formdis{b}$ for all $b \in \nonblock$.
\end{theorem}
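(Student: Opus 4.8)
The plan is to follow the architecture of the proof of \autoref{thm:gen}. Write the subformula inside the negation in \autoref{mucalc:sgen} as $\diam{\dpre\co\clos{\comp{\ddis}}}\Phi$ with $\Phi = \diam{\den}\tp \lor \boxm{\nonblock}\fp \lor \bigvee_{\emptyset\neq F\subseteq\nonblock}\nu X. \Psi_F$, where $\Psi_F = \bigwedge_{a\in F}\mu W.((\bigwedge_{b\in\nonblock\setminus F}\formdis{b})\land(\diam{a \setminus\ddis}X \lor \diam{\comp{\ddis}}W))$. The goal is to show that a state satisfies $\diam{\dpre\co\clos{\comp{\ddis}}}\Phi$ exactly when it admits a $\block$-progressing, $(\dpre,\ddis,\den)$-violating path that satisfies $P_S$; \autoref{thm:sgen} is then the contrapositive. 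I first record three consequences of the hypotheses on $\formdissym$, playing the roles of \autoref{prop:gen-finite}, \autoref{prop:gen-to-progress} and \autoref{prop:gen-prepend}: (i) every finite $\block$-progressing path satisfies $P_S$, since its final state is $\block$-locked and hence satisfies $\formdis{b}$ for every $b\in\nonblock$, which furnishes the suffix required by $P_S$ for each non-blocking action; (ii) every path satisfying $P_S$ is $\block$-progressing, since a finite path whose final state is not $\block$-locked fails $\formdis{b}$ there for some $b\in\nonblock$, and as $b$ cannot occur infinitely often on a finite path, $P_S$ would then demand a suffix on which $\formdis{b}$ holds perpetually, impossible at the final state; (iii) $P_S$ is invariant under prepending or removing a finite prefix, both defining disjuncts of $P_S$ being tail properties. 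Feasibility of $P_S$ is a hypothesis.

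The technical core is the characterisation of $\nu X. \Psi_F$. Fixing $a\in F$ and applying \autoref{prop:lfp-basic} with $\phi_1 = \bigwedge_{b\in\nonblock\setminus F}\formdis{b}$, $\phi_2 = \diam{a \setminus\ddis}X$ and $\alpha = \comp{\ddis}$ shows that, in an environment sending $X$ to $\mathcal{F}$, a state lies in $\llbracket\mu W.((\bigwedge_{b\in\nonblock\setminus F}\formdis{b})\land(\diam{a \setminus\ddis}X\lor\diam{\comp{\ddis}}W))\rrbracket$ iff it admits a finite $\ddis$-free path, all of whose states satisfy $\bigwedge_{b\in\nonblock\setminus F}\formdis{b}$, ending with an $a$-labelled transition (whence $a\notin\ddis$) into $\mathcal{F}$. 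Writing $T_F$ for the transformer of $X$ induced by $\Psi_F$, I then prove: $s\in\llbracket\nu X. \Psi_F\rrbracket$ iff $s$ admits an infinite $\ddis$-free path all of whose states satisfy $\bigwedge_{b\in\nonblock\setminus F}\formdis{b}$ and on which every $a\in F$ occurs infinitely often. For ``if'', take $\mathcal{F}$ to be the set of states occurring on such a path; for each state of $\mathcal{F}$ and each $a\in F$, walking forward along the path to the next $a$-transition witnesses membership in $T_F(\mathcal{F})$, so $\mathcal{F}\subseteq T_F(\mathcal{F})$ and hence $\mathcal{F}\subseteq\llbracket\nu X. \Psi_F\rrbracket$ by coinduction. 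For ``only if'', use the fixpoint identity $s\in T_F(\llbracket\nu X. \Psi_F\rrbracket)$ and build the path by a round-robin over $F$: in round $i$ take the $(i\bmod|F|)$-th element $a$ of $F$ and append the finite $\ddis$-free path supplied by the $\mu W$-characterisation together with its final $a$-transition (so each round adds at least one transition), whose endpoint is again in $\llbracket\nu X. \Psi_F\rrbracket$; the limit path is $\ddis$-free, has every state satisfying $\bigwedge_{b\in\nonblock\setminus F}\formdis{b}$, and contains each $a\in F$ infinitely often. This generalised B\"uchi argument, in particular getting the round-robin and the ``next occurrence along the path'' bookkeeping exactly right, is the step I expect to be the main obstacle; the rest is routine.

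It remains to assemble the equivalence by casing on $\Phi$. Given a witnessing path $\pi = \pi_{\mathit{pre}}\co\pi_{\mathit{suf}}$ with $\pi_{\mathit{pre}}$ matching $\dpre$: if $\pi_{\mathit{suf}}$ contains an action in $\den$, its $\ddis$-free prefix up to the first such occurrence reaches, after $\dpre$, a state satisfying $\diam{\den}\tp$; if $\pi_{\mathit{suf}}$ is finite it is $\ddis$-free and, by (ii), ends in a $\block$-locked state satisfying $\boxm{\nonblock}\fp$; if $\pi_{\mathit{suf}}$ is infinite and $\ddis$-free, let $F$ be the set of non-blocking actions occurring infinitely often on it, and using finiteness of $\nonblock$ and the second disjunct of $P_S$ pick a suffix $\pi'$ on whose states $\formdis{b}$ holds for all $b\in\nonblock\setminus F$, so that if $F\neq\emptyset$ the first state of $\pi'$ satisfies the $F$-disjunct of $\Phi$ by the inner characterisation, and if $F=\emptyset$ the first state of $\pi'$ satisfies $\formdis{b}$ for all $b\in\nonblock$ and hence $\boxm{\nonblock}\fp$; in each case the corresponding prefix of $\pi$ matches $\dpre\co\clos{\comp{\ddis}}$. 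Conversely, from a path matching $\dpre\co\clos{\comp{\ddis}}$ to a state $s'\models\Phi$: if $s'\models\diam{\den}\tp$, append a $\den$-transition and extend by feasibility of $P_S$ to a path that is $\block$-progressing by (ii) and $(\dpre,\ddis,\den)$-violating; if $s'\models\boxm{\nonblock}\fp$, the prefix itself is a finite $\block$-progressing path, hence satisfies $P_S$ by (i), and is $(\dpre,\ddis,\den)$-violating; if $s'$ satisfies the $F$-disjunct for some nonempty $F$, append the infinite $\ddis$-free path from the inner characterisation, which satisfies $P_S$ directly (each $a\in F$ occurs infinitely often, and each $b\in\nonblock\setminus F$ has $\formdis{b}$ at every state), is infinite and hence $\block$-progressing, and, prepending the finite prefix (which preserves $P_S$ by (iii)), gives the desired violating path. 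Negating both sides yields \autoref{thm:sgen}.
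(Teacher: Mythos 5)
Your proposal is correct and follows essentially the same route as the paper's proof: the same decomposition into subformulae, the same instantiation of \autoref{prop:lfp-basic} for the inner least fixpoint, the same characterisation of $\nu X.\Psi_F$ via a queue/round-robin construction over $F$, and the same case analysis on $\diam{\den}\tp$, $\boxm{\nonblock}\fp$ and the $F$-disjuncts, supported by the same four closure properties of $P_S$. The only point to be careful about (which the paper spells out explicitly) is that each appended segment guarantees $\bigwedge_{b\in\nonblock\setminus F}\formdis{b}$ only up to, but possibly excluding, its final state; this is repaired exactly as you implicitly do, by observing that the final state of one segment is the initial state of the next and hence satisfies the conjunction after all.
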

Here \autoref{mucalc:sgen} is as follows:
\begin{equation}\label{mucalc:sgen}
        \neg\diam{\dpre\co\clos{\comp{\ddis}}} ( \diam{\den}\tp \lor \boxm{\nonblock}\fp \lor
        \bigvee_{\emptyset \neq F \subseteq \nonblock}\nu X. ( \bigwedge_{a \in F}\mu W. ( ( \bigwedge_{b \in \nonblock\setminus F}\formdis{b} ) \land ( \diam{a \setminus\ddis}X \lor \diam{\comp{\ddis}}W ) ) ) )
\end{equation}

We once again fix $\block, \dpre,\ddis$ en $\den$.
We also fix an arbitrary mapping $\formdissym$ such that $P_S$ is feasible, and a state is $\block$-locked if, and only if, it satisfies $\formdis{b}$ for all $b \in \nonblock$. 
We first have to prove some supporting propositions.
\begin{proposition}\label{prop:sgen-to-progress}
    Every path $\pi$ that satisfies $P_S$ is also $\block$-progressing
\end{proposition}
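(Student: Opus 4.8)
The plan is to case-split on whether $\pi$ is finite or infinite. If $\pi$ is infinite, it is $\block$-progressing immediately by definition, so there is nothing to do; all the work goes into the finite case.

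So suppose $\pi$ is finite with final state $s$. The goal becomes showing that $s$ is a $\block$-locked state. By the hypothesis we have fixed on $\formdissym$ — namely that a state is $\block$-locked if, and only if, it satisfies $\formdis{b}$ for all $b \in \nonblock$ — it suffices to prove $s \in \llbracket \formdis{a} \rrbracket$ for every $a \in \nonblock$. Fix such an $a$. Since $\pi$ satisfies $P_S$, either $a$ occurs infinitely often on $\pi$, or some suffix of $\pi$ has $\formdis{a}$ perpetually satisfied. The first disjunct is impossible, because a finite path cannot contain infinitely many occurrences of any action, so the second disjunct holds: there is a suffix $\pi'$ of $\pi$ such that every state of $\pi'$ satisfies $\formdis{a}$. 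The final state $s$ lies on every suffix of $\pi$ (a suffix of a finite path always includes its last state), so in particular $s \in \llbracket \formdis{a} \rrbracket$. As $a \in \nonblock$ was arbitrary, $s$ satisfies $\formdis{b}$ for all $b \in \nonblock$, hence $s$ is $\block$-locked and $\pi$ is $\block$-progressing.

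I do not expect a genuine obstacle here; the argument is short and uses only the $\block$-locked characterisation assumed in the statement (feasibility of $P_S$ is not needed for this direction). The single point to be careful about is that ``$\formdis{a}$ perpetually satisfied on a suffix'' really does constrain the final state when $\pi$ is finite, which is immediate since the final state belongs to every suffix.
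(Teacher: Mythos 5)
Your proof is correct and follows essentially the same argument as the paper: the paper phrases the finite case as a proof by contradiction while you argue directly that the final state satisfies $\formdis{a}$ for every $a \in \nonblock$, but both rest on the same three facts (no action occurs infinitely often on a finite path, the final state lies on every suffix, and the assumed characterisation of $\block$-locked states via $\formdissym$). Your remark that feasibility is not needed here is also consistent with the paper's proof.
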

\begin{proof}
    Let $\pi$ be an arbitrary path. If $\pi$ is infinite, it is trivially $\block$-progressing. We hence assume it is finite.
    Towards a contradiction, assume $\pi$ is not $\block$-progressing. Let $s$ be the final state of $\pi$, it must by our assumption not be a $\block$-locked state. 
    Then by our restriction on $\formdissym$, there must exist an action $b \in \nonblock$ such that $s$ does not satisfy $\formdis{b}$.
    Since $\pi$ is finite, no actions can occur infinitely often, so $b$ does not occur infinitely often.
    Since $\pi$ satisfies $P_S$, it must be the case that there is a suffix of $\pi$ on which $\formdis{b}$ is perpetually satisfied.
    But it is not satisfied in the final state of $\pi$, which is in every suffix of $\pi$.
    Hence, we have a contradiction and we conclude that $s$ is a $\block$-locked state, and so $\pi$ satisfies $\block$-progress.
\end{proof}
\begin{proposition}\label{prop:sgen-finite}
    Every $\block$-progressing and finite path satisfies $P_S$.
\end{proposition}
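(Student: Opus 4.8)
The plan is to argue directly from the definition of $P_S$ together with the standing assumption on $\formdissym$, mirroring the proof of \autoref{prop:gen-finite}. Let $\pi$ be an arbitrary $\block$-progressing and finite path. Since $\pi$ is finite, by the definition of $\block$-progressing it must end in a $\block$-locked state; call this final state $s$.

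Now fix an arbitrary non-blocking action $a \in \nonblock$. To show $\pi$ satisfies $P_S$, I must establish one of the two disjuncts in the definition of $P_S$ for $a$: either $a$ occurs infinitely often on $\pi$, or there is a suffix of $\pi$ on which $\formdis{a}$ is perpetually satisfied. Because $\pi$ is finite, no action occurs infinitely often on it, so the first disjunct fails and I am forced to establish the second. For this I take the trivial suffix of $\pi$ consisting of only its final state $s$. By the standing assumption on $\formdissym$ — that a state is $\block$-locked if, and only if, it satisfies $\formdis{b}$ for all $b \in \nonblock$ — the fact that $s$ is $\block$-locked gives $s \in \llbracket \formdis{a} \rrbracket$. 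Hence $\formdis{a}$ is satisfied in every state of this one-state suffix, i.e. it is perpetually satisfied on a suffix of $\pi$.

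Since $a \in \nonblock$ was arbitrary, every non-blocking action satisfies one of the two disjuncts, so $\pi$ satisfies $P_S$, which completes the argument.

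There is no real obstacle here: the proposition is an immediate consequence of the definitions and the constraint imposed on $\formdissym$ in the statement of \autoref{thm:sgen}. The only point requiring a little care is to note that finiteness of $\pi$ simultaneously (i) forces the use of the ``$\formdissym$-perpetually-satisfied'' disjunct, since no action can occur infinitely often, and (ii) guarantees, via $\block$-progress, that the final state is $\block$-locked, which is exactly what makes that disjunct hold.
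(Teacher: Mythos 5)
Your proof is correct and takes essentially the same route as the paper: finiteness plus $\block$-progress forces the final state to be $\block$-locked, the standing constraint on $\formdissym$ then gives $\formdis{b}$ at that state for every $b \in \nonblock$, and the one-state suffix witnesses perpetual satisfaction. The extra remark that the first disjunct fails on a finite path is harmless but unnecessary, since $P_S$ only requires one of the two disjuncts.
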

\begin{proof}
    Let $\pi$ be an arbitrary $\block$-progressing and finite path. Since $\pi$ is finite and $\block$-progressing, it has a final state $s$ and all actions enabled in $s$ are blocking. 
    Since $s$ is a $\block$-locked state, by our choice of $\formdissym$ we know $\formdis{b}$ is satisfied in $s$ for all $b \in \nonblock$. 
    The suffix of $\pi$ consisting only of $s$ is a suffix in which $\formdis{b}$ is perpetually satisfied for all $b \in \nonblock$, so $\pi$ satisfies $P_S$.
\end{proof}

\begin{proposition}\label{prop:sgen-prepend}
    Let $\pi$ be an path that satisfies $P_S$, then every path $\pi'$ of which $\pi$ is a suffix also satisfies $P_S$.
\end{proposition}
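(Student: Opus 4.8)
The plan is to unfold the definition of $P_S$ directly and use transitivity of the ``is a suffix of'' relation. Let $\pi$ be a path satisfying $P_S$, and let $\pi'$ be any path of which $\pi$ is a suffix; so $\pi' = \pi_{\mathit{pre}} \co \pi$ for some finite path $\pi_{\mathit{pre}}$ ending in the initial state of $\pi$. I want to verify that $\pi'$ satisfies $P_S$, i.e.\ that for every $a \in \nonblock$, either $a$ occurs infinitely often on $\pi'$, or some suffix of $\pi'$ perpetually satisfies $\formdis{a}$.

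First I would fix an arbitrary $a \in \nonblock$ and invoke the hypothesis that $\pi$ satisfies $P_S$, which gives two cases. In the first case, $a$ occurs infinitely often on $\pi$; since $\pi$ is a suffix of $\pi'$, every occurrence of $a$ on $\pi$ is also an occurrence on $\pi'$, so $a$ occurs infinitely often on $\pi'$ as well. In the second case, there is a suffix $\pi''$ of $\pi$ on which $\formdis{a}$ is perpetually satisfied (i.e.\ satisfied in every state of $\pi''$); since $\pi$ is itself a suffix of $\pi'$ and the suffix relation is transitive, $\pi''$ is a suffix of $\pi'$, and the same states witness that $\formdis{a}$ is perpetually satisfied on this suffix of $\pi'$. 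In either case the disjunct required for $a$ holds, and since $a$ was arbitrary, $\pi'$ satisfies $P_S$.

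There is no real obstacle here: the statement is essentially the observation that both disjuncts defining $P_S$ are preserved under prepending a finite prefix (occurring infinitely often is a property of a tail of the sequence, and a witnessing suffix stays a suffix after prepending). This mirrors the structure of the proof of \autoref{prop:gen-prepend} but is simpler, because $P_S$ is already phrased in terms of suffixes rather than as a state-local invariant, so no appeal to a persistence property is needed. I would keep the write-up to the two-case argument above.
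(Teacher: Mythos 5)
Your proposal is correct and follows essentially the same argument as the paper: fix an arbitrary non-blocking action, apply the hypothesis that $\pi$ satisfies $P_S$, and observe that both disjuncts (occurring infinitely often, and having a suffix perpetually satisfying $\formdissym$) are inherited by any path of which $\pi$ is a suffix. The paper's proof is just a slightly more compressed version of your two-case analysis.
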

\begin{proof}
    Let $\pi'$ be an arbitrary path which contains a suffix $\pi$ such that $\pi$ satisfies $P_S$. We prove $\pi'$ satisfies $P_S$ as well. 
    To this end, we prove that every non-blocking action $b \in \nonblock$ occurs infinitely often in $\pi'$ or $\pi'$ has a suffix on which  $\formdis{b}$ is perpetually satisfied.
    Let $b$ be an arbitrary non-blocking action in $\nonblock$.
    Then since $\pi$ satisfies $P_S$, $b$ occurs infinitely often in $\pi$ or $\pi$ has a suffix on which $\formdis{b}$ is perpetually satisfied.
    Both qualities are inherited by any path of which $\pi$ is a suffix, and hence $\pi'$ also satisfies at least one of the two conditions for $b$.
    We conclude that $\pi'$ satisfies $P_S$.
\end{proof}

\begin{proposition}\label{prop:sgen-suffix}
    Let $\pi$ be path that satisfies $P_S$, then every suffix $\pi'$ of $\pi$ also satisfies $P_S$.
\end{proposition}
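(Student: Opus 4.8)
The plan is to unfold the definition of $P_S$ on $\pi$ and on an arbitrary suffix $\pi'$ of $\pi$, and observe that each of the two disjuncts that define $P_S$ for a given non-blocking action — namely, that the action occurs infinitely often, or that there is a suffix on which the associated $\formdissym$-formula is perpetually satisfied — is preserved when passing from $\pi$ to $\pi'$. Concretely, I would fix an arbitrary suffix $\pi'$ of $\pi$ and an arbitrary action $b \in \nonblock$, and show that $b$ occurs infinitely often on $\pi'$ or $\pi'$ has a suffix on which $\formdis{b}$ holds in every state. Since $\pi$ satisfies $P_S$, at least one of these holds for $b$ on $\pi$, which splits the argument into two cases.

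In the first case, $b$ occurs infinitely often on $\pi$. Since $\pi'$ is a suffix of $\pi$, it is obtained from $\pi$ by deleting a finite prefix (every state of a path sits at a finite index), so only finitely many occurrences of $b$ are lost and $b$ still occurs infinitely often on $\pi'$. In the second case, there is a suffix $\pi''$ of $\pi$ such that $\formdis{b}$ is satisfied in every state of $\pi''$. Here I would use that the suffixes of a fixed path are linearly ordered by the ``is-a-suffix-of'' relation, so $\pi'$ and $\pi''$ are nested: either $\pi''$ is a suffix of $\pi'$, in which case $\pi'$ has a suffix — namely $\pi''$ — on which $\formdis{b}$ is perpetually satisfied; or $\pi'$ is a suffix of $\pi''$, in which case every state of $\pi'$ is already a state of $\pi''$ and hence satisfies $\formdis{b}$, so $\pi'$ itself witnesses the disjunct. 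Either way the required disjunct holds for $b$ on $\pi'$, and since $b$ was arbitrary, $\pi'$ satisfies $P_S$.

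The proof is essentially a routine unfolding, closely paralleling \autoref{prop:gen-suffix} and the argument used for \autoref{prop:sgen-prepend}; the only point needing a moment's care is the case split in the second case, i.e.\ recognising that $\pi'$ and $\pi''$ are suffixes of the same path and therefore one contains the other, and treating the sub-case where $\pi'$ is the longer one. I do not expect any genuine obstacle.
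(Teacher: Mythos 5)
Your proposal is correct and follows essentially the same argument as the paper: fix an arbitrary suffix $\pi'$ and action $b \in \nonblock$, note that infinitely many occurrences survive the deletion of a finite prefix, and in the perpetual-satisfaction case use that $\pi'$ and the witnessing suffix $\pi''$ are nested suffixes of the same path. The paper words the second case slightly differently (``$\pi''$ is a suffix of $\pi'$, or a suffix of $\pi''$ is also a suffix of $\pi'$''), but this is the same case split you make.
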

\begin{proof}
    Let $\pi$ be an arbitrary path that satisfies $P_S$ and let $\pi'$ be an arbitrary suffix of that path. We prove $\pi'$ satisfies $P_S$.
    To this end, let $b$ be an arbitrary non-blocking action in $\nonblock$.
    We show that $b$ occurs infinitely often in $\pi'$ or that $\pi'$ has a suffix on which $\formdis{b}$ is perpetually satisfied.
    Since $\pi$ satisfies $P_S$, we know that at least one of the two cases is true for $\pi$.
    \begin{itemize}
        \item If $b$ occurs infinitely often in $\pi$, then it also occurs infinitely often in $\pi'$.
        \item If $\pi$ has a suffix $\pi''$ on which $\formdis{b}$ is perpetually satisfied, then $\pi''$ is a suffix of $\pi'$ as well, or a suffix of $\pi''$ is also a suffix of $\pi'$. Either way, $\pi'$ also has a suffix on which $\formdis{b}$ is perpetually satisfied.
    \end{itemize}
    We conclude that $\pi'$ satisfies $P_S$.
\end{proof}

We now prove the correctness of the formula.
We first split the formula into multiple sub-formulae.

\begin{align*}
    \mathit{violate_S} &= \diam{\dpre}\mathit{finite_S}\\
    \mathit{finite_S} &= \diam{\clos{\comp{\ddis}}}(\diam{\den}\tp \lor \boxm{\nonblock}\fp \lor \mathit{disjunct_S})\\
    \mathit{disjuct_S} &= \bigvee_{\emptyset \neq F \subseteq \nonblock}\mathit{invariant_S}(F)\\
    \mathit{invariant_S}(F) &= \nu X. ( \bigwedge_{a \in F}\mathit{eliminate_S}(F, a) ) \\
    \mathit{eliminate_S}(F, a) &= \mu W. ( ( \bigwedge_{b \in \nonblock\setminus F} \formdis{b} ) \land (\diam{a \setminus \ddis}X \lor \diam{\comp{\ddis}}W) )
\end{align*}
We have that \autoref{mucalc:sgen} $= \neg\mathit{violate_S}$.

We characterise the semantics of the formula bottom-up, starting with $\mathit{eliminate_S}$.
\begin{lemma}\label{lem:sgen-sat}
    For all environments $\env$, states $s \in \states$, sets $F \subseteq \nonblock, \mathcal{F} \subseteq \states$ and non-blocking actions $a \in F$, it is the case that $s \in \llbracket \mathit{eliminate_S}(F, a) \rrbracket_{\env [X := \mathcal{F}]}$ if, and only if, $s$ admits a finite path $\pi$ with final state $s_{\mathit{final}}$ satisfying the following requirements:
    \begin{enumerate}
        \item $\pi$ is $\ddis$-free, and
        \item all states of $\pi$, with the possible exception of $s_{\mathit{final}}$, satisfy $\formdis{b}$ for all $b \in \nonblock\setminus F$, and
        \item $s_{\mathit{final}}$ is in $\mathcal{F}$, and
        \item the final transition of $\pi$, $t_{\mathit{final}}$ is labelled with $a$.
    \end{enumerate}
\end{lemma}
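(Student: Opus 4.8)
The plan is to obtain \autoref{lem:sgen-sat} as an almost immediate consequence of the supporting result \autoref{prop:lfp-basic}, in direct analogy with the way \autoref{lem:sat} was handled in the correctness proof of \autoref{mucalc:gen}. The first step is to observe that, once the regular-formula diamond is expanded, $\mathit{eliminate_S}(F,a)$ is literally of the form $\mu W.(\phi_1 \land (\phi_2 \lor \diam{\comp{\ddis}}W))$ with $\phi_1 = \bigwedge_{b \in \nonblock\setminus F}\formdis{b}$ and $\phi_2 = \diam{a \setminus \ddis}X$. Since each $\formdis{b}$ is by assumption a closed formula, $\phi_1$ is closed, and $\phi_2$ does not mention $W$; hence \autoref{prop:lfp-basic} applies in the environment $\update{\env}{X}{\mathcal{F}}$, with its $Y$ instantiated to $W$ and its $\alpha$ to $\comp{\ddis}$.

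Applying the proposition then yields: $s \in \llbracket \mathit{eliminate_S}(F,a)\rrbracket_{\update{\env}{X}{\mathcal{F}}}$ if, and only if, $s$ admits a finite path $\pi'$ that (i) is $\ddis$-free, (ii) has every state in $\llbracket\bigwedge_{b \in \nonblock\setminus F}\formdis{b}\rrbracket$, and (iii) ends in a state belonging to $\llbracket\diam{a \setminus \ddis}X\rrbracket_{\update{\env}{X}{\mathcal{F}}}$, i.e., a state from which some transition labelled with an action in $a \setminus \ddis$ leads into $\mathcal{F}$. If $a \in \ddis$ then $a \setminus \ddis = \emptyset$, so condition (iii) is unsatisfiable and both sides of the claimed equivalence are vacuously false (consistent with the lemma, whose condition 4 then also cannot be met); so I only need to treat the case $a \notin \ddis$, where $a \setminus \ddis = \{a\}$.

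The remaining step is the one-transition reconciliation between ``$\pi'$ ends at a state witnessing $\diam{a}X$'' and the shape demanded by the lemma, namely a path whose final \emph{transition} is labelled $a$ and whose final state lies in $\mathcal{F}$. For the forward direction, given such a $\pi'$ together with a witnessing transition $t$ with $\action{t} = a$ and $\target{t} \in \mathcal{F}$, I take $\pi = \pi'\, t\, \target{t}$: it is $\ddis$-free because $\pi'$ is and $a \notin \ddis$ (condition 1); its states other than $s_{\mathit{final}} := \target{t}$ are exactly the states of $\pi'$, which all satisfy $\formdis{b}$ for $b \in \nonblock\setminus F$ (condition 2); $s_{\mathit{final}} \in \mathcal{F}$ (condition 3); and its last transition $t$ is labelled $a$ (condition 4). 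For the converse, given $\pi$ satisfying conditions 1--4, I delete its final transition to get $\pi'$: it is $\ddis$-free, all its states satisfy $\bigwedge_{b \in \nonblock\setminus F}\formdis{b}$ by condition 2, and its final state admits the deleted transition --- whose label $a$ lies in $a \setminus \ddis$ since $\pi$ is $\ddis$-free --- into $s_{\mathit{final}} \in \mathcal{F}$, hence satisfies $\diam{a \setminus \ddis}X$ under $X := \mathcal{F}$. Composing this correspondence with the equivalence from \autoref{prop:lfp-basic} gives the lemma.

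I do not expect a real obstacle here; the only delicate points are purely bookkeeping: (a) matching the phrase ``with the possible exception of $s_{\mathit{final}}$'' in condition 2 to the fact that appending $t$ introduces a state to which $\phi_1$ need not apply, and (b) disposing of the degenerate case $a \in \ddis$. Both are handled exactly as in the corresponding argument for \autoref{lem:sat} (conditions 1 and 2c versus 3 and 4c there).
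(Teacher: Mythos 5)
Your proposal is correct and follows essentially the same route as the paper's proof: both apply \autoref{prop:lfp-basic} with $\phi_1 = \bigwedge_{b \in \nonblock\setminus F}\formdis{b}$, $\phi_2 = \diam{a\setminus\ddis}X$ and $\alpha = \comp{\ddis}$, and then perform the same one-transition reconciliation between a path ending in a state satisfying $\diam{a\setminus\ddis}X$ and a path whose final transition is labelled $a$ into $\mathcal{F}$. Your explicit dispatch of the degenerate case $a \in \ddis$ is a minor presentational difference; the paper absorbs it by observing inside the argument that condition 4 together with condition 1 forces $a \notin \ddis$.
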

\begin{proof}
    Let $s$ be an arbitrary state, $F \subseteq \nonblock$ and $\mathcal{F} \subseteq \states$ arbitrary sets and $a$ an arbitrary action in $F$, $a$ is by definition of $F$ non-blocking. We prove $s$ is in the semantics of $\mathit{eliminate_{S}}(F, a)$ under environment $\env[X := \mathcal{F}]$ if, and only if, it admits a path $\pi$ meeting conditions 1 through 4. 

    Note that $\formdis{b}$ does not depend on $Y$ regardless of the choice of $b$, since we require $\formdissym$ to map to closed modal $\mu$-calculus formulae. It is also that case that $\diam{a \setminus \ddis}X$ does not depend on $Y$. Hence, we can apply \autoref{prop:lfp-basic} to conclude that $s \in \llbracket \mathit{eliminate_{S}}(F, a) \rrbracket_{\env[X := \mathcal{F}]}$ if, and only if, $s$ admits a finite path $\pi'$ meeting the following conditions:
    \begin{enumerate}
    \setcounter{enumi}{4}
        \item all actions occurring in $\pi'$ are in $\comp{\ddis}$, and
        \item all states in $\pi'$ satisfy $\left\llbracket \bigwedge_{b \in \nonblock\setminus F} \formdis{b} \right\rrbracket_{\env [X := \mathcal{F}]}$, and
        \item the final state of $\pi'$ is in $\llbracket \diam{a \setminus \ddis}X \rrbracket_{\env [X := \mathcal{F}]}$.
    \end{enumerate}
    We prove the claim by showing that if $s$ admits a path $\pi$ satisfying conditions 1 through 4, it also admits a path $\pi'$ satisfying conditions 5 through 7, and vice versa.
    \begin{itemize}
        \item If $s$ admits a path $\pi$ satisfying conditions 1 through 4, then we know $\pi$'s last transition is labelled with $a$. Let $\pi'$ be the prefix of $\pi$ missing only the last transition and final state. Then because $\pi$ is $\ddis$-free (by 1), so is $\pi'$ (satisfying 5). Additionally, because all states on $\pi$ with the possible exception of $s_{\mathit{final}}$ are in $\left\llbracket \bigwedge_{b \in \nonblock \setminus F} \formdis{b} \right\rrbracket_{\env [X := \mathcal{F}]}$ (by 2), and $\pi'$ has all states of $\pi$ except $s_{\mathit{final}}$, all states in $\pi'$ are in this set as well (satisfying 6). 
        Finally, the final state of $\pi'$, $s_{\mathit{final}}'$, admits a transition labelled with $a$ to $s_{\mathit{final}}$, namely $t_{\mathit{final}}$ (by 4), and $s_{\mathit{final}} \in \mathcal{F}$ (by 3). Hence, $s_{\mathit{final}}' \in \llbracket \diam{a}X\rrbracket_{\env [X := \mathcal{F}]}$. Since $a$ occurs in $\pi$ (by 4) and $\pi$ is $\ddis$-free (by 1), we know $a \not\in \ddis$. Therefore, $s_{\mathit{final}}' \in \llbracket \diam{a \setminus \ddis}X\rrbracket_{\env [X := \mathcal{F}]}$ (satisfying 7). We have shown $s$ admits a path $\pi'$ satisfying conditions 5 through 7.
        \item If $s$ admits a path $\pi'$ satisfying conditions 5 through 7, then we know that the final state of $\pi'$, $s_{\mathit{final}}'$,  admits an $a$ transition to a state in $\mathcal{F}$, and that $a \not\in \ddis$ (by 7). Let $t_{\mathit{final}}$ be such a transition, and let $s_{\mathit{final}}$ be the target of $t_{\mathit{final}}$. We take $\pi = \pi't_{\mathit{final}} s_{\mathit{final}}$. Since $\pi'$ is $\ddis$-free (by 5) and the only new transition is also not labelled with an action in $\ddis$, $\pi$ is $\ddis$-free (satisfying 1). Additionally, all states on $\pi'$ satisfy $\formdis{b}$ for all $b \in \nonblock \setminus F$ (by 6) and so this also holds for all states on $\pi$ with the possible exception of $s_{\mathit{final}}$ (satisfying 2). Finally, by construction $s_{\mathit{final}} \in \mathcal{F}$ and $\action{t_{\mathit{final}}} = a$ (satisfying 3 and 4). We conclude that $s$ admits a path $\pi$ meeting conditions 1 through 4.
    \end{itemize}
    By \autoref{prop:lfp-basic}, we have proven that $s \in \llbracket \mathit{eliminate_{S}}(F, a) \rrbracket_{\env [X := \mathcal{F}]}$ if, and only if, $s$ admits a path satisfying conditions 1 through 4.    
\end{proof}

Next, we want to establish the semantics of $\mathit{invariant_{S}}(F)$. 
We here fix an arbitrary $F \subseteq \nonblock$ such that $F$ is not the empty set.
Let $S_{\mathit{inf}(F)}$ be the set of all states that admit infinite, $\ddis$-free paths along which all actions in $F$ occur infinitely often and $\formdis{b}$ is perpetually satisfied for all $b \in \nonblock \setminus F$.
We prove that $S_{\mathit{inf}(F)}$ is a fixed point of the transformer characterised by $\mathit{invariant_{S}}(F)$.
We will subsequently prove that it is the greatest fixed point.
\begin{lemma}\label{lem:sgen-inv-fix}
    $S_{\mathit{inf}(F)}$ is a fixed point of the transformer $T_{\mathit{inf}(F)}$ defined by:
    $$T_{\mathit{inf}(F)}(\mathcal{F}) = \bigcap_{a \in F} \left\{ s \in \states \mid s \in \left\llbracket \mathit{eliminate_{S}}(F, a) \right\rrbracket_{\env [X := \mathcal{F}]}\right\}$$
    for arbitrary environment $\env$.
\end{lemma}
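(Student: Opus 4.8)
The plan is to prove $T_{\mathit{inf}(F)}(S_{\mathit{inf}(F)}) = S_{\mathit{inf}(F)}$ by mutual set inclusion, in each direction converting membership in $\llbracket \mathit{eliminate_S}(F,a) \rrbracket_{\env[X := S_{\mathit{inf}(F)}]}$ to the existence of a finite witnessing path through \autoref{lem:sgen-sat} (which in turn rests on \autoref{prop:lfp-basic}). The substance is bookkeeping about suffixes and concatenations of infinite paths, not fixpoint reasoning, since that has already been packaged into \autoref{lem:sgen-sat}.

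For $S_{\mathit{inf}(F)} \subseteq T_{\mathit{inf}(F)}(S_{\mathit{inf}(F)})$, I would take $s \in S_{\mathit{inf}(F)}$, witnessed by an infinite $\ddis$-free path $\pi$ from $s$ on which every action of $F$ occurs infinitely often and $\formdis{b}$ holds in every state for every $b \in \nonblock \setminus F$. Fix an arbitrary $a \in F$; since $a$ occurs infinitely often on $\pi$, it occurs at least once, so let $\pi_a$ be the finite prefix of $\pi$ ending with the first $a$-labelled transition and its target $s_{\mathit{final}}$. Then I would check the four conditions of \autoref{lem:sgen-sat}: $\pi_a$ is $\ddis$-free because $\pi$ is; every state of $\pi_a$ (even $s_{\mathit{final}}$, which is more than required) satisfies $\formdis{b}$ for $b \in \nonblock \setminus F$ because $\pi$ does; the last transition of $\pi_a$ is labelled $a$ by construction; and $s_{\mathit{final}} \in S_{\mathit{inf}(F)}$ because the suffix of $\pi$ starting at $s_{\mathit{final}}$ is again infinite and $\ddis$-free, still contains every $F$-action infinitely often (only finitely many occurrences were dropped), and still satisfies $\formdissym$ for $\nonblock \setminus F$ in every state. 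Hence $s \in \llbracket \mathit{eliminate_S}(F,a) \rrbracket_{\env[X := S_{\mathit{inf}(F)}]}$, and as $a \in F$ was arbitrary, $s \in T_{\mathit{inf}(F)}(S_{\mathit{inf}(F)})$.

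For $T_{\mathit{inf}(F)}(S_{\mathit{inf}(F)}) \subseteq S_{\mathit{inf}(F)}$, I would take $s \in T_{\mathit{inf}(F)}(S_{\mathit{inf}(F)})$ and pick some $a \in F$, which exists since $F \neq \emptyset$. Then $s \in \llbracket \mathit{eliminate_S}(F,a) \rrbracket_{\env[X := S_{\mathit{inf}(F)}]}$, so \autoref{lem:sgen-sat} yields a finite $\ddis$-free path $\pi_1$ from $s$ to some $s_1 \in S_{\mathit{inf}(F)}$, ending with an $a$-labelled transition, all of whose states satisfy $\formdis{b}$ for $b \in \nonblock \setminus F$ with the possible exception of $s_1$. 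Since $s_1 \in S_{\mathit{inf}(F)}$, it admits an infinite $\ddis$-free path $\pi_2$ along which every action of $F$ occurs infinitely often and $\formdissym$ for $\nonblock \setminus F$ holds in every state. Taking $\pi = \pi_1 \co \pi_2$, I would verify it witnesses $s \in S_{\mathit{inf}(F)}$: it is infinite and $\ddis$-free; the junction state $s_1$, the only one \autoref{lem:sgen-sat} was silent about, satisfies $\formdis{b}$ for $b \in \nonblock \setminus F$ because $\pi_2$ starts at $s_1$ and has that property in every state; and every action of $F$ occurs infinitely often on $\pi$ because it already does so on the infinite tail $\pi_2$.

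I do not expect a real obstacle, since \autoref{lem:sgen-sat} absorbs all the delicate approximation arguments. The one point that needs care is the treatment of the junction state $s_1$ in the reverse inclusion: one must lean on the perpetual $\formdissym$ property of $\pi_2$ to compensate for the ``with the possible exception of $s_{\mathit{final}}$'' clause in \autoref{lem:sgen-sat}. More generally, the proof hinges on observing that the properties ``occurs infinitely often'' and ``holds in every state'' are preserved under taking suffixes (used in the forward direction) and under prepending a finite $\ddis$-free prefix (used in the reverse direction), which are elementary but should be stated explicitly.
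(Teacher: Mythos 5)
Your proposal is correct and follows essentially the same route as the paper's proof: both directions reduce membership in $\llbracket \mathit{eliminate_S}(F,a)\rrbracket_{\env[X := S_{\mathit{inf}(F)}]}$ to the path characterisation of \autoref{lem:sgen-sat}, cut the infinite witness at the first $a$-transition in the forward direction, and concatenate the finite segment with the infinite tail in the reverse direction, handling the junction state exactly as the paper does by appealing to the perpetual $\formdissym$ property of the tail.
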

\begin{proof}
We prove $S_{\mathit{inf}(F)}$ is a fixed point of $T_{\mathit{inf}(F)}$ through mutual set inclusion of $S_{\mathit{inf}(F)}$ and $T_{\mathit{inf}(F)}(S_{\mathit{inf}(F)})$. 
\begin{itemize}
    \item Let $s$ be an arbitrary state in $S_{\mathit{inf}(F)}$, we prove that $s \in T_{\mathit{inf}(F)}(S_{\mathit{inf}(F)})$. Since $s \in S_{\mathit{inf}(F)}$, we know that $s$ admits an infinite, $\ddis$-free path $\pi$ such that all actions in $F$ occur infinitely often in $\pi$ and $\formdis{b}$ is perpetually satisfied in $\pi$ for all $b \in \nonblock \setminus F$. To prove $s \in T_{\mathit{inf}(F)}(S_{\mathit{inf}(F)})$, we need to show that for all $a \in F$, it is the case that $s \in \llbracket \mathit{eliminate_{S}}(F, a) \rrbracket_{\env[X := S_{\mathit{inf}(F)}]}$. To this end, let $a$ be an arbitrary action in $F$. We now such an action exists since we assumed $F \neq \emptyset$. We prove $s \in \llbracket \mathit{eliminate_{S}}(F, a) \rrbracket_{\env[X := S_{\mathit{inf}(F)}]}$. By \autoref{lem:sgen-sat}, it suffices to prove that $s$ admits a finite path $\pi'$ with final state $s_{\mathit{final}}'$ and final transition $t_{\mathit{final}}'$ such that $\pi'$ is $\ddis$-free, $\formdis{b}$ is satisfied for all $b \in \nonblock \setminus F$ in all states of $\pi'$ (with the possible exception of $s_{\mathit{final}}'$), $s_{\mathit{final}}' \in S_{\mathit{inf}(S)}$ and $\action{t_{\mathit{final}}'} = a$.
    Since $a \in F$, $a$ is non-blocking and occurs infinitely often in $\pi$. Let $t_{\mathit{final}}'$ be the first transition of $\pi$ labelled with $a$, and let $s_{\mathit{final}}' = \target{t_{\mathit{final}}'}$. Let $\pi'$ be the prefix of $\pi$ ending in $s_{\mathit{final}}'$. Note that since $\pi$ is $\ddis$-free, so is $\pi'$. Additionally, since $\formdis{b}$ is satisfied for all $b \in \nonblock \setminus F$ in all states of $\pi$, this holds for all states of $\pi'$ as well. By construction, $\action{t_{\mathit{final}}'} = a$. It remains to prove that $s_{\mathit{final}}' \in S_{\mathit{inf}(F)}$.
    Let $\pi''$ be the suffix of $\pi$ starting in $s_{\mathit{final}}'$. This suffix inherits many properties from $\pi$, particularly that it is infinite, $\ddis$-free, that all actions in $F$ occur infinitely often and that $\formdis{b}$ is perpetually satisfied for all non-blocking actions not in $F$. Hence, $\pi''$ witnesses $s_{\mathit{final}}' \in S_{\mathit{inf}(F)}$. We conclude that $s \in \llbracket \mathit{eliminate_{S}}(F, a) \rrbracket_{\env [X := S_{\mathit{inf}(F)}]}$ for all $a \in F$ and hence $s \in T_{\mathit{inf}(F)}(S_{\mathit{inf}(F)})$.

    \item Let $s$ be an arbitrary state in $T_{\mathit{inf}(F)}(S_{\mathit{inf}(F)})$, we prove $s \in S_{\mathit{inf}(F)}$. From $s\in T_{\mathit{inf}(F)}(S_{\mathit{inf}(F)})$, we conclude that for all $a \in F$, $s \in \llbracket \mathit{eliminate_{S}}(F, a)\rrbracket_{\env[X := S_{\mathit{inf}(F)}]}$. To prove $s \in S_{\mathit{inf}(F)}$, we construct a path $\pi$ from $s$ that is infinite, $\ddis$-free, and on which all actions in $F$ occur infinitely often and $\formdis{b}$ is perpetually satisfied for all $b \in \nonblock \setminus F$. To this end, let $a$ be an arbitrary action in $F$, which is guaranteed to exist because $F \neq \emptyset$. We will use $s \in \llbracket \mathit{eliminate_{S}}(F, a)\rrbracket_{\env[X := S_{\mathit{inf}(F)}}$ to construct $\pi$. By \autoref{lem:sgen-sat}, $s$ admits a finite path $\pi'$ with a final state $s_{\mathit{final}}'$ and a final transition $t_{\mathit{final}}'$ such that $\pi'$ is $\ddis$-free, $\formdissym$ is satisfied for all non-blocking actions not in $F$ in all states of $\pi'$ with the possible exception of $s_{\mathit{final}}'$, $s_{\mathit{final}}' \in S_{\mathit{inf}(F)}$ and $\action{t_{\mathit{final}}'} = a$. From $s_{\mathit{final}}' \in S_{\mathit{inf}(F)}$, we know that $s_{\mathit{final}}'$ admits an infinite, $\ddis$-free path $\pi''$ such that all actions in $F$ occur infinitely often in $\pi''$ and $\formdissym$ is perpetually satisfied for all non-blocking actions not in $F$ in $\pi''$. Note that since $s_{\mathit{final}}'$ is on $\pi''$, $\formdissym$ is satisfied for all non-blocking actions not in $F$ in $s_{\mathit{final}}'$, even though we did not get this fact directly from \autoref{lem:sgen-sat}. Let $\pi = \pi' \co \pi''$. Since $\pi''$ is infinite and all actions in $F$ occur infinitely often in $\pi''$, both qualities hold for $\pi$ as well. Additionally, since both $\pi'$ and $\pi''$ are $\ddis$-free, so is $\pi$. It remains to show that $\formoffsym$ is perpetually satisfied for all non-blocking actions not in $F$ on $\pi$. We got from \autoref{lem:sgen-sat} that all states on $\pi'$ satisfy this condition, except possibly $s_{\mathit{final}}'$. But from $s_{\mathit{final}}' \in S_{\mathit{inf}(F)}$, we got that this holds for $s_{\mathit{final}}'$ as well. It also holds for all states on $\pi''$. Hence, the condition is satisfied for all states of $\pi$. Using $\pi$ as a witness, we conclude $s \in S_{\mathit{inf}(F)}$.
\end{itemize}
    We conclude that $S_{\mathit{inf}(F)} = T_{\mathit{inf}(F)}(S_{\mathit{inf}(F)})$.
\end{proof}

\begin{lemma}\label{lem:sgen-inv-greatest}
    The set $S_{\mathit{inf}(F)}$ is the greatest fixed point of the transformer defined in \autoref{lem:sgen-inv-fix}.
\end{lemma}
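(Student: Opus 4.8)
The plan is to show that every fixed point $\mathcal{F}$ of $T_{\mathit{inf}(F)}$ is contained in $S_{\mathit{inf}(F)}$; combined with Lemma~\ref{lem:sgen-inv-fix}, which establishes that $S_{\mathit{inf}(F)}$ is itself a fixed point, this gives that $S_{\mathit{inf}(F)}$ is the greatest fixed point. So let $\mathcal{F}$ be an arbitrary fixed point of $T_{\mathit{inf}(F)}$, and let $s \in \mathcal{F}$; I will construct an infinite, $\ddis$-free path $\pi$ from $s$ on which every action in $F$ occurs infinitely often and $\formdis{b}$ is perpetually satisfied for every $b \in \nonblock \setminus F$, witnessing $s \in S_{\mathit{inf}(F)}$.

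The construction proceeds in rounds, using a queue $Q$ over the (finite) set $F$, initialised with one copy of each action in $F$ in arbitrary order. We maintain $\pi$ as the path built so far, with current final state $s'$; the invariant is that $s' \in \mathcal{F}$, that $\pi$ is $\ddis$-free, and that every state of $\pi$ except possibly $s'$ satisfies $\formdis{b}$ for all $b \in \nonblock \setminus F$. At each round we pop the head $a$ of $Q$. Since $s' \in \mathcal{F} = T_{\mathit{inf}(F)}(\mathcal{F})$ and $a \in F$, we have $s' \in \llbracket \mathit{eliminate_S}(F,a)\rrbracket_{\env[X := \mathcal{F}]}$, so by Lemma~\ref{lem:sgen-sat} there is a finite path from $s'$ to some $s'' \in \mathcal{F}$ that is $\ddis$-free, has all states except possibly $s''$ satisfying $\formdis{b}$ for every $b \in \nonblock \setminus F$, and whose final transition is labelled $a$ (hence the segment has length at least one). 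We append this segment to $\pi$, set $s' := s''$, append $a$ to the back of $Q$, and continue. Because $Q$ is a fixed finite queue that is never emptied, each action of $F$ reaches the head infinitely often, so each such action is executed infinitely often in the resulting path; and since every segment has length $\geq 1$, the path is genuinely infinite.

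The remaining point is to verify the invariant is restored after each round and that the two global conditions hold for the limit path $\pi$. The $\ddis$-freeness is immediate since $\pi$ is a concatenation of $\ddis$-free segments. For $\formdis{b}$ being perpetually satisfied: the invariant only guarantees $\formdis{b}$ in all states of each segment \emph{except possibly its final state}; but the final state of one segment is the initial state of the next segment, and Lemma~\ref{lem:sgen-sat} guarantees $\formdis{b}$ at all states of that next segment except possibly \emph{its} final state --- so in fact every state of $\pi$ is the non-final state of some segment, hence satisfies $\formdis{b}$. (The one subtlety here: the very last segment in a finite construction would leave its final state uncovered, but since the construction never terminates there is no last segment; every state is eventually an interior-or-initial state of a later segment.) I would state this carefully, as it is the spot where an off-by-one error is most likely. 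With the invariant and these two conditions established, $\pi$ witnesses $s \in S_{\mathit{inf}(F)}$, so $\mathcal{F} \subseteq S_{\mathit{inf}(F)}$, and therefore $S_{\mathit{inf}(F)}$ is the greatest fixed point of $T_{\mathit{inf}(F)}$.

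The main obstacle is precisely the bookkeeping around $\formdis{b}$ at segment boundaries combined with ensuring fairness of the round-robin over $F$; the use of a queue rather than an arbitrary interleaving is what makes "every action of $F$ infinitely often" clean, and the overlap of consecutive segments' endpoints is what makes "$\formdis{b}$ everywhere" clean. Neither step is deep, but both need to be spelled out to avoid a gap. Everything else --- $\ddis$-freeness, the appeal to Lemma~\ref{lem:sgen-sat}, and the reduction "greatest fixed point = union over all fixed points is below it" --- is routine.
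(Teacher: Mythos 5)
Your proposal is correct and follows essentially the same route as the paper's proof: the same round-robin queue over $F$, the same appeal to \autoref{lem:sgen-sat} to obtain each $\ddis$-free segment ending in an $a$-labelled transition into $\mathcal{F}$, and the same resolution of the boundary issue for $\formdis{b}$ (the final state of each segment is the non-final initial state of the next, and the construction never terminates). No gaps.
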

\begin{proof}
    For this proof, we use \autoref{lem:sgen-inv-fix}. Let $\mathcal{F}$ be an arbitrary fixed point of $T_{\mathit{inf}(F)}$, we prove $\mathcal{F} \subseteq S_{\mathit{inf}(F)}$. 
    To this end, let $s$ be an arbitrary state in $\mathcal{F}$, we prove $s \in S_{\mathit{inf}(F)}$ by constructing a path $\pi$ from $s$ that is infinite, $\ddis$-free, on which all actions in $F$ occur infinitely often and $\formdis{b}$ is perpetually satisfied for all $b \in \nonblock \setminus F$.

    For this construction, let $Q$ be a queue of actions.
    This construction consists of an infinite loop. We start the loop with $\pi = s$ and $Q$ containing exactly one copy of every action in $F$ in arbitrary order. Note that $s \in \mathcal{F}$. The loop as several invariants
    \begin{itemize}
        \item the final state of $\pi$ is in $\mathcal{F}$.
        \item $Q$ contains exactly one copy of every action in $F$ and nothing else.
        \item $\pi$ constructed so far is $\ddis$-free.
    \end{itemize}

    In every iteration of the loop, let $s'$ be the final state of $\pi$ constructed so far. Let $a$ be the action at the head of $Q$.  Since $F \neq \emptyset$, and $Q$ always contains a copy of every action in $F$, such an $a$ always exists. We pop $a$ from $Q$ and immediately re-insert it at the end of the queue. An invariant of this construction is that $s'$ is always in $\mathcal{F}$. Because of this, and since $\mathcal{F}$ is a fixed point of $T_{\mathit{inf}(F)}$, we have that $s' \in T_{\mathit{inf}(F)}(\mathcal{F})$. From this we conclude that for all actions $a' \in F$, $s' \in \llbracket \mathit{eliminate_{S}}(F, a')\rrbracket_{\env [X := \mathcal{F}]}$. The action $a$ came from $Q$, and hence is an action in $F$. Therefore, $s' \in \llbracket \mathit{eliminate_{S}}(F, a)\rrbracket_{\env[X:= \mathcal{F}]}$. By \autoref{lem:sgen-sat}, $s'$ therefore admits a path $\pi'$ meeting the following conditions:
    \begin{enumerate}
        \item $\pi'$ is $\ddis$-free, and
        \item all states of $\pi'$, with the possible exception of its final state, satisfy $\formdis{b}$ for all $b \in \nonblock \setminus F$, and
        \item the final state of $\pi'$ is in $\mathcal{F}$, and
        \item the final transition of $\pi'$ is labelled with $a$.
    \end{enumerate}
    We append $\pi'$ to $\pi$, and then start the loop again.
    Note that the new final state of $\pi$ is in $\mathcal{F}$ by construction, and we the only modification of $Q$ is removing and then re-inserting $a$, so $Q$ still contains exactly one copy of every action in $F$ and nothing else.
    Since we added a $\ddis$-free segment to $\pi$, $\pi$ is still $\ddis$-free. Hence, all the invariants still hold.

    This loop will never terminate.
    From this, and the invariants, it immediately follows that the path $\pi$ we construct is infinite and $\ddis$-free. It remains to argue that all actions in $F$ occur infinitely often in $\pi$ and that $\formdissym$ is perpetually satisfied for all non-blocking actions not in $F$.
    Since $Q$ is a queue, and we always pop the top action and then insert it back at the end, and $F \subseteq \nonblock \subseteq \actionset$ and $\actionset$ is finite, the construction of $\pi$ ensures that every action that is initially in $Q$ is selected infinitely often to be $a$. We then add a segment to $\pi$ where $a$ occurs. Hence, every action that is initially in $Q$ occurs infinitely often in $\pi$. We initialised $Q$ with all the actions in $F$, hence every action in $F$ occurs infinitely often in $\pi$.
    At first glance, it may seem that this construction does not guarantee the second condition. After all, every path segment we add only ensures that all states except the last state satisfy $\formdis{b}$ for all $b \in \nonblock \setminus F$. Note however, that these segments do guarantee that the final state is in the fixed point $\mathcal{F}$ again, and therefore that the final state once again admits a segment that meets the condition that all states, with the possible exception of the last, satisfy this condition. Additionally, the added path always contains at least one transition, so the initial state of the segment is not the final state. Hence, whenever we add a segment $\pi'$, the fact that we can add a next segment guarantees that $\formdissym$ is satisfied for all non-blocking actions not in $F$ in the final state of $\pi'$. And since we infinitely often can add a next segment, we conclude that every state on $\pi$ satisfies the condition.

    We have constructed a path $\pi$ witnessing $s \in S_{\mathit{inf}(F)}$, hence $S_{\mathit{inf}(F)}$ is the greatest fixed point of $T_{\mathit{inf}(F)}$.
\end{proof}
Since the semantics of $\mathit{invariant_{S}}(F)$ are exactly the greatest fixed point of $T_{\mathit{inf}(F)}$, we can conclude the following.
\begin{corollary}\label{cor:sgen-inv}
    The set of states characterised by $\mathit{invariant_{S}}(F)$ is exactly the set of states that admit infinite, $\ddis$-free paths on which all actions in $F$ occur infinitely often and $\formdis{b}$ is perpetually satisfied for all $b \in \nonblock \setminus F$. This holds true whenever $F \subseteq \nonblock$ and $F \neq \emptyset$.
\end{corollary}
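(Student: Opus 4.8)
\textbf{Proof proposal for \autoref{cor:sgen-inv}.}
The plan is to read the statement off by unfolding the fixpoint semantics of $\mathit{invariant_S}(F)$ and then appealing to the two preceding lemmas. First I would observe that the top-level operator of $\mathit{invariant_S}(F)$ is $\nu X$, so by the semantics of \autoref{subsec:mucalc} the set $\llbracket \mathit{invariant_S}(F) \rrbracket_{\env}$ is the greatest fixed point of the transformer $\mathcal{F} \mapsto \llbracket \bigwedge_{a \in F} \mathit{eliminate_S}(F, a) \rrbracket_{\update{\env}{X}{\mathcal{F}}}$. Since the denotation of a conjunction is the intersection of the denotations of its conjuncts, this transformer is literally $T_{\mathit{inf}(F)}$ as defined in \autoref{lem:sgen-inv-fix}; the hypothesis $F \neq \emptyset$ is exactly what keeps this conjunction (hence the intersection) from being vacuous, and $F \subseteq \nonblock$ guarantees that every action in $F$ is non-blocking and that $\nonblock \setminus F$ is meaningful, as used in both lemmas.

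Having matched the transformers, I would invoke \autoref{lem:sgen-inv-fix} to conclude that $S_{\mathit{inf}(F)}$ is a fixed point of $T_{\mathit{inf}(F)}$ and \autoref{lem:sgen-inv-greatest} to conclude that it is the greatest one. Together with the first step this gives $\llbracket \mathit{invariant_S}(F) \rrbracket_{\env} = S_{\mathit{inf}(F)}$, uniformly in $\env$; the environment is irrelevant because, with $F$ fixed, the only free variable of the body is the bound variable $X$ and $\formdissym$ maps to closed formulae.

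Finally I would unfold the definition of $S_{\mathit{inf}(F)}$: it was introduced as the set of states that admit an infinite, $\ddis$-free path along which every action in $F$ occurs infinitely often and $\formdis{b}$ is perpetually satisfied for every $b \in \nonblock \setminus F$. Substituting this description into the equality above yields the corollary verbatim. I expect no genuine obstacle, since all the content already lives in the two lemmas; the only point requiring a moment of care is confirming that the transformer induced by the $\nu$-formula coincides on the nose with the $T_{\mathit{inf}(F)}$ used there, along with the already-discharged reliance on finiteness of the LTS in the approximant-based treatment of fixpoints.
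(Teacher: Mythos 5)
Your proposal is correct and matches the paper's own (one-line) justification: the two preceding lemmas establish that $S_{\mathit{inf}(F)}$ is the greatest fixed point of $T_{\mathit{inf}(F)}$, which is exactly the transformer induced by the $\nu$-formula, so the corollary follows by unfolding the definition of $S_{\mathit{inf}(F)}$. Your extra remarks on $F \neq \emptyset$, closedness of $\formdissym$, and environment-independence are sound bookkeeping that the paper leaves implicit.
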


We now cover the semantics of $\mathit{disjunct_S}$.
\begin{lemma}\label{lem:sgen-disjunct}
    For all environments $\env$ and  $s \in \states$, $s \in \llbracket \mathit{disjunct_{S}} \rrbracket_{\env}$ if, and only if, there exists a subset $F$ of $\nonblock$ with $F \neq \emptyset$, such that $s$ admits an infinite, $\ddis$-free path on which all actions in $F$ occur infinitely often and $\formdis{b}$ is perpetually satisfied for all $b \in \nonblock \setminus F$.
\end{lemma}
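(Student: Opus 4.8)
The plan is to unfold the semantics of the generalised disjunction and invoke \autoref{cor:sgen-inv}. Since $\nonblock \subseteq \actionset$ and $\actionset$ is finite, there are only finitely many non-empty subsets $F$ of $\nonblock$, so $\mathit{disjunct_S} = \bigvee_{\emptyset \neq F \subseteq \nonblock} \mathit{invariant_S}(F)$ is a well-formed abbreviation of an iterated $\lor$. By the semantics of $\lor$ we then have $\llbracket \mathit{disjunct_S} \rrbracket_{\env} = \bigcup_{\emptyset \neq F \subseteq \nonblock} \llbracket \mathit{invariant_S}(F) \rrbracket_{\env}$, so for any $s \in \states$ it holds that $s \in \llbracket \mathit{disjunct_S} \rrbracket_{\env}$ if, and only if, there is some non-empty $F \subseteq \nonblock$ with $s \in \llbracket \mathit{invariant_S}(F) \rrbracket_{\env}$.

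Next I would apply \autoref{cor:sgen-inv}: for each non-empty $F \subseteq \nonblock$, the set $\llbracket \mathit{invariant_S}(F) \rrbracket_{\env}$ is exactly the set of states admitting an infinite, $\ddis$-free path on which all actions in $F$ occur infinitely often and $\formdis{b}$ is perpetually satisfied for all $b \in \nonblock \setminus F$. Substituting this characterisation into the equivalence from the previous paragraph yields precisely the statement of the lemma: $s \in \llbracket \mathit{disjunct_S} \rrbracket_{\env}$ if, and only if, there exists a non-empty $F \subseteq \nonblock$ such that $s$ admits such a path.

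There is essentially no real obstacle here: the lemma is a bookkeeping step that lifts \autoref{cor:sgen-inv} from a single fixed $F$ to the disjunction over all admissible $F$. The only point requiring a moment's care is that the generalised disjunction is finite, which is guaranteed by the standing assumption that $\actionset$ is finite; without this the abbreviation $\bigvee$ would not denote a modal $\mu$-calculus formula and the appeal to the semantics of $\lor$ would not be justified.
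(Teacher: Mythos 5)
Your proposal is correct and matches the paper's own (one-line) proof: both simply unfold the finite disjunction via the semantics of $\lor$ and apply \autoref{cor:sgen-inv} to each disjunct. Your extra remark about finiteness of $\actionset$ guaranteeing that the generalised disjunction is a genuine formula is a sensible, if minor, addition.
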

\begin{proof}
    This follows trivially from the definition of $\mathit{disjunct_{S}}$, \autoref{cor:sgen-inv} and the definition of the modal $\mu$-calculus semantics.
\end{proof}
We next cover $\mathit{finite_S}$.

\begin{lemma}\label{lem:sgen-finite}
    For all environments $\env$ and states $s \in \states$, $s \in \llbracket \mathit{finite_{S}} \rrbracket_{\env}$ if, and only if, $s$ admits a complete path $\pi$ that is $(\emptyseq,\ddis,\den)$-violating and satisfied $P_S$.
\end{lemma}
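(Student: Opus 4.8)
The plan is to reduce \autoref{lem:sgen-finite} to the fixpoint characterisation of the outer $\diam{\clos{\comp{\ddis}}}$ and then perform a case analysis on the three disjuncts $\diam{\den}\tp$, $\boxm{\nonblock}\fp$ and $\mathit{disjunct_S}$. Note that, exactly as in the proof of \autoref{lem:sat}, $\mathit{finite_S}$ has the form $\lfpb{Y}{\tp \land ((\diam{\den}\tp \lor \boxm{\nonblock}\fp \lor \mathit{disjunct_S}) \lor \diam{\comp{\ddis}}Y)}$, where the disjunction $\diam{\den}\tp \lor \boxm{\nonblock}\fp \lor \mathit{disjunct_S}$ does not depend on $Y$. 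Hence \autoref{prop:lfp-basic} (with $\phi_1 = \tp$, $\phi_2$ the three-way disjunction, $\alpha = \comp{\ddis}$) gives: $s \in \llbracket\mathit{finite_S}\rrbracket_{\env}$ if, and only if, $s$ admits a finite $\ddis$-free path $\pi_0$ whose final state $s'$ satisfies one of the three disjuncts. This equivalence is the workhorse for both directions.

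For the ``only if'' direction, take such a $\pi_0$ ending in $s'$. If $s'$ satisfies $\diam{\den}\tp$, append to $\pi_0$ a transition labelled in $\den$; the result is finite and $\ddis$-free up to its first $\den$-occurrence, and by feasibility of $P_S$ it extends to a path $\pi$ satisfying $P_S$, which is $\block$-progressing by \autoref{prop:sgen-to-progress}, and is still $(\emptyseq,\ddis,\den)$-violating since the extension happens only after the first $\den$-occurrence. If $s'$ satisfies $\boxm{\nonblock}\fp$, then $s'$ is $\block$-locked, so $\pi_0$ is finite and $\block$-progressing, hence satisfies $P_S$ by \autoref{prop:sgen-finite}, and it is $\ddis$-free. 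If $s'$ satisfies $\mathit{disjunct_S}$, then by \autoref{lem:sgen-disjunct} there is a non-empty $F \subseteq \nonblock$ and an infinite $\ddis$-free path $\pi_1$ from $s'$ on which every action of $F$ occurs infinitely often and $\formdis{b}$ holds perpetually for every $b \in \nonblock\setminus F$; then $\pi_0\co\pi_1$ is infinite (hence $\block$-progressing), $\ddis$-free (hence $(\emptyseq,\ddis,\den)$-violating), and satisfies $P_S$, because each $a \in F$ occurs infinitely often while each $b \in \nonblock\setminus F$ has the suffix $\pi_1$ on which $\formdis{b}$ holds perpetually.

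For the ``if'' direction, let $\pi$ be a $\block$-progressing, $(\emptyseq,\ddis,\den)$-violating path from $s$ satisfying $P_S$; by the equivalence above it suffices to produce a finite $\ddis$-free prefix of $\pi$ ending in a state satisfying one of the three disjuncts. If $\pi$ contains an occurrence of an action in $\den$, take the prefix up to the first such occurrence: it is $\ddis$-free by the violation condition and its final state satisfies $\diam{\den}\tp$. Otherwise $\pi$ is $\ddis$-free; if it is moreover finite, its final state is $\block$-locked by $\block$-progress and so satisfies $\boxm{\nonblock}\fp$, so $\pi$ itself works. If $\pi$ is infinite and $\ddis$-free, let $F$ be the set of non-blocking actions occurring infinitely often on $\pi$. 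Since $\nonblock$ is finite, intersecting the finitely many suffixes of $\pi$ that $P_S$ provides, one on which $\formdis{b}$ holds perpetually for each $b \in \nonblock\setminus F$, yields a single suffix $\pi'$ of $\pi$ on which $\formdis{b}$ holds perpetually for all $b \in \nonblock\setminus F$ (when $\nonblock\setminus F$ is empty, take $\pi' = \pi$). If $F \neq \emptyset$, then $\pi'$ is infinite, $\ddis$-free, and every action of $F$ still occurs infinitely often on it, so by \autoref{lem:sgen-disjunct} the first state of $\pi'$ satisfies $\mathit{disjunct_S}$, and the prefix of $\pi$ up to that state is the required finite $\ddis$-free prefix. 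If $F = \emptyset$, then $\formdis{b}$ holds in every state of $\pi'$ for all $b \in \nonblock$, so every state of $\pi'$ is $\block$-locked by the assumed restriction on $\formdissym$; its first state satisfies $\boxm{\nonblock}\fp$, and again the prefix of $\pi$ up to it works.

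I expect the main obstacle to be the last case of the ``if'' direction: one must split on whether any non-blocking action recurs infinitely often, and in the recurrent case isolate a suffix that simultaneously witnesses recurrence of all actions in $F$ and perpetual satisfaction of $\formdis{b}$ for the remaining non-blocking actions, which is where finiteness of $\nonblock$ is essential; in the non-recurrent case the defining property of $\formdissym$ is what lets us locate a $\block$-locked state along $\pi$. The remaining bookkeeping --- matching the two shapes of $(\emptyseq,\ddis,\den)$-violating paths and invoking \autoref{prop:sgen-to-progress}, \autoref{prop:sgen-finite}, \autoref{lem:sgen-disjunct} and \autoref{prop:lfp-basic} --- is routine.
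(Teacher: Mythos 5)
Your proposal is correct and follows essentially the same route as the paper's proof: reduce $\mathit{finite_S}$ via \autoref{prop:lfp-basic} to a finite $\ddis$-free path ending in a state satisfying one of the three disjuncts, then case-split on $\diam{\den}\tp$, $\boxm{\nonblock}\fp$ and $\mathit{disjunct_S}$ in the forward direction, and on occurrence of $\den$, finiteness, and whether $F=\emptyset$ in the reverse direction. Your explicit remark that finiteness of $\nonblock$ is what lets one intersect the suffixes provided by $P_S$ into a single suffix is a small detail the paper leaves implicit, but it is the same argument.
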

\begin{proof}
    Note that $\mathit{finite_S}$ is equivalent to $\mu Y.(\diam{\den}\tp \lor \boxm{\nonblock}\fp \lor \mathit{disjunct_S} \lor \diam{\comp{\ddis}}Y)$, we merely wrote it more compactly using regular formulae.

    We wish to apply \autoref{prop:lfp-basic} to determine the semantics of this formula. 
    To be able to do this, we observe that $\diam{\den}\tp \lor \boxm{\nonblock}\fp \lor \mathit{disjunct_{S}}$ does not depend on $Y$. Hence, we can apply \autoref{prop:lfp-basic} (with $\phi_1 = \tp$) to conclude that $s$ is in $\llbracket \mathit{finite_{S}} \rrbracket_{\env}$ if, and only if, $s$ admits a finite path $\pi'$ that is $\ddis$-free and ends in a state satisfying $\llbracket \diam{\den}\tp \lor \boxm{\nonblock}\fp \lor \mathit{disjunct_{S}} \rrbracket_{\env}$.
    We use this information to prove the bi-implication.

    \begin{itemize}
        \item Assume $s \in \llbracket \mathit{finite_{S}} \rrbracket_{\env}$, then by \autoref{prop:lfp-basic}, we obtain a path $\pi'$ starting in $s$ that is $\ddis$-free and ends in a state $s'$ such that $ s' \in \llbracket \diam{\den}\tp \rrbracket_{\env}$, or $s' \in \llbracket \boxm{\nonblock}\fp \rrbracket_{\env}$, or $s' \in \llbracket \mathit{disjunct_{S}} \rrbracket_{\env}$. We construct the path $\pi$ using this information. For this, we do a case distinction on which set $s'$ is in.
        \begin{itemize}
            \item If $s' \in \llbracket \diam{\den}\tp \rrbracket_{\env}$, then $\pi'$ is a finite path with an action in $\den$ enabled in the final state. We extend $\pi'$ with a transition labelled with an action in $\den$. The resulting path can be extended to a path $\pi$ that satisfies $P_S$ because we assumed $P_S$ to be feasible and we know this path is also $\block$-progressing by \autoref{prop:sgen-to-progress}. Note that $\pi'$ is $\ddis$-free, and we immediately appended a transition labelled with an action in $\den$ to it, before we did the extension to a $\block$-progressing and complete path. Hence, the extended path $\pi$ is $\ddis$-free up until an occurrence of $\den$, and hence is also $\ddis$-free up until the first occurrence of such an action. We conclude that $\pi$ is $\block$-progressing, $\ddis$-free up until the first occurrence of an action in $\den$, and satisfies $P_S$.
            
            \item If $s' \in \llbracket \boxm{\nonblock}\fp \rrbracket_{\env}$, then $\pi'$ is a $\block$-progressing path ending in a blocking state. Since $\pi'$ is finite and $\block$-progressing, it satisfies $P_S$ according to \autoref{prop:sgen-finite}. Additionally, we know that $\pi'$ is entirely $\ddis$-free. We conclude that $s$ admits a path that is $\block$-progressing, entirely $\ddis$-free and satisfies $P_S$.
            
            \item If $s' \in \llbracket \mathit{disjunct_{S}}\rrbracket_{\env}$ then by \autoref{lem:sgen-disjunct}, there exists a subset $F$ of $\nonblock$ with $F \neq \emptyset$ such that $s'$ admits an infinite, $\ddis$-free path $\pi''$ on which all actions in $F$ occur infinitely often and $\formdis{b}$ is perpetually satisfied for all $b \in \nonblock \setminus F$. Let $F$ be a subset meeting this condition, and let $\pi''$ be the path we obtain from $F$. Then, let $\pi = \pi'\co \pi''$. Since $\pi''$ is infinite, so is $\pi$, hence $\pi$ is $\block$-progressing. Since both $\pi'$ and $\pi''$ are $\ddis$-free, so is $\pi$. It remains to show that $\pi$ satisfies $P_S$.

            To do this, we must show that for every $a \in \nonblock$, $a$ occurs infinitely often in $\pi$ or $\pi$ has a suffix on which $\formdis{a}$ is perpetually satisfied. 
            Let $a$ be an arbitrary non-blocking action. If $a \in F$, then since all actions in $F$ occur infinitely often on $\pi''$, $a$ occurs infinitely often on $\pi$. If $a \not\in F$, then $a \in \nonblock \setminus F$ and hence $\formdis{a}$ is perpetually satisfied on $\pi''$. Therefore, $\pi$ contains a suffix on which $\formdis{a}$ is perpetually satisfied.

            We have constructed a path $\pi$ that is $\block$-progressing, $\ddis$-free and satisfying $P_S$.
        \end{itemize}
        In all three cases, we can construct a $\block$-progressing path that is $(\emptyseq,\ddis,\den)$-violating and satisfies $P_S$.
        
        \item Assume that $s$ admits a path $\pi$ that is $\block$-progressing, $(\emptyseq, \ddis, \den)$-violating and satisfying $P_S$. We prove $s \in \llbracket \mathit{finite_{S}} \rrbracket_{\env}$. By \autoref{prop:lfp-basic}, it is sufficient to show that $s$ admits a finite path $\pi'$ that is $\ddis$-free and ends in a state $s'$ that is in $\llbracket \diam{\den}\tp \rrbracket_{\env}$, or in $\llbracket \boxm{\nonblock}\fp \rrbracket_{\env}$, or in $\llbracket \mathit{disjunct_{S}}\rrbracket_{\env}$.
        We do a case distinction on whether there are any occurrences of actions in $\den$ in $\pi$.
        \begin{itemize}
            \item If there are, then let $t_{\mathit{en}}$ be the first transition in $\pi$ that is labelled with an action in $\den$. Let $\pi'$ be the prefix of $\pi$ ending in $\source{t_{\mathit{en}}}$. Since $\pi$ is $\ddis$-free up until the first occurrence of an action in $\den$, $\pi'$ is $\ddis$-free. Additionally, because $t_{\mathit{en}}$ is enabled in the final state of $\pi'$, this state is in $\llbracket \diam{\den}\tp\rrbracket_{\env}$. We can therefore conclude $s \in \llbracket \mathit{finite_{S}} \rrbracket_{\env}$.
            
            \item If there are not, then $\pi$ is entirely $\ddis$-free. We do a second case distinction on whether $\pi$ is finite or not.
            \begin{itemize}
                \item If $\pi$ is finite, then the final state of $\pi$ is a $\block$-locked state. Hence, $\pi$ is an entirely $\ddis$-free path ending in a state in $\llbracket \boxm{\nonblock}\fp \rrbracket_{\env}$. We conclude that $s \in \llbracket \mathit{finite_{S}}\rrbracket_{\env}$.
                
                \item If $\pi$ is infinite we need a more extensive argument. Recall that $\pi$ satisfies $P_S$.
                Hence all non-blocking actions $a$ occur infinitely often on $\pi$ or there is a suffix of $\pi$ on which $\formdis{a}$ is perpetually satisfied.
                Let $F$ be the set of all non-blocking actions that occur infinitely often on $\pi$. 
                If $F = \emptyset$, then it must be the case that $\pi$ has a suffix $\pi''$ on which $\formdis{a}$ is perpetually satisfied for all $a \in \nonblock$. Consider that we assumed that $P_S$ is such that a state is $\block$-locked if, and only if, it satisfies $\formdis{b}$ for all $b \in \nonblock$. Therefore, every state on $\pi''$ is a $\block$-locked state.
                Any such state is in $\llbracket \boxm{\nonblock}\fp\rrbracket_{\env}$. Let $\pi'$ be the finite prefix of $\pi$ ending in the first state of $\pi''$, then $\pi'$ is $\ddis$-free because $\pi$ is and ends in a state in $\llbracket \boxm{\nonblock}\fp\rrbracket_{\env}$. Since $s$ admits $\pi'$, $s \in \llbracket \mathit{finite}_S\rrbracket_{\env}$.

                If $F \neq \emptyset$, then $s$ admits a path $\pi$ that is infinite, $\ddis$-free and on which all actions in $F$ occur infinitely often and that has a suffix, $\pi''$ on which $\formoff{b}$ is perpetually satisfied for all $b \in \nonblock \setminus F$. Let $s'$ be the first state of $\pi''$. 
                Since $s'$ admits $\pi''$, it admits an infinite, $\ddis$-free path on which all actions in $F$ occur infinitely often and $\formoff{b}$ is perpetually satisfied for all $b \in \nonblock \setminus F$. Hence, by \autoref{lem:sgen-disjunct}, $s' \in \llbracket \mathit{disjunct_{S}}\rrbracket_{\env}$. 
                Let $\pi'$ be the path such that $\pi = \pi' \co \pi''$. Then $\pi'$ is a finite path from $s$ that is $\ddis$-free and which ends in a state that is in $\llbracket \mathit{disjunct_{S}}\rrbracket_{\env}$. Therefore, $s \in \llbracket \mathit{finite_{S}}\rrbracket_{\env}$.
            \end{itemize}
        \end{itemize}
        We have proven that in all cases, $s \in \llbracket{finite_{S}}\rrbracket_{\env}$.
    \end{itemize}
    We have proven both sides of the bi-implication.
\end{proof}

The final two steps of the general proof are largely trivial.
\begin{lemma}\label{lem:violate-sgen}
    For all environments $\env$ and states $s \in \states$, it holds that $s \in \llbracket \mathit{violate_S} \rrbracket_{\env}$ if, and only if, $s$ admits a path that is $\block$-progressing, satisfies $P_S$ and is $(\dpre, \ddis, \den)$-violating.
\end{lemma}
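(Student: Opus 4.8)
The plan is to mirror the proof of \autoref{lem:violate}, since $\mathit{violate_S} = \diam{\dpre}\mathit{finite_S}$ has exactly the same shape. By the semantics of the diamond operator over regular formulae, $s \in \llbracket \mathit{violate_S} \rrbracket_{\env}$ holds if, and only if, $s$ admits a path $\pi$ whose prefix matches $\dpre$ and ends in a state $s'$ with $s' \in \llbracket \mathit{finite_S} \rrbracket_{\env}$. By \autoref{lem:sgen-finite}, the latter holds precisely when $s'$ admits a $\block$-progressing path $\pi'$ that is $(\emptyseq,\ddis,\den)$-violating and satisfies $P_S$. So the whole lemma reduces to showing that $\pi$ and $\pi'$ can be glued at $s'$ into a single witnessing path, and conversely that any witness splits at the point where $\dpre$ is matched.

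For the forward direction, I take the concatenation $\pi \co \pi'$. It is $\block$-progressing because $\pi'$ is: prepending the finite prefix $\pi$ neither removes the final $\block$-locked state nor turns an infinite path finite. It satisfies $P_S$ by \autoref{prop:sgen-prepend}, since $\pi'$ satisfies $P_S$ and $\pi'$ is a suffix of $\pi \co \pi'$. Finally, $\pi \co \pi'$ is $(\dpre,\ddis,\den)$-violating: taking $\pi_{\mathit{pre}} = \pi$, which matches $\dpre$, and $\pi_{\mathit{suf}} = \pi'$, which is $\ddis$-free up to the first occurrence of an action in $\den$ by the $(\emptyseq,\ddis,\den)$-violation property of $\pi'$, yields exactly the required decomposition.

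For the converse, suppose $s$ admits a $\block$-progressing path $\pi$ that satisfies $P_S$ and is $(\dpre,\ddis,\den)$-violating, so $\pi = \pi_{\mathit{pre}} \co \pi_{\mathit{suf}}$ with $\pi_{\mathit{pre}}$ matching $\dpre$ and $\pi_{\mathit{suf}}$ being $\ddis$-free up to the first occurrence of an action in $\den$. I claim $\pi_{\mathit{suf}}$ witnesses that its first state $s'$ lies in $\llbracket \mathit{finite_S}\rrbracket_{\env}$: it is $\block$-progressing because it is a suffix of the $\block$-progressing path $\pi$, it satisfies $P_S$ by \autoref{prop:sgen-suffix}, and it is $(\emptyseq,\ddis,\den)$-violating by construction. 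Hence $s' \in \llbracket \mathit{finite_S}\rrbracket_{\env}$ by \autoref{lem:sgen-finite}, and since $\pi_{\mathit{pre}}$ is a $\dpre$-matching path from $s$ to $s'$, we conclude $s \in \llbracket \diam{\dpre}\mathit{finite_S}\rrbracket_{\env} = \llbracket \mathit{violate_S}\rrbracket_{\env}$.

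There is no genuine obstacle here: all the analytic content sits in \autoref{lem:sgen-finite}, and this lemma merely stitches on the $\dpre$ prefix. The one point requiring a little care is the bookkeeping at the splice: one must be sure the decomposition induced by $\dpre$ is compatible with the ``$\ddis$-free up to the first $\den$'' clause, which the definition of $(\dpre,\ddis,\den)$-violating already guarantees by placing that clause entirely on $\pi_{\mathit{suf}}$. Invariance of $P_S$ and of $\block$-progress under prepending a finite prefix and under taking suffixes — \autoref{prop:sgen-prepend} and \autoref{prop:sgen-suffix} — handle everything else.
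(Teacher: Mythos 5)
Your proof is correct and follows essentially the same route as the paper's: the paper likewise unfolds $\diam{\dpre}$ into a $\dpre$-matching prefix ending in a state satisfying $\mathit{finite_S}$, invokes \autoref{lem:sgen-finite}, and uses \autoref{prop:sgen-prepend} to glue the pieces. You are merely more explicit about the converse direction (via \autoref{prop:sgen-suffix}), which the paper leaves implicit in the phrase ``it follows directly from the definition of the diamond operator.''
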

\begin{proof}
    It follows directly from the definition of the diamond operator that $\mathit{violate_S}$ characterises those states that admit a path $\pi$ that has a prefix matching $\dpre$ that ends in some state $s'$ in $\llbracket \mathit{finite_S} \rrbracket_{\env}$. By \autoref{lem:sgen-finite}, $s'$ admits a $\block$-progressing path $\pi'$ that satisfies $P_S$ and is $(\emptyseq,\ddis,\den)$-violating. By \autoref{prop:sgen-prepend}, prepending $\pi$ to $\pi'$ results in a path that still satisfies $P_S$. It is also still $\block$-progressing, and by construction, it has a prefix matching $\dpre$, namely $\pi$, after which it is $\ddis$-free up until the first occurrence of an action in $\den$. Hence, it is $(\dpre, \ddis, \den)$-violating.
\end{proof}
\autoref{mucalc:sgen} is the negation of $\mathit{violate}_S$, hence it expresses that a state does not admit such a path.
\autoref{thm:sgen} follows directly.

To be able to apply \autoref{thm:sgen} to SFA and SHFA, we need to prove the following: firstly, that both assumptions are feasible, this has already been proven as \autoref{prop:sfa-feasible} and \autoref{prop:shfa-feasible}. Secondly, that there exist choices of $\formdissym$ such that a state is $\block$-locked if, and only if, $\formdissym$ is satisfied for all non-blocking actions and such that the resulting $P_S$ is equivalent to SFA or SHFA respectively.

\subsubsection{SFA}\label{app:sfa-proof}
We first prove \autoref{thm:sfa}.
\thmsfa*

Where \autoref{mucalc:sfa} is:
\begin{equation*}
        \neg\diam{\dpre\co\clos{\comp{\ddis}}} ( \diam{\den}\tp \lor \boxm{\nonblock}\fp \lor
        \bigvee_{\emptyset \neq F \subseteq \nonblock}\nu X. ( \bigwedge_{a \in F}\mu W. ( ( \bigwedge_{b \in \nonblock\setminus F}\boxm{b}\fp ) \land ( \diam{a \setminus\ddis}X \lor \diam{\comp{\ddis}}W ) ) ) )
\end{equation*}
\begin{proof}
    To get \autoref{mucalc:sfa} from \autoref{mucalc:sgen}, we take $\formdis{b} = \boxm{b}\fp$ for all $b \in \nonblock$. This is a mapping to closed modal $\mu$-calculus formulae, and hence a valid choice of $\formdissym$. By definition of a $\block$-locked state, every $\block$-locked state satisfies $\formdis{b}$ for all $b \in \nonblock$, and vice versa.
    We show that a path $\pi$ satisfies $\block$-strong fairness of actions (\autoref{def:sfa}) exactly when it satisfies $P_S$ interpreted on this $\formdissym$. Note that in both theorems, it is included that the path must be $\block$-progressing so we take that as given.
    A path $\pi$ satisfies $\block$-SFA if, and only if, for every suffix $\pi'$ of $\pi$, every action $a \in \nonblock$ that is relentlessly enabled in $\pi'$ occurs in $\pi'$.

    First note that if an action $a$ is relentlessly enabled on a suffix $\pi'$ of $\pi$, then $a$ is also relentlessly enabled on $\pi$. After all, every suffix of $\pi$ is either also a suffix of $\pi'$ or contains $\pi'$, and in either case contains a state in which $a$ is enabled, since every suffix of $\pi'$ contains such a state.
    Similarly, if an action $a$ is relentlessly enabled on $\pi$, it is also relentlessly enabled on an arbitrary suffix $\pi'$ of $\pi$, since every suffix of $\pi'$ is also a suffix of $\pi$.
    Therefore, $\block$-SFA can be rewritten as follows: a path $\pi$ satisfies $\block$-SFA if, and only if, every action $a \in \nonblock$ that is relentlessly enabled in $\pi$ occurs in every suffix $\pi'$ of $\pi$. 

    If a path is $\block$-progressing and finite, then no non-blocking action can be relentlessly enabled on that path, since no non-blocking actions are enabled in the final state.
    If a path is infinite, then if an action occurs in every suffix of the path then it must occur infinitely often.
    Therefore, $\block$-SFA can be rewritten as follows: a path $\pi$ satisfies $\block$-SFA if, and only if, every action $a \in \nonblock$ that is relentlessly enabled in $\pi$ occurs infinitely often on $\pi$.   

    We need to make one last step to connect $\block$-SFA to $P_S$: if an action is not relentlessly enabled on some path, then the path must contain a suffix on which that action is never enabled. Hence, every non-blocking action is either relentlessly enabled on some path, or the path has a suffix on which the action is perpetually disabled.
    In other words, a path $\pi$ satisfies $\block$-SFA if, and only if, every action $a \in \nonblock$ either occurs infinitely often on $\pi$ or $\pi$ has a suffix on which $a$ is perpetually disabled.
    This is almost exactly the definition of $P_S$ when $\formdis{b} = \boxm{b}\fp$ for all $b \in \nonblock$, with the only exception being that $\block$-SFA has an exclusive or, and $P_S$ has a non-exclusive or. However, if an action occurs infinitely often then it cannot become perpetually disabled, and vice-versa. So in this case, the choice of $\formdissym$ enforces an exclusive or even though this is not part of the definition of $P_S$.
    Hence, the two definitions coincide.

    We have shown that $P_S$ when taking $\formdis{b} = \boxm{b}\fp$ for all $b \in \nonblock$ is equivalent to $\block$-strong fairness of actions. 
    To be able to apply \autoref{thm:sgen} to \autoref{mucalc:sfa}, we still need to show that, for this choice of $\formdissym$, $P_S$ is feasible.
    We get this from feasibility of $\block$-strong fairness of actions, as proven in \autoref{prop:sfa-feasible}.
    We therefore conclude that \autoref{thm:sgen} applies to \autoref{mucalc:sfa} and hence we conclude that \autoref{thm:sfa} is true.
\end{proof}

\subsubsection{SHFA}\label{app:shfa-proof}
We also prove \autoref{thm:shfa}. 
\thmshfa*

\autoref{mucalc:shfa} is:
\begin{equation*}
        \neg\diam{\dpre\co\clos{\comp{\ddis}}} ( \diam{\den}\tp \lor \boxm{\nonblock}\fp \lor
        \bigvee_{\emptyset \neq F \subseteq \nonblock}\nu X. ( \bigwedge_{a \in F}\mu W. ( ( \bigwedge_{b \in \nonblock\setminus F}\boxm{\clos{\nonblock} \co b}\fp ) \land ( \diam{a \setminus\ddis}X \lor \diam{\comp{\ddis}}W ) ) ) )
\end{equation*}
\begin{proof}
    To get \autoref{mucalc:shfa} from \autoref{mucalc:sgen}, we take $\formdis{b} = \boxm{\clos{\nonblock}\co b}\fp$ for all $b \in \nonblock$. This is a mapping to closed modal $\mu$-calculus formulae. We also need to show that in every $\block$-locked state, $\formdis{b}$ is satisfied for all $b \in \nonblock$, and vice versa. In a $\block$-locked state, all enabled actions are blocking. Hence, in order to reach a state where an non-blocking action is enabled, a blocking action must first be taken. Therefore, in a $\block$-locked state all non-blocking actions $b$ are not $\block$-reachable, and so $\boxm{\clos{\nonblock}\co b}\fp$ is satisfied.
    The other way around, if $\boxm{\clos{\nonblock}\co b}\fp$ is satisfied for all non-blocking actions $b$, then no non-blocking actions can be enabled since otherwise they would be $\block$-reachable. Hence, all enabled actions must be blocking and it must be a $\block$-locked state.
    
    We show a path satisfies $\block$-strong hyperfairness of actions (\autoref{def:shfa}) exactly when it satisfies $P_S$ interpreted on this $\formdissym$. We can use that the path is $\block$-progressing, since both the theorem on $\block$-SHFA and the theorem of $P_S$ include this.
    A path satisfies $\block$-SHFA if, and only if, for every suffix $\pi'$ of $\pi$, every action $a \in \nonblock$ that is relentlessly $\block$-reachable in $\pi'$ occurs in $\pi'$.

    Using a similar argument as given in the proof of \autoref{thm:sfa}, we claim that if an action is relentlessly $\block$-reachable on a suffix of a path it is also relentlessly $\block$-reachable on that whole path: if every suffix of a path $\pi$ has a state where the action $a$ is $\block$-reachable, then so does every suffix $\pi''$ of $\pi'$, since $\pi''$ is also a suffix of $\pi$. The other way around, if every suffix $\pi''$ of $\pi'$, which is a suffix of $\pi$, contains a state where the action $a$ is $\block$-reachable, then since every suffix of $\pi$ either contains $\pi'$ or is a suffix of $\pi'$, every suffix of $\pi$ also contains such a state.
    It is also the case for $\block$-SHFA that if a path is $\block$-progressing and finite, no non-blocking action can be relentlessly $\block$-reachable on that path, since in a $\block$-locked state no non-blocking action is $\block$-reachable.
    Additionally, on infinite paths every action that occurs in every suffix of the path must occur infinitely often.
    
    Using these insights, we can rewrite $\block$-SHFA to the following: a path $\pi$ satisfies $\block$-SHFA if, and only if, every action $a \in \nonblock$ that is relentlessly $\block$-reachable in $\pi$ occurs infinitely often in $\pi$. 

    If an action is not relentlessly $\block$-reachable on a path, then the path must contain a suffix in which the action is perpetually non-$\block$-reachable, and vice-versa.
    Hence, $\block$-SHFA requires that for every non-blocking action $a$, the action either occurs infinitely often on a path or the path has a suffix in which the action is perpetually non-$\block$-reachable.
    Similar to the $\block$-SFA case, the only difference between this definition and $P_S$ interpreted on $\formdis{b} = \boxm{\clos{\nonblock}\co b}\fp$ for all $b \in \nonblock$, is that the $\block$-SHFA definition uses an exclusive or and the $P_S$ definition does not.
    However, for this choice of $\formdissym$, this does not matter.
    If a non-blocking action occurs infinitely often it must be infinitely often enabled, and hence relentlessly $\block$-reachable on the path, thus there is no suffix of the path where the action in perpetually unreachable.
    In the other direction, if a path has a suffix on which a non-blocking action is perpetually not $\block$-reachable, then the action is only finitely often $\block$-reachable on the path and therefore also only finitely often enabled. Then it can also only occur finitely often.

    The final step to proving \autoref{thm:shfa} is to establish that $\block$-SHFA is feasible. This was proven in \autoref{prop:shfa-feasible}.
    Hence, we can apply \autoref{thm:sgen} to \autoref{mucalc:shfa} to conclude \autoref{thm:shfa} is true.    
\end{proof}

}
{
\newpage
\appendix
\setcounter{theorem}{45}

\section{Proof Sketch}\label{app:illustration}
Full proofs are included in the appendices of the full version of this paper.
Here, we provide an outline of the proof of \autoref{thm:gen} by presenting all the supporting lemmas and propositions proven. We include brief sketches of how we prove these claims, but not the full proofs. 
This is done to illustrate the approach we have taken.
This appendix corresponds to Appendix D.3 in the full version.
We begin with restating \autoref{thm:gen}.

\thmgen*

The following propositions\footnote{We use the same numbering here as in the full version, hence the jump to 46.}
give properties of finitely realisable paths that can be derived from the 
 invariant, locking, exclusive and persistent properties. 
\begin{proposition}\label{prop:gen-finite}
    Every $\block$-progressing, finite path satisfies every finitely realisable path predicate, for all $\block\subseteq\actionset$.   
\end{proposition}
\begin{proposition}\label{prop:gen-to-progress}
    Every path that satisfies a finitely realisable path predicate $P$ is $\block$-progressing.
\end{proposition}
\begin{proposition}\label{prop:gen-prepend}
    If a path $\pi$ satisfies finitely realisable path predicate $P$, then every path of which $\pi$ is a suffix also satisfies $P$.
\end{proposition}
\begin{proposition}\label{prop:gen-suffix}
    If a path $\pi$ satisfies a finitely realisable path predicate $P$, then every suffix of $\pi$ also satisfies $P$.
\end{proposition}

\autoref{prop:gen-finite} follows from the invariant, locking, and persistent properties.
For \autoref{prop:gen-to-progress}, we use the invariant, locking, and exclusive properties,
 and for \autoref{prop:gen-prepend}, the invariant and persistent property are enough.
 Finally, \autoref{prop:gen-suffix} follows directly from the invariant property.

For the proof of the main theorem, we fix an LTS $M$, as well as $\block$, $\dpre$, $\ddis$ and $\den$. We also fix a feasible, finitely realisable path predicate $P$.
To characterise the semantics of the formula, we first split it into multiple smaller subformulae.
\begin{align*}
    &\mathit{violate_G} &&= \diam{\dpre}\mathit{invariant_G}\\
    &\mathit{invariant_G} &&= \nu X. (\bigwedge_{a \in \nonblock}(\formon{a} \imps \mathit{eliminate_G}(a)))\\
    &\mathit{eliminate_G}(a) &&= \diam{\clos{\comp{\ddis}}}(\diam{\den}\tp \lor (\formoff{a} \land X) \lor \diam{\actelim{a} \setminus \ddis}X )
\end{align*}
We have that \autoref{mucalc:gen} $= \neg\mathit{violate_G}$.

The proof proceeds by characterising the semantics of every subformulae. We define the length of a path to be the number of transitions. A path of length 0 is called the empty path.
\begin{lemma}\label{lem:sat}
    For all environments $\env$, states $s \in \states$, actions $a \in \nonblock$ and sets $\mathcal{F} \subseteq \states$, it holds that $s \in \llbracket \mathit{eliminate_G(a)\rrbracket_{\env [X := \mathcal{F}]}}$ if, and only if, $s$ admits a finite path $\pi$ with final state $s_{\mathit{final}}$ that satisfies the following conditions:
    \begin{enumerate}
        \item $\pi$ is $\ddis$-free, and
        \item one of the following three holds:
        \begin{enumerate}
            \item at least one action in $\den$ is enabled in $s_{\mathit{final}}$, or
            \item $s_{\mathit{final}} \in \mathcal{F}$ and $s_{\mathit{final}}$ satisfies $\formoff{a}$, or
            \item $s_{\mathit{final}} \in \mathcal{F}$ and the last transition in $\pi$, $t_{\mathit{final}}$, is labelled with an action in $\actelim{a}\setminus\ddis$.
        \end{enumerate}
    \end{enumerate}
\end{lemma}

In the full proof of \autoref{lem:sat}, we use another supporting proposition that characterises the semantics of a simple least fixpoint formula that generalises $\mathit{eliminate_G(a)}$, and then show in detail how the lemma follows.
For this overview, we only give an intuitive explanation: the $\diam{\clos{\comp{\ddis}}}$ part of $\mathit{eliminate_{G}}(a)$ gives us the finite, $\ddis$-free path $\pi$ that the lemma refers to. The conditions on the final state of this path follow from the rest of the formula: $\diam{\den}\tp$ considers the possibility that an action in $\den$ is enabled; $\formoff{a} \land X$ represents reaching a state in $\mathcal{F}$ where $\formoff{a}$ is satisfied (recall that \autoref{lem:sat} refers to the environment where $X$ is mapped to $\mathcal{F}$); and $\diam{\actelim{a} \setminus \ddis}X$ appends one extra transition to a state in $\mathcal{F}$, eliminating $a$. Since the total path has to be $\ddis$-free, the eliminating action may not be in $\ddis$.

The next step is $\mathit{invariant_G}$. 
This formula exactly describes those states that admit paths that are $\block$-progressing, $(\emptyseq, \ddis,\den)$-violating and satisfy $P$. Since $\emptyseq$ is the empty sequence, we are ignoring the $\dpre$-prefix for now. 
We define the set  $S_G$ to be exactly those states in $M$ that admit a path $\pi$ meeting the following conditions:
\begin{itemize}
    \item $\pi$ satisfies $P$, and
    \item $\pi$ is $\block$-progressing, and
    \item $\pi$ satisfies one of the following conditions:
    \begin{itemize}
        \item $\pi$ is $\ddis$-free, or
        \item $\pi$ contains an occurrence of an action in $\den$, and the prefix of $\pi$ before the first occurrence of an action in $\den$ is $\ddis$-free.
    \end{itemize}
\end{itemize}
Our goal is then to prove that $\llbracket \mathit{invariant_G} \rrbracket_{\env} = S_G$. This takes two steps: first we prove that $S_G$ is a fixed point of the transformer characterising $\mathit{invariant_G}$, and then that it is the \emph{greatest} fixed point. 
\begin{lemma}\label{lem:inv-fix}
    $S_G$ is a fixed point of the transformer $T_G$ defined by:
    \begin{multline*}
        T_G(\mathcal{F}) = \bigcap_{a \in \nonblock}\left\{s \in \states \mid s \in \llbracket\formon{a}\rrbracket_{\env [X := \mathcal{F}]} \imps 
        s \in \llbracket \mathit{eliminate_G}(a)\rrbracket_{\update{\env}{X}{\mathcal{F}}}\right\}
    \end{multline*}
\end{lemma}

Proving that $S_G$ is a fixed point means proving $T_G(S_G) = S_G$, which we do by mutual set inclusion.  
We briefly explain how we reach the conclusion that if a state $s$ is in $T_G(S_G)$, then it must also be in $S_G$. If there are no non-blocking actions $a$ such that $s \in \llbracket \formon{a}\rrbracket$, then the empty path witnesses that $s$ is in $S_G$; for this we use the locking property as well as \autoref{prop:gen-finite}. If there is an action $a \in \nonblock$ such that $s \in \llbracket \formon{a}\rrbracket$, then we know from $s \in T_G(S_G)$ that $s \in \llbracket \mathit{eliminate}_G\rrbracket_{\env[X := S_G]}$. Then \autoref{lem:sat} yields a finite path $\pi$ that is $\ddis$-free and on which a state in which $\den$ is enabled is reached, or $a$ is eliminated and a state in $S_G$ is reached. In the former case, we can use feasibility of $P$ and \autoref{prop:gen-to-progress} to find a path that satisfies $P$ and is $\block$-progressing, and that is also $\ddis$-free until the first occurrence of an action in $\den$. Hence, we find a path that witnesses $s \in S_G$. If $a$ is eliminated instead, then since $\pi$ reaches a state that is in $S_G$, we can create a path $\pi'' = \pi \co \pi'$, where $\pi'$ is a path from the final state of $\pi$ that is $\block$-progressing, satisfies $P$, and is $\ddis$-free up until the first occurrence of an action in $\den$. Using \autoref{prop:gen-prepend}, we can then show $\pi''$ witnesses $s \in S_G$.
The other part of the proof, that an arbitrary state in $S_G$ is also in $T_G(S_G)$, works very similarly, just in the other direction. We need \autoref{prop:gen-suffix} in that part of the proof in place of \autoref{prop:gen-prepend}.

To prove that $S_G$ is actually the greatest fixed point of $T_G$, we use the following supporting lemma:
\begin{lemma}\label{lem:inner}
    For all states $s$ in a fixed point $\mathcal{F}$ of $T_G$ as defined in \autoref{lem:inv-fix}, if there is no action in $\den$ that is reachable from $s$ without doing an action in $\ddis$ and there exists at least one action $a \in \nonblock$ such that $s$ satisfies $\formon{a}$, then there exists a finite path $\pi$ from $s$ to some state $s'$ meeting all of the following conditions:
    \begin{enumerate}
        \item $s' \in \mathcal{F}$, and
        \item $\pi$ has length at least one, and
        \item $\pi$ is $\ddis$-free, and
        \item for all actions $a \in \nonblock$ such that $s$ satisfies $\formon{a}$, there is a state on $\pi$ that satisfies $\formoff{a}$ or there is a transition on $\pi$ labelled with an action in $\actelim{a}$.
    \end{enumerate}
\end{lemma}

However, for an intuitive explanation of the proof that $S_G$ is the greatest fixed point of $T_G$ we do not need this supporting lemma. So we do not go into its proof here.

\begin{lemma}\label{lem:inv-greatest}
    $S_G$ is the greatest fixed point of the transformer $T_G$ as defined in \autoref{lem:inv-fix}.
\end{lemma}

In the proof of \autoref{lem:inv-greatest}, we take an arbitrary state $s$ in an arbitrary fixed point $\mathcal{F}$ of $T_G$, and then prove that $s \in S_G$. This is done by constructing a path $\pi$ from $s$ that satisfies $P$, is $\block$-progressing, and $(\emptyseq, \ddis, \den)$-violating. 
The proof considers three cases. The first case is when $s$ is $\block$-locked. In that case, the empty path is trivially $\block$-progressing and violating, and by \autoref{prop:gen-finite} also satisfies $P$.
The second case is that it is possible to reach a state in which an action in $\den$ is enabled without taking actions in $\ddis$. If this is the case, then that path can be extended using feasibility of $P$ to create a path that is violating, satisfies $P$, and, by \autoref{prop:gen-to-progress}, is $\block$-progressing.
The most complicated case is the one in which neither of the previous two is true. 
The idea is that we construct a path from $s$ by repeatedly adding $\ddis$-free path segments to an initial path $\pi = s$, in a potentially endless construction. In every iteration, we consider whether the final state of the path constructed thus far is $\block$-locked. If so, then, similar to the first case, we have found a witness for $s \in S_G$. If not, then there is some non-blocking action $a$ that is on. And therefore, by the definition of $T_G$, there is a finite $\ddis$-free path on which $a$ is eliminated. We can disregard the possibility that we instead reach a state in which $\den$ is enabled, since we addressed that case separately. The segment we append to $\pi$ is the finite $\ddis$-free path on which $a$ is eliminated. By the persistent property, non-blocking actions for which $\formonsym$ is satisfied but are not eliminated remain ``on'', and hence can be eliminated later in $\pi$. We can therefore simply keep track of all the actions that for which $\formonsym$ is satisfied in states we encounter, and eliminate them all in turn. This produces an infinite, and therefore $\block$-progressing, path that satisfies $P$, and that is entirely $\ddis$-free. So in this case too, we construct a path that witnesses $s \in S_G$. 

Since the semantics of $\mathit{invariant_G}$ are characterised as the greatest fixed point of $T_G$, we can conclude the following from the definition of $S_G$.
\begin{corollary}\label{cor:inv}
    The set of states characterised by $\mathit{invariant_G}$ is exactly the set of states that admit $\block$-progressing, $(\emptyseq,\ddis,\den)$-violating paths that satisfy $P$.
\end{corollary}

We then prepend the $\dpre$ part of the formula.
\begin{lemma}\label{lem:violate}
    For all environments $\env$ and states $s \in \states$, it holds that $s \in \llbracket \mathit{violate_G} \rrbracket_{\env}$ if, and only if, $s$ admits a path that is $\block$-progressing, satisfies $P$ and is $(\dpre, \ddis, \den)$-violating.
\end{lemma}
This step is rather trivial, since it follows directly from \autoref{cor:inv} and the basic definition of the modal $\mu$-calculus. 

The final step of the proof is then to negate $\mathit{violate}_G$. \autoref{thm:gen} follows directly.

}
\end{document}
\typeout{get arXiv to do 4 passes: Label(s) may have changed. Rerun}